\documentclass[journal]{IEEEtran}
\IEEEoverridecommandlockouts
\usepackage{cite}
\usepackage[utf8]{inputenc}
\usepackage{amsmath,amssymb,amsfonts}
\usepackage{algorithm}
\usepackage{algpseudocode}
\usepackage{graphicx}
\usepackage{caption}
\usepackage{subcaption}
\usepackage[compact]{titlesec}  
\titlespacing{\section}{2pt}{2pt}{2pt}
\usepackage{textcomp}
\usepackage{xcolor}
\usepackage{comment}
\usepackage{hyperref}
\def\BibTeX{{\rm B\kern-.05em{\sc i\kern-.025em b}\kern-.08em
    T\kern-.1667em\lower.7ex\hbox{E}\kern-.125emX}}
\usepackage{bm}
\usepackage{enumitem}
\usepackage[normalem]{ulem}
\usepackage{cases}
\usepackage{color}
\usepackage{geometry}
\usepackage{array}
\usepackage{amsthm}

\newtheorem{lemma}{Lemma}

\newtheorem{remark}{Remark}

\newtheorem{assumption}{Assumption}
\usepackage{ upgreek }
\usepackage{soul}
\usepackage{nomencl} 
\makenomenclature   

\newcolumntype{C}[1]{>{\centering\arraybackslash}p{#1}}

\usepackage[utf8]{inputenc}
\usepackage{algorithm}
\usepackage{algpseudocode}

\begin{document}

\title{Communication-Efficient Quantum Federated Learning over Large-Scale Wireless Networks  

\thanks{}
}

\author{\IEEEauthorblockN{Shaba Shaon, Christopher G. Brinton, \textit{Senior member, IEEE}, and Dinh C. Nguyen }

\thanks{Shaba Shaon and Dinh C. Nguyen are with ECE Department,  The University of Alabama in Huntsville, Huntsville, AL 35899, USA, emails: ss0670@uah.edu, dinh.nguyen@uah.edu.}
\thanks{Christopher G. Brinton is with the Elmore Family School of Electrical and Computer Engineering, Purdue University, West Lafayette, IN, 47907, USA, e-mail: cgb@purdue.edu.}
}




\maketitle

\begin{abstract}
Quantum federated learning (QFL) combines the robust data processing of quantum computing with the privacy-preserving features of federated learning (FL). However, in large-scale wireless networks, optimizing sum-rate is crucial for unlocking the true potential of QFL, facilitating effective model sharing and aggregation as devices compete for limited bandwidth amid dynamic channel conditions and fluctuating power resources. This paper studies a novel sum-rate maximization problem within a muti-channel QFL framework, specifically designed for non-orthogonal multiple access (NOMA)-based large-scale wireless networks. We develop a sum-rate maximization problem by jointly considering quantum device's channel selection and transmit power. Our formulated problem is a non-convex, mixed-integer nonlinear programming (MINLP) challenge that remains non-deterministic polynomial time (NP)-hard even with specified channel selection parameters. The complexity of the problem motivates us to create an effective iterative optimization approach that utilizes the sophisticated quantum approximate optimization algorithm (QAOA) to derive high-quality approximate solutions. Additionally, our study presents the first theoretical exploration of QFL convergence properties under full device participation, rigorously analyzing real-world scenarios with nonconvex loss functions, diverse data distributions, and the effects of quantum shot noise. Extensive simulation results indicate that our multi-channel NOMA-based QFL framework enhances model training and convergence behavior, surpassing conventional algorithms in terms of accuracy and loss. Moreover, our quantum-centric joint optimization approach achieves more than a $100\%$ increase in sum-rate while ensuring rapid convergence, significantly outperforming the state-of-the-arts. 
\end{abstract}

\begin{IEEEkeywords}
Quantum federated learning, large-scale wireless networks, sum-rate maximization, latency.
\end{IEEEkeywords}
\section{Introduction}

Quantum federated learning (QFL) is an innovative approach that integrates quantum computing with federated learning (FL) to enable decentralized model training across quantum-powered edge devices such as quantum processors or intelligent quantum sensors \cite{chehimi2023foundations, qu2023dtqfl}. In this framework, quantum-powered edge devices such as quantum processors collaborate with a quantum cloud server at a central base station (BS)  to share the updates of quantum machine learning (QML) models, e.g., variational quantum algorithms (VQAs) \cite{qu2023dtqfl}, and collectively improve the global model while maintaining privacy and data security. The use of quantum computing in quantum edge devices offers significant advantages in computational efficiency and power, leveraging quantum circuits to accelerate learning and enhance model performance \cite{tang2021cutqc}. In particular, in this noisy intermediate-scale quantum (NISQ) era, QML is typically trained using a hybrid model that combines quantum circuits, such as parameterized quantum circuits (PQC), with classical optimization techniques such as  gradient descent. This approach leverages quantum processing for feature encoding and classical optimization to adjust the circuit parameters. We note that the NISQ-inherent QML models, i.e., VQAs' parameters, remain in \textit{classical form} instead of quantum states after model training \cite{chehimi2023foundations}, making it compatible for sharing and transmitting via \textit{wireless networks}  for further aggregation.

In future wireless networks like 6G, communication will become a significant bottleneck in QFL with large-scale quantum edge devices due to massive data exchange and limited power resources \cite{quy2024federated}. Therefore, sum-rate maximization, such as optimizing model transmission rates, will be crucial to ensure efficient and scalable learning across quantum-enabled edge networks. \textcolor{black}{In large-scale QFL systems, wireless communication becomes a critical bottleneck because PQC parameters generated by quantum edge devices must be frequently transmitted to the server \cite{chehimi2023foundations}. These parameters scale with circuit depth and qubit count \cite{tang2021cutqc}, creating substantial uplink demand that is constrained by limited bandwidth \cite{chehimi2023foundations}, channel fading \cite{wu2018spectral}, and the low transmit power of quantum edge hardware \cite{fellous2023optimizing}. Specifically, in multi-channel NOMA-based systems, the limited number of available channels relative to the large number of participating devices often results in intensified competition for wireless resources, thereby aggravating the communication bottlenecks \cite{liu2019performance}. Furthermore, the noise sensitivity of quantum circuits introduces fluctuations in parameter quality \cite{endo2021hybrid}, increasing the reliance on timely and reliable wireless links for stable global updates.} \textcolor{black}{Although the transmitted information in QFL is classical \cite{chehimi2023foundations}, the underlying quantum hardware behavior leads to communication dynamics that differ from classical FL \cite{quy2024federated}. Measurement variability and quantum noise cause PQC parameters to change more abruptly \cite{jose2022error}, requiring more precise and consistent update transmissions to maintain convergence. Combined with the strict power and hardware limitations of emerging quantum processors \cite{fellous2023optimizing}, these effects introduce unique communication pressures that highlight the importance of wireless resource optimization in QFL systems.} Furthermore, as the number of quantum edge devices grows, it becomes increasingly complex to train and optimize models in a large-scale system. Traditional optimization techniques, such as convex optimization, often struggle to handle the increasing complexity of network orchestration and resource allocation in such large-scale environments \cite{rana2025allocation}. These limitations pose challenges in sum-rate maximization over wireless links, ultimately degrading communication efficiency in FL. Fortunately, the quantum approximate optimization algorithm (QAOA) has recently emerged as a promising solution to accelerate optimization in wireless QFL \cite{gulbahar2025majority, xiang2025choco}. QAOA, based on quantum-inspired techniques, is particularly well-suited for optimization problems that arise in complex systems like large-scale  wireless   networks. QAOA can be used to efficiently search for optimal resource allocation policies in QFL systems, which is critical in managing communication bandwidth, energy usage, and computation resources in large-scale QFL networks for future intelligent  wireless  networks.

\subsection{Related Works}
\textit{1) QFL:} Recently, several studies have explored a range of QFL schemes \cite{park2025dynamic, park2025entanglement, gurung2025performance, qu2025daqfl, wang2024quantum, qu2024qfsm, abou2024privacy, ren2024enhancing, rofougaran2024federated}, revealing diverse approaches and methodologies within the field. In \cite{park2025dynamic}, the authors introduced a dynamic QFL framework that utilizes quantum neural networks to enhance the efficiency and robustness of collaborative UAV surveillance under challenging communication conditions. The work in \cite{park2025entanglement} developed an entangled slimmable QFL system with a depth-controllable quantum neural network architecture, optimizing it for diverse internet of things (IoT) environments by utilizing superposition-coded parameter communication and interference mitigation techniques. The studies in \cite{gurung2025performance, qu2025daqfl, rofougaran2024federated} worked on enhancing data privacy, model accuracy, and robustness in heterogeneous environments through innovative QFL approaches, addressing challenges such as non-IID data and local model drift. The work in \cite{wang2024quantum} developed a QFL model by implementing a decentralized ring topology and introducing the use of quantum weights for entirely quantum-based training, addressing security and data privacy challenges in deep learning. The authors \cite{qu2024qfsm} presented a QFL algorithm that integrates QML with 5G internet of vehicles, using a quantum minimal gated unit for training, to enhance computational efficiency, privacy protection, and speech emotion recognition accuracy. The works in \cite{abou2024privacy, ren2024enhancing} utilized QFL approaches to enhance network security and efficiency, with the former improving intrusion detection in consumer networks and the latter enhancing stability in smart grids.

\textit{2) Optimization and Sum-Rate Maximization  for Communication-Efficient FL/QFL: }  
Several studies have explored optimization strategies for communication-efficient  FL, including successive convex approximation (SCA) and Lyapunov optimization. Gradient-based techniques, such as first-order and second-order methods, leverage function derivatives to solve both convex and non-convex problems. For example, the works in \cite{yang2020energy}, \cite{tran2019federated} exploited SCA to optimize energy-efficient resource allocation for FL in wireless networks. While enhancements like accelerated first-order, proximal, and stochastic gradient methods have improved efficiency, these approaches remain insufficient in large-scale FL systems with high number of optimization variables  due to large number of devices. Recent study in \cite{hazarika2024quantum} considered a quantum-based sequential training approach to optimize  communication costs in QFL. \textit{However, an optimization using QAOA has not been studied and the benefits of QAOA in a large-scale QFL network have not investigated.} 

In FL/QFL optimization, sum-rate maximization is crucial as it determines the transmission rate for model sharing, directly impacting communication efficiency and aggregation speed. For instance, the authors in \cite{hu2024ofdma} studied the sum-rate maximization problem in classical FL using a Lagrange-dual based method. Communication efficiency for FL was also considered in \cite{salgia2024communication, sun2024communication} by using classical techniques such as heuristic optimizations. \textit{However, the system model is small-scale, and quantum-inspired has not been investigated for sum-rate maximization.} The challenge in FL optimizations intensifies in large-scale wireless networks, where devices compete for limited bandwidth while ensuring timely and reliable updates.  Dynamic channel conditions over communication links and fluctuating power resources at FL devices further complicate the optimization process. Effective sum-rate maximization is essential for optimizing communication efficiency, e.g., model sharing latency, and QFL convergence rates.

\textit{3) QAOA for Wireless Optimization: }
 QAOA has been utilized mostly to formulate and solve mixed-integer binary programming problems \cite{kasi2024quantum}. The authors in \cite{manasa2024optimizing} incorporated QAOA to optimize energy distribution in smart grids with a view to improving load balancing, energy efficiency, and computational speed. While the work in \cite{van2024optimal} presented QAOA-based service placement optimization framework in 6G mobile edge computing to minimize service costs and delay, the researchers in \cite{colella2024quantum} utilized this quantum-based approach to optimize reconfigurable intelligent surfaces in multipath wireless environments. \textit{Despite these research efforts, the application of QAOA has not been investigated in the context of QFL.} 

\subsection{Key Contributions}
Motivated by these limitations in the literature, this paper for the first time presents an optimized QFL framework over large-scale wireless networks with theoretical guarantees. Our contributions are highlighted as follows:
\begin{itemize}
    \item We propose a novel multi-channel QFL framework over NOMA-based large-scale wireless networks \ref{section:systemmodel}. We formulate the sum-rate maximization problem for this network, addressing the critical requirements for resource efficiency and scalability in such networks. In doing so, we jointly optimize device's channel selection and transmit power \ref{section:problemformulation}.
    \item The development of the sum-rate maximization formulation presents a complex computational challenge that requires sophisticated problem-solving approaches. Upon analysis, we determined that the problem is non-convex and qualifies as a mixed-integer nonlinear programming (MINLP) challenge. Furthermore, it remains NP-hard even when channel selection parameters are specified. Therefore, we develop and implement a quantum-centric optimization approach utilizing QAOA and block \textcolor{black}{coordinate} descent (BCD) technique to effectively tackle this problem. We divide the original problem into two sub-problems and solve them iteratively using BCD to find optimal solution for the original problem. We transform each of the classical sum-rate maximization sub-problems into a quadratic unconstrained binary optimization (QUBO) format, which we further reformulate into a Hamiltonian expression to facilitate compatibility with quantum processors \ref{section:proposedsolution}.
    \item We conduct a rigorous convergence analysis of QFL under full device participation, considering practical scenarios such as nonconvex loss functions and varied data distributions (both IID and non-IID) among quantum devices \ref{section:convergenceanalysis}. Additionally, we examine the impact of quantum shot noise on the convergence of QFL. To the best of our knowledge, this is the first theoretical exploration of the convergence properties of QFL.
    \item We conduct extensive numerical experiments to evaluate model training, convergence behavior, and sum-rate performance of the proposed multi-channel NOMA-based QFL framework over large-scale wireless networks. Simulation results reveal that our proposed QFL framework outperforms state-of-the-art algorithms in terms of model loss and accuracy convergence. Moreover, our quantum-centric optimization approach achieves more than a $100\%$ increase in sum-rate compared to state-of-the-art schemes \ref{simulationresultsandevaluations}.
\end{itemize}

\section{System Model} \label{section:systemmodel}
This section presents the QFL system model for our considered NOMA-based multi-channel large-scale wireless network. We start by defining several fundamental terms that are frequently used throughout our analysis.

\textit{\textbf{Definition 1.}} \textit{NISQ processor:} A NISQ processor is a quantum computing device with a limited number of qubits that is subject to noise and errors. 

\textit{\textbf{Definition 2.}} \textit{PQC:} A PQC is a quantum circuit with gate operations that are defined by tunable parameters. They are generally optimized through classical feedback loops and incorporated to solve problems in quantum computing, for instance variational quantum algorithms.

\textit{\textbf{Definition 3.}} \textit{VQA:} A VQA is a hybrid approach that leverages both quantum and classical computation. This algorithm employs a PQC to generate quantum states and uses a classical optimizer to iteratively update the circuit's parameters to minimize a cost function. Fig.~\ref{Fig:Overview} illustrates the integration of VQA with quantum computation and SGD for local model training at the quantum device.

\textit{\textbf{Definition 5.}} \textit{Shot noise:} Shot noise refers to statistical fluctuations that occur due to the limited number of measurement shots used to estimate the outcomes of a quantum computation.

\textit{\textbf{Definition 7.}} \textit{Decoherence:} Decoherence is the phenomenon through which a quantum system interacts with its surrounding environment, resulting in the degradation of quantum properties like superposition and entanglement, and consequently, the loss of quantum information.

\textit{\textbf{Definition 8.}} \textit{Gate infidelity:} Gate infidelity quantifies the deviation of a quantum gate's real-world operation from its ideal, error-free behavior, serving as a metric for the imperfections or errors in the gate's implementation.

\subsection{QFL Learning Model}
In this work, we consider a QFL system in which a centralized server orchestrates a network of $N$ distributed quantum-enabled devices. Assume that each quantum device employs a VQA algorithm such as a quantum neural network (QNN) to build a QML model by executing a PQC with the assistance of a classical loss function optimization method, e.g., stochastic gradient decent (SGD). In each global iteration, devices conduct local training utilizing their private datasets through quantum SGD across $T_{n}$ local iterations. Upon completion of local training, devices send their classical model parameters back to the central server for aggregation. The server then employs federated averaging to update the global model. \textcolor{black}{We define the complete set of devices as $\mathcal{N} = \{1, 2, \dots, N\}$, whereas $\mathcal{K} = \{1, 2, \dots, K\}$ denotes the sets of global iterations. For each device $n \in \mathcal{N}$, the set of local iterations is represented by $\mathcal{T}_{n} = \{1, 2, \dots, T_{n}\}$. Accordingly, we define the set of all local iteration indices across devices as $\mathbb{T} = \{{\mathcal{T}_{1}, \mathcal{T}_{2}, \dots, \mathcal{T}_{N}}\}$.} We assume that the PQC applies a unitary transformation $U(\boldsymbol{w_{n}})$ on $q$ qubits. The PQC is parameterized by a vector $\boldsymbol{w_{n}} = (\boldsymbol{\theta}_{n}^{1}, \boldsymbol{\theta}_{n}^{2}, ..., \boldsymbol{\theta}_{n}^{P})$, where $P$ denotes the number of parameters associated with the PQC, and each parameter $\boldsymbol{\theta}_{n}^{p} \in \mathbb{R}$ for $p = 1, 2, ..., P$. A conventional architecture for a PQC, frequently called an ansatz, specifies the unitary transformations as:
\begin{align}
    U(\boldsymbol{\theta}_{n}) = \prod_{p=1}^{P} U_{p} (\boldsymbol{\theta}_{n}^{p}) V_{p}, \label{eqn1}
\end{align}
where $\boldsymbol{\theta}_{n}^{p}$ represents the operation of the corresponding unitary $U_{p}(\boldsymbol{\theta}_{n}^{p})$. The parameterized gate is characterized as 
\begin{align}
    U_{p} (\boldsymbol{\theta}_{n}^{p}) = e^{(-i\frac{\boldsymbol{\theta}_{n}^{p}}{2}G_{p})}, \label{eqn2}
\end{align}
where $G_p \in \{I, X, Y, Z\}^{\otimes q}$ is the Pauli string generator. We mention that the unitary $V_{p}$ in \eqref{eqn1} is fixed and independent of the parameters $\boldsymbol{\theta}$. The PQC is applied to $q$ qubits, all of which start in the ground state $\lvert 0 \rangle$. This results in the following parameterized quantum state corresponding to device $n$
\begin{align}
    {\lvert \Psi_{n} \rangle} = U(\boldsymbol{\theta}_{n}) \lvert 0 \rangle.
\end{align}
The pure-state density matrix associated with this quantum state is written as
\begin{align}
    \Psi_{n}(\boldsymbol{\theta}_{n}) = \lvert \Psi_{n}(\boldsymbol{\theta}_{n}) \rangle \langle \Psi_{n}(\boldsymbol{\theta}_{n}) \rvert.
\end{align}
The goal of a QNN is to employ the PQC to create a quantum state that optimally executes a specified ML task. More specifically, the QNN seeks to determine the parameter vector $\boldsymbol{\theta}_{n} \in \mathbb{R}^{P}$ that solves the following optimization problem
\begin{align}
    \min_{\boldsymbol{\theta}_{n} \in \mathbb{R}^{P}} \left\{ f_{n}(\boldsymbol{\theta}_{n}) := \mathcal{C}(\langle Z \rangle_{\lvert \Psi_{n}(\boldsymbol{\theta}_{n}) \rangle} , y)\right\}, \label{eqn5}
\end{align}
\textcolor{black}{where $f_{n}(\boldsymbol{\theta}_{n})$  denotes a generic task-dependent loss function (e.g., cross-entropy or mean-squared error) for device $n$. The loss $\mathcal{C}(\cdot,y)$ compares the PQC's predicted output, given by the expectation value $\langle Z \rangle_{\lvert \Psi_{n}(\boldsymbol{\theta}_{n}) \rangle}$, with the ground-truth label $y$ of the training sample.} \textcolor{black}{The PQC produces the output quantum state $\lvert \Psi_{n}(\boldsymbol{\theta}_{n}) \rangle$, and the model's prediction is obtained by measuring the expectation value of the observable $Z$ acting on this state, given by
\begin{align}
    \langle Z \rangle_{\lvert \Psi_{n}(\boldsymbol{\theta}_{n}) \rangle} = \langle 0 \rvert U^\dagger(\boldsymbol{\theta}_{n}) Z U(\boldsymbol{\theta}_{n}) \lvert 0 \rangle = \operatorname{Tr}(Z \Psi_{n}(\boldsymbol{\theta}_{n})),
\end{align}}
where $\langle 0 \rvert$ and $U^\dagger(\boldsymbol{\theta}_{n})$ are the Hermitian conjugates of $\lvert 0 \rangle$ and $U(\boldsymbol{\theta}_{n})$, respectively. Moreover, $\operatorname{Tr}(Z \Psi_{n}(\boldsymbol{\theta}_{n}))$ is the trace of the product of the observable $Z$ and the density matrix $\Psi_{n}(\boldsymbol{\theta}_{n})$. The minimum value of the loss function $f_{n}(\boldsymbol{\theta}_{n})$ is expressed as
\begin{align}
    {f_{n}}^* = \min_{\boldsymbol{\theta}_{n} \in \mathbb{R}^{P}} f_{n}(\boldsymbol{\theta}_{n}).
\end{align}
The Hermitian matrix $Z$, which represents the observable, is written in terms of its eigendecompositions as
\begin{align}
    Z = \sum_{y=1}^{N_z} h_y \Pi_y,
\end{align}
where $\{ h_y \}_{y=1}^{N_z}$ are the $N_{z} \leq 2^{n}$ distinct eigenvalues of $Z$, and $\{ \Pi_y \}_{y=1}^{N_{z}}$ are the projection operators onto the corresponding eigenspaces. \textcolor{black}{Using the eigendecomposition, the expectation value of the observable for device $n$ is written as
\begin{align}
    \langle Z \rangle_{\lvert \Psi_{n}(\boldsymbol{\theta}_{n}) \rangle} = \sum_{y=1}^{N_z} h_y \operatorname{Tr}(\Pi_y \Psi_{n}(\boldsymbol{\theta}_{n})).
\end{align}
Substituting this expression into the general loss definition in \eqref{eqn5}, the loss function becomes
\begin{align}
    f_{n}(\boldsymbol{\theta}_{n}) = \mathcal{C}(\sum_{y=1}^{N_z} h_y \operatorname{Tr}(\Pi_y \Psi_{n}(\boldsymbol{\theta}_{n})),y).
\end{align}}
In case of QNN, the optimization of the parameter vector $\boldsymbol{\theta}_{n}$ for the PQC is usually performed iteratively using stochastic gradient descent (SGD). Thus, the optimization problem in \eqref{eqn5} can be solved by implementing a hybrid quantum-classical optimization method. Specifically, $H$ measurements of the observable $Z$ are taken for the output state $\lvert \Psi_{n}(\boldsymbol{\theta}_{n}) \rangle$ of the PQC, resulting in a set of samples $\{ Z_{h} \}_{h=1}^{H}$, where $Z_{h} \in \{Z_1, Z_2, ..., Z_{H}\}$. \textcolor{black}{From these samples, an estimate of the expectation value is computed as
\begin{align}
    \hat{\langle Z \rangle}_{\lvert \Psi_{n}(\boldsymbol{\theta}_{n}) \rangle} = \frac{1}{H} \sum_{h=1}^{H} Z_h. \label{eqn10new}
\end{align}
We mention that the $\hat{(\cdot)}$ notation used in this paper denotes empirical estimates. This estimated expectation in \eqref{eqn10new} is then passed into the classical loss function $\mathcal{C}(\cdot,y)$ to evaluate the loss and compute the stochastic gradients used in the SGD update.} In this work, we concentrate on the standard implementation of QNN using SGD optimization technique. We denote the initial parameter vector for device $n$ as $\boldsymbol{\theta}_{n}^{0} \in \mathbb{R}^M$, and the update rule for the parameter vector $\boldsymbol{\theta}_{n}$ is expressed as 
\begin{align}
    \boldsymbol{\theta}_{n}^{t+1} = \boldsymbol{\theta}_{n}^{t} - \eta_{t} \hat{g}_{n}^{t},
\end{align}
\textcolor{black}{where $t = 1, 2, ..., T_{n}$, with $T_{n}$ being the total number of local SGD iterations performed by device $n$ during a global round}, $\eta_{t} > 0$ denotes the learning rate at iteration $t$, and $\hat{g}_{n}^{t}$ represents the stochastic gradient estimate of the loss function $f_{n}(\boldsymbol{\theta}_{n})$ at iteration $t$. In this context, $\hat{g}_{n}^{t}$ is expressed as
\begin{align}
    \hat{g}_{n}^{t} = \nabla f_{n}(\boldsymbol{\theta}_{n}) \bigg|_{\boldsymbol{\theta}_{n} = \boldsymbol{\theta}_{n}^t} = \begin{bmatrix} \frac{\partial f_{n}(\boldsymbol{\theta}_{n})}{\partial \boldsymbol{\theta}_{n}^{1}} \\ \vdots \\ \frac{\partial f_{n}(\boldsymbol{\theta}_{n})}{\partial \boldsymbol{\theta}_{n}^{P}} \end{bmatrix}_{\boldsymbol{\theta}_{n} = \boldsymbol{\theta}_{n}^t}.
\end{align}

\begin{figure}[h]
    \centering
    \includegraphics[width=0.8\linewidth]{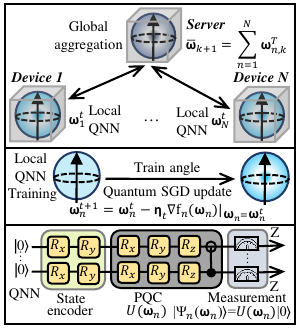}
    \caption{\footnotesize Proposed QFL framework where distributed quantum devices collaborate with a central server to train a shared ML model. Each device encodes practical data into quantum form using a state encoder, processes it through a PQC with trainable angle parameters, and then employs measurement outcomes to update local model parameters using the standard SGD method, subsequently sending the updated local models to the server for aggregation.}
    \label{Fig:Overview}
\end{figure}

\subsection{Network Architecture}
In this work, we explore a multi-channel QFL framework where quantum devices communicate with BS across multiple dedicated channels. Specifically, we consider $N$ quantum devices share $C$ channels, where the number of devices typically exceeds the number of channels, i.e. $N > C$. Considering more devices than channels optimizes resource utilization by dynamically allocating limited channels to a larger pool of devices, enhancing scalability and efficiency. Moreover, this configuration mirrors real-world constraints where available communication resources are often outnumbered by participating devices. Let us denote the set of $N$ devices and the set of $C$ channels as $\mathcal{N} = \{1, 2, \dots, N\}$ and $\mathcal{C} = \{1, 2, \dots, C\}$, respectively. \textcolor{black}{Throughout this work, we assume full device participation in every global round, enabled by a multi‑channel NOMA uplink architecture that allows all $N$ quantum devices to transmit concurrently via channel reuse, with the resulting co-channel interference captured explicitly through the signal-to-interference-plus-noise ratio (SINR) expressions.} In the subsequent sub-sections, we detail the channel and signal modeling approaches adopted in our study. 

\subsection{Channel Model}
\textcolor{black}{All communication links in our QFL framework are purely classical wireless channels, modeled using path loss, shadowing, Rayleigh fading, and co-channel interference. The role of quantum techniques is restricted to (i) local model training at the quantum edge devices using PQC-based QNN models, and (ii) the QAOA-based optimization procedure executed at the BS. This design is fully appropriate for our setting because the quantum devices only need to exchange classical PQC parameters with the BS \cite{quy2024federated}, meaning no quantum states, entanglement distribution, or quantum communication links are required.} In our system model, the communication link between a quantum device and the BS is susceptible to co-channel interference. This interference arises when another device simultaneously uses the same communication channel to transmit model parameters to the BS. Such overlap can degrade the quality of the connection, impacting the transmission efficiency and reliability of the legitimate link. In our framework, device $n$ maintains a dual role: being part of a legitimate link to the BS and becoming a source of interference, adversely affecting the signals transmitted by any other device $m$ sharing the same channel $c$. It is assumed that all the channels in this system exhibit both large-scale and small-scale impairment components. Path loss and shadowing are the large-scale components, which remain constant over long period of time, while Rayleigh fading represents the small-scale component, characterized by time-block fading. In this block fading model, the channel's characteristics are assumed to remain constant over a certain block of time. To describe the time-correlation of Rayleigh fading, we employ Jake's model \cite{kim2010does}. We represent the channel coefficients of the channel instances during the $b^{\text{th}}$ and $(b+1)^{\text{th}}$ time-block as $g^{(b)}$ and $g^{(b+1)}$, then we have
\begin{align}
    g^{(b+1)} = \epsilon g^{(b)} + \sqrt{1 - \epsilon^{2}} \delta, \label{eqn14'}
\end{align}
where $\epsilon (0 \leq \epsilon < 1)$ represents the correlation factor accounting for the correlations of the channel realizations in two successive blocks, and both $\delta$ and $g^{(0)}$ are complex Gaussian random variables following the distribution $\mathcal{C}\mathcal{N}(0,1)$. Besides, let us denote the path loss and the shadowing components of the interference link generated from device $n$ to the BS as $\chi_{n,m,c}$ and $\beta_{n,m,c}$ on channel $c$, respectively. Moreover, the corresponding Rayleigh fading component caused by device $n$ to the BS on channel $c$ at block $b$ is represented by $g_{n,m,c}^{(b)}$. We note that these components degrade the quality of the signal transmitted by device $m$ directing to the BS, where $m \in \mathcal{N}, m \neq n$, and $m$ uses the same channel $c$ as $n$ to transmit model parameters to the BS. Therefore, the channel gain corresponding to the interference link cause by device $n$ between device $n$ at block $b$ is written as
\begin{align}
    h_{n,m,c}^{(b)} = \chi_{n,m,c} \beta_{n,m,c} |g_{n,m,c}^{(b)}|. \label{eqn15'}
\end{align}

\subsection{Frame Structure and Signal Model}
In our framework, given the asynchronous nature of local model training by quantum devices, i.e., any quantum device may complete its local model training at any time during a global round followed by transmission of local model parameters to the server, each global round $k$ is assumed to be segmented into $b$ time blocks to facilitate channel modeling. In the frame structure employed by any link starting from device $n$ and ending at the BS in the network under consideration, each frame consists of a preprocessing phase and a transmission phase. During the preprocessing phase, each device gathers necessary information to select a channel and a transmit power for subsequent transmission. Following this decision, each device sends its updated local model parameters to the BS during the transmission phase.  However, it is crucial to mention that devices can still use the same channel to transmit model parameters at block $b$, potentially causing co-channel interference. This occurs because orthogonal pilot signals primarily ensure accurate channel state information approximation, but do not eliminate the possibility of signal overlap during simultaneous transmissions by two or more devices using with the same channel.

We use a binary variable $\zeta_{n,c}^{(b)}$ to indicate whether device $n$ transmits on channel $c$ during block $b$; $\zeta_{n,c}^{(b)} = 1$ if it does, and $\zeta_{n,c}^{(b)} = 0$ otherwise. Each device is allowed to select at most one channel during a block so that the sum of their channel selections does not exceed one. Hence, we have
\begin{align}
    \sum_{c=1}^{C} \zeta_{n,c}^{(b)} \leq 1, \forall n \in \mathcal{N}. \label{eqn3}
\end{align}
We mention that the constraint in \eqref{eqn3} is widely adopted in state-of-the-art literature \cite{wang2016optimal}. Let us denote the transmit power of device $n$ during block $b$ as $p_{n}^{(b)}$, then the corresponding received signal-to-interference-plus-noise ratio (SINR) on channel $c$ is formulated as
\begin{align}
    \gamma_{n,c}^{(b)} = \frac{\zeta_{n,c}^{(b)} h_{n,c}^{(b)} p_{n}^{(b)}}{\sigma^{2} + \sum_{m \in \mathcal{N},m \neq n} \zeta_{m,c}^{(b)} h_{m,n,c}^{b} p_{m}^{(b)}}, \label{eqn17'}
\end{align}
where $h_{n,c}^{(b)}$ represents the channel gain of the legitimate link from device $n$ to the BS on channel $c$ during block $b$, and $\sigma^{2}$ is the additive white Gaussian noise (AWGN) power. Thus, if we denote bandwidth as $B$, the achievable transmission rate by device $n$ on channel $c$ during block $b$ is calculated as
\begin{align}
    r_{n,c}^{(b)} = B \log_{2}(1 + \gamma_{n,c}^{(b)}). \label{eqn18'}
\end{align}
\textcolor{black}{Throughout this work, the term ``wireless'' refers specifically to a classical 6G-oriented uplink model in which quantum devices communicate with the BS over multiple dedicated channels. The system follows a multi-channel NOMA structure, meaning that several devices may transmit over the same channel simultaneously \cite{liu2019performance}, resulting in co-channel interference that determines the achievable SINR and data rate. As mentioned before, all exchanged information consists of classical PQC model parameters \cite{quy2024federated}, while the quantum aspects of the framework arise solely from local PQC-based QNN training and the QAOA-based optimization at the BS.}

\section{Analysis of Convergence Behavior of QFL} \label{section:convergenceanalysis}
In this section, we present a comprehensive convergence analysis of our proposed QFL framework considering full-device participation scenario under real-world assumptions such as non-convex loss functions, diverse data distributions, and the effects of quantum shot noise. \textit{Our findings reveal that the convergence rate is influenced by the total number of iterations, the number of total devices, the subset of devices participating in each global iteration, and the number of measurements performed on the parameterized quantum circuit.} 

In this subsection, we detail our QFL algorithms accommodating full device participation, detailing the core mechanisms for efficient local training and global aggregation.
Algorithm~\ref{algorithm1} showcases steps implemented in QFL with full device participation mode where the server initiates the process by broadcasting an initial global model to all participating devices. Each quantum-enabled device $n \in \mathcal{N}$ receives the global model and begins local training using quantum SGD on their private datasets, performing $T_{n}$ iterations per global round. Upon completing the local iterations, devices update their models based on the learned parameters. Following the updates, devices transmit their classical model parameters back to the server. The server aggregates these received parameters using federated averaging, thereby updating and refining the global model for the next round of distribution. This cycle repeats across $K$ global rounds when the model converges. Algorithm~\ref{algorithm1} operates under full device participation, as the server aggregates updated parameters from all devices at each global round.

\begin{algorithm}[ht!]
  \caption{QFL algorithm for full device participation}
  \label{algorithm1}
\begin{algorithmic}[1]
 \footnotesize
    \State \textbf{Input:} learning rate $\eta_{k}$, $\boldsymbol{\theta}_{0}$ as an initial model shared by all the devices, \textcolor{black}{$\mathbb{T}$} as the set of all local iteration indices across devices
            \State \textbf{for} $k = 1, 2, ..., K$ \textbf{do:}
            \State \indent \textcolor{black}{The BS invokes Algorithm \ref{algorithm3} to perform joint resource}
            \State \indent \textcolor{black}{allocation, yielding optimal channel assignments and}
            \State \indent \textcolor{black}{power levels: ${\zeta^{b}}^*$, ${x^{(b)}}^*$}
            \State \indent \textbf{Parallel for all the devices} $n \in \mathcal{N}$ \textbf{do:}
            \State \indent \indent \textbf{for} $t = 1, 2, ..., \textcolor{black}{T_{n}}$, \textcolor{black}{\textbf{where $\mathcal{T}_{n} \in \mathbb{T}$,}} \textbf{do:}
            \State \indent \indent \indent $\boldsymbol{\theta}_{n,k}^{t+1} = \boldsymbol{\theta}_{n,k}^{t} - \eta_{k} \Tilde{g}_{n,k}^{t}$
            \State \indent \indent Each device $n \in \mathcal{N}$ sends \textcolor{black}{$\boldsymbol{\theta}_{n,k}^{T_{n}}$} to the server
            \State \indent \indent \textbf{end parallel for}
            \State \indent \textbf{Server computes}
            \State \indent \indent $\bar{\boldsymbol{\theta}}_{k+1} = \frac{1}{N} \sum_{n \in \mathcal{N}} [\textcolor{black}{\boldsymbol{\theta}_{n,k}^{T_{n}}}]$
            \State \indent Server broadcasts $\bar{\boldsymbol{\theta}}_{k+1}$ to all the devices
            \State \indent \textbf{end for}
            \State \textbf{end for}
            \State \textbf{Output:} $\bar{\boldsymbol{\theta}}_{K} = \frac{1}{N} \sum_{n \in \mathcal{N}} \textcolor{black}{\boldsymbol{\theta}_{n,K-1}^{T_{n}}}$
\end{algorithmic} 
\end{algorithm}

\subsection{Notation and Definition}
We focus on the following optimization problem: $\min_{\boldsymbol{\theta}} f(\boldsymbol{\theta}) \triangleq \sum_{n=1}^{N} f_{n}(\boldsymbol{\theta})$, where $f(\boldsymbol{\theta})$ is the global objective function. In this federated framework, it is assumed that each device $n$ trains its local model on dataset $\mathcal{S}_{n}$ containing $S_{n}$ data points sampled in an independent and identically distributed (i.i.d.) fashion from the local distribution $\mathcal{D}_{n}$. We note that, since the local datasets are generated from different distributions, we carefully consider the heterogeneity of these distributions while analyzing the convergence of QFL. We define $g_{n} = \frac{1}{|\mathcal{S}_{n}|} \nabla f_{n}(\boldsymbol{\theta})\stackrel{\triangle}{=} \frac{1}{|\mathcal{S}_{n}|} \nabla f(\boldsymbol{\theta};\mathcal{S}_{n})$, where $f(\boldsymbol{\theta};\mathcal{S}_{n})$ represents the full gradient. Moreover, we denote the stochastic gradient as $\Tilde{g_{n}} \stackrel{\triangle}{=} \frac{1}{B} \nabla f(\boldsymbol{\theta};\xi_{n})$, where $\xi_{n} \subseteq \mathcal{S}_{n}$ is a uniformly sampled mini-batch with $|\xi_{n}| = B$. The corresponding quantities evaluated at device $n$'s local solution $\boldsymbol{\theta}_{n,k}^{t}$ during local iteration $t$ of the $k^{\text{th}}$ global round are denoted by $g_{n,k}^{t}$ for the full gradient and $\Tilde{g}_{n,k}^{t}$ for the stochastic gradient. We also define the following notations: $\boldsymbol{\theta}_{k}^{t} = [\boldsymbol{\theta}_{1, k}^{t}, \boldsymbol{\theta}_{2, k}^{t}, \dots, \boldsymbol{\theta}_{N, k}^{t}]$,
\begin{align}
    \xi_{k}^{t} = [\xi_{1, k}^{t}, \xi_{2, k}^{t}, \dots, \xi_{N, k}^{t}],
\end{align}
in order to represent the set of local solutions and sampled mini-batches associated with the devices during local iteration $t$ at $k^{\text{th}}$ global round, respectively. The following notations will be useful for the convergence analysis of the QFL framework: $\bar{\boldsymbol{\theta}}_{k}^{t} \stackrel{\triangle}{=} \frac{1}{N} \sum_{n \in \mathcal{N}} \boldsymbol{\theta}_{n,k}^{t}$,
$
    \Tilde{g}_{k}^{t} \stackrel{\triangle}{=} \frac{1}{N} \sum_{n \in \mathcal{N}} \Tilde{g}_{n,k}^{t},
$
$
    g_{k}^{t} \stackrel{\triangle}{=} \frac{1}{N} \sum_{n \in \mathcal{N}} g_{n,k}^{t}.
$
Thus, the local SGD update at device $n$ is followed as
$
    \boldsymbol{\theta}_{n,k}^{t+1} = \boldsymbol{\theta}_{n,k}^{t} - \eta_{k} \Tilde{g}_{n,k}^{t}
$
It is apparent that 
\begin{align}
    \bar{\boldsymbol{\theta}}_{k}^{t+1} = \bar{\boldsymbol{\theta}}_{k}^{t} - \eta_{k} \Tilde{g}_{k}^{t}. \label{eqn30}
\end{align}
It is to be mentioned that $\mathbb{E} \Tilde{g}_{k}^{t} = g_{k}^{t}$, where $\mathbb{E}$ represents function's expectation. In the subsequent analysis, we assume that $\lambda$ is an upper bound on the gradient diversity among the local objectives, i.e.,
$
    \frac{\sum_{n=1}^{N} ||g_{n,k}^{t}||_{2}^{2}}{||\sum_{n=1}^{N}g_{n,k}^{t}||_{2}^{2}} \leq \lambda.
$
In the following subsection, we delineate the foundational assumptions that underlie our convergence analysis.
\subsection{Assumptions}
\textcolor{black}{To contextualize the assumptions used in our convergence analysis, we first outline the underlying intuition motivating their adoption. The $L-\text{smoothness}$ condition ensures that the gradients of the local objective functions do not vary too abruptly, which stabilizes the updates across heterogeneous devices \cite{li2019convergence}. The Polyak-Lojasiewicz (PL) condition serves as a realistic alternative to strong convexity. This assumption has been widely used in analyzing variational quantum algorithms, where the loss landscape is nonconvex \cite{chen2024differentially}, however often satisfies PL properties in practice. Finally, the bounded variance assumption accounts for randomness introduced both by mini-batch sampling, which are inherent to PQC-based training \cite{li2019convergence}. Together, these assumptions provide a standard and well-motivated foundation for establishing convergence guarantees in frameworks such as QFL.} 
\begin{assumption}[Smoothness and Lower Boundedness \textcolor{black}{\cite{li2019convergence}}] \label{Assumption1}
\textit{The local objective function $f_{n}(.)$ associated with device $n$ is differentiable for $1 \leq n \leq N$ and is $L-\text{smooth}$, i.e., $||\nabla f_{n}(\mathbf{u}) - \nabla f_{n}(\mathbf{v})|| \leq L||\mathbf{u}-\mathbf{v}||, \forall{\mathbf{u},\mathbf{v}} \in \mathbb{R}^{d}$.}
\end{assumption}
\begin{assumption}[$\mu$-Polyak-Lojasiewicz (PL) \textcolor{black}{\cite{chen2024differentially}}] \label{Assumption2}
\textit{The global objective function $f(.)$ is differentiable and satisfy the Polyak-Lojasiewicz (PL) condition with constant $\mu$, i.e., $\frac{1}{2} ||\nabla f(\boldsymbol{\theta})||_{2}^{2} \geq \mu (f(\boldsymbol{\theta})-f({\boldsymbol{\theta}}^{*}))$ holds $\forall \boldsymbol{\theta} \in \mathbb{R}^{d}$ with ${\boldsymbol{\theta}}^{*}$ being the optimal solution of global objective.}
\end{assumption}
\begin{assumption}[Bounded Local Variance \textcolor{black}{\cite{li2019convergence}}] \label{Assumption3}
\textit{For every local dataset $S_{n}$, $n = 1, 2, \dots, N$, we can sample an independent mini-batch $\xi_{n} \subseteq \mathcal{S}_{n}$ with $|\xi_{n}| = B$ and compute an unbiased stochastic gradient $\Tilde{g_{n}} = \frac{1}{B} \nabla f(\boldsymbol{\theta};\xi_{n})$, $\mathbb{E}[\Tilde{g_{n}}] = g_{n} = \frac{1}{|\mathcal{S}_{n}|} \nabla f(\boldsymbol{\theta};\mathcal{S}_{n})$ with the variance bounded as
$
    \mathbb{E}[||\Tilde{g_{n}} - g_{n}||^{2}] \leq C_{1} ||g_{n}||^{2} + \frac{\sigma^{2}}{B}. 
$
where $C_1$ is a non-negative constant and inversely proportional to the mini-batch size, and $\sigma$ is another constant controlling the variance bound.}
\end{assumption}
\noindent
From the update rule in \eqref{eqn30} and assumption on the L-smoothness of the objective function, we have the following inequality:
$
    f(\bar{\boldsymbol{\theta}}_{k}^{t+1}) - f(\bar{\boldsymbol{\theta}}_{k}^{t}) \leq - \eta_{k} \langle \nabla f(\bar{\boldsymbol{\theta}}_{k}^{t}), \Tilde{g}_{k}^{t} \rangle + \frac{\eta_{k}^{2}  L}{2} ||\Tilde{g}_{k}^{t}||^{2}.
$
Now, taking expectation on both sides of the inequality results in 
$
    \mathbb{E}[f(\bar{\boldsymbol{\theta}}_{k}^{t+1}) - f(\bar{\boldsymbol{\theta}}_{k}^{t})] \leq -\eta_{k} \mathbb{E}[\langle \nabla f(\bar{\boldsymbol{\theta}}_{k}^{t}), \Tilde{g}_{k}^{t} \rangle] + \frac{\eta_{k}^{2}  L}{2} \mathbb{E}[||\Tilde{g}_{k}^{t}||^{2}]
$
By taking the average for all the local and global iterations, we get
\begin{align}
    &\frac{1}{KT} \sum_{k=1}^{K} \sum_{t=1}^{T} \mathbb{E}[f(\bar{\boldsymbol{\theta}}_{k}^{t+1}) - f(\bar{\boldsymbol{\theta}}_{k}^{t})] \nonumber \\
    & \hspace{6em}\leq \frac{1}{KT} \sum_{k=1}^{K} \sum_{t=1}^{T} (-\eta_{k} \mathbb{E}[\langle \nabla f(\bar{\boldsymbol{\theta}}_{k}^{t}), \Tilde{g}_{k}^{t} \rangle]) \nonumber \\
    & \hspace{6em}+ \frac{1}{KT} \sum_{k=1}^{K} \sum_{t=1}^{T} \frac{\eta_{k}^{2}  L}{2} \mathbb{E}[||\Tilde{g}_{k}^{t}||^{2}]. \label{eqn35}
\end{align}
Moving forward, we now systematically determine bounds for each term appearing on the right-hand side of \eqref{eqn35}. Specifically, Lemma~\ref{lemma1} is utilized to ascertain a bound for the first term in this equation. Subsequently, Lemma~\ref{lemma3} is employed to derive a bound for the second term. Additionally, Lemma~\ref{lemma2} focuses on a term that originates from the analysis in Lemma~\ref{lemma1}, particularly, it addresses the final term delineated in Lemma~\ref{lemma1}, providing its bound to further improve our understanding of the overall equation's dynamics.

\subsection{Convergence Rates}
In this section, we analyze the convergence rate of the QFL algorithm under full device participation scenario. We next present several lemmas that are utilized in deriving the main result. 


\textcolor{black}{Before presenting Lemma 1, we note that Assumption 1, i.e., $L-\text{smoothness}$, ensures that the gradients of the local objectives do not change abruptly, which is essential for bounding the interaction between the stochastic and full gradients. This smoothness property directly enables the inequality established in the lemma below.}

\begin{lemma} \label{lemma1}
Let Assumption \ref{Assumption1} hold, the expected inner product between stochastic gradient and full gradient is bounded by
\begin{align}
    & -\eta_{k} \mathbb{E}\bigg[\langle \nabla f(\bar{\boldsymbol{\theta}}_{k}^{t}), \Tilde{g}_{k}^{t} \rangle\bigg] \leq -\frac{\eta_{k}}{2}||\nabla f(\bar{\boldsymbol{\theta}}_{k}^{t})||^2 \nonumber \\
    & - \frac{\eta_{k}}{2}||\sum_{n=1}^{N}\nabla f_{n}(\boldsymbol{\theta}_{n,k}^{t})||^{2} + \frac{\eta_{k} L^{2}}{2} \sum_{n=1}^{N} ||\bar{\boldsymbol{\theta}}_{k}^{t} - \boldsymbol{\theta}_{n,k}^{t}||^{2}.
\end{align}
\end{lemma}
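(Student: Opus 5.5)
The plan is to condition on the current local iterates $\boldsymbol{\theta}_{k}^{t}$, take the expectation over the mini-batches $\xi_{k}^{t}$, and then apply the polarization identity followed by $L$-smoothness.

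First, by the unbiasedness in Assumption~\ref{Assumption3}, conditionally on $\boldsymbol{\theta}_{k}^{t}$ we have $\mathbb{E}[\Tilde{g}_{k}^{t}] = g_{k}^{t}$. The lemma is stated with $\sum_{n=1}^{N}\nabla f_{n}(\boldsymbol{\theta}_{n,k}^{t})$ and with $\nabla f = \sum_{n}\nabla f_{n}$. I will therefore treat the averaged gradient as the aggregate $\sum_{n}\nabla f_{n}(\boldsymbol{\theta}_{n,k}^{t})$. This absorbs the $1/N$ and $1/|\mathcal{S}_{n}|$ normalizations into the paper's definition of $g_{n}$, so that both vectors in the inner product live on the same scale as $\nabla f$. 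With this, and since $\nabla f(\bar{\boldsymbol{\theta}}_{k}^{t})$ is deterministic given the iterates, the left-hand side becomes
\begin{align*}
-\eta_{k}\Big\langle \nabla f(\bar{\boldsymbol{\theta}}_{k}^{t}), \textstyle\sum_{n}\nabla f_{n}(\boldsymbol{\theta}_{n,k}^{t})\Big\rangle .
\end{align*}
I must be careful at this step that the normalization choice is used consistently.

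Second, I apply the identity $-\langle \mathbf{a},\mathbf{b}\rangle = \tfrac12\|\mathbf{a}-\mathbf{b}\|^2 - \tfrac12\|\mathbf{a}\|^2 - \tfrac12\|\mathbf{b}\|^2$ with $\mathbf{a} = \nabla f(\bar{\boldsymbol{\theta}}_{k}^{t})$ and $\mathbf{b} = \sum_{n}\nabla f_{n}(\boldsymbol{\theta}_{n,k}^{t})$. This immediately produces the two negative terms $-\frac{\eta_k}{2}\|\nabla f(\bar{\boldsymbol{\theta}}_{k}^{t})\|^2$ and $-\frac{\eta_k}{2}\|\sum_n \nabla f_n(\boldsymbol{\theta}_{n,k}^{t})\|^2$. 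What remains is the drift term
\begin{align*}
\frac{\eta_k}{2}\Big\|\textstyle\sum_{n}\big(\nabla f_{n}(\bar{\boldsymbol{\theta}}_{k}^{t}) - \nabla f_{n}(\boldsymbol{\theta}_{n,k}^{t})\big)\Big\|^2 .
\end{align*}

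Third, I bound the drift term. I move the norm inside the sum (Jensen or Cauchy--Schwarz) and then use Assumption~\ref{Assumption1} termwise, which gives $\|\nabla f_{n}(\bar{\boldsymbol{\theta}}_{k}^{t}) - \nabla f_{n}(\boldsymbol{\theta}_{n,k}^{t})\| \le L\|\bar{\boldsymbol{\theta}}_{k}^{t} - \boldsymbol{\theta}_{n,k}^{t}\|$. This yields the consensus-error term $\frac{\eta_k L^2}{2}\sum_n\|\bar{\boldsymbol{\theta}}_{k}^{t} - \boldsymbol{\theta}_{n,k}^{t}\|^2$. Taking the outer expectation over the iterates then completes the argument.

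The main obstacle is this last step, because of the constants. A plain Cauchy--Schwarz on $\|\sum_{n=1}^{N}\mathbf{v}_n\|^2$ gives a factor $N$ in front of $\sum_n\|\mathbf{v}_n\|^2$. With the averaged convention $\frac1N\sum_n$ instead, Jensen gives $\frac1N\sum_n\|\mathbf{v}_n\|^2$. To obtain exactly the stated constant, I will first write the drift term in averaged form, where Jensen's inequality is tight enough. I will then reconcile the scaling with the paper's normalization of $g_{n}$ (the $1/|\mathcal{S}_n|$ and $1/N$ factors), so that the resulting factor is at most one. If that reconciliation fails, the honest fallback is the same bound with $L^2$ replaced by $NL^2$. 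That version still suffices for the downstream use in Lemma~\ref{lemma2}, since the consensus term is bounded there anyway.
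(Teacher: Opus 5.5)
Your route (condition on the iterates, use unbiasedness, apply the polarization identity, then Jensen/Cauchy--Schwarz plus $L$-smoothness on the drift term) is exactly the paper's route, and your worry about the constant is justified. The gap is the step where you plan to ``reconcile'' normalizations to recover $\frac{\eta_k L^2}{2}$. That step cannot succeed, because the inequality as stated is false however $\tilde g_k^t$ is scaled.

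Here is a counterexample. Take $N=2$, a scalar parameter, and exact gradients, with $f_1(\theta)=\frac{L}{2}\theta^2$ and $f_2(\theta)=-\frac{L}{2}\theta^2$. Both are $L$-smooth; if you want $f_2$ bounded below, make its gradient increase with slope $L$ outside $[-\delta,\delta]$. Set $\theta_{1,k}^t=\delta$ and $\theta_{2,k}^t=-\delta$.
\begin{itemize}
\item Then $\bar{\theta}_k^t=0$ and $\nabla f(0)=0$, so the left-hand side is $0$ for any choice of $\mathbb{E}[\tilde g_k^t]$.
\item The right-hand side is $-\frac{\eta_k}{2}(2L\delta)^2+\frac{\eta_k L^2}{2}\cdot 2\delta^2=-\eta_k L^2\delta^2<0$.
\end{itemize}
Equivalently, the polarization step is an identity. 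So under your normalization $\mathbb{E}[\tilde g_k^t]=\sum_n\nabla f_n(\boldsymbol{\theta}_{n,k}^t)$, the lemma is the same as $\|\sum_n(\nabla f_n(\bar{\boldsymbol{\theta}}_k^t)-\nabla f_n(\boldsymbol{\theta}_{n,k}^t))\|^2\le L^2\sum_n\|\bar{\boldsymbol{\theta}}_k^t-\boldsymbol{\theta}_{n,k}^t\|^2$. In general only the version with $NL^2$ holds.

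Your ``absorption'' of the $1/N$ and $1/|\mathcal{S}_n|$ factors is also a change of definitions, not a rewriting. With $\tilde g_k^t=\frac1N\sum_n\tilde g_{n,k}^t$, the mean is $\frac1N\sum_n g_{n,k}^t$. Take that literal mean with $g_{n,k}^t=\nabla f_n(\boldsymbol{\theta}_{n,k}^t)$. The stated bound then fails already at consensus, $\boldsymbol{\theta}_{n,k}^t=\bar{\boldsymbol{\theta}}_k^t$: the left side is $-\frac{\eta_k}{N}\|\nabla f(\bar{\boldsymbol{\theta}}_k^t)\|^2$, while the right side is $-\eta_k\|\nabla f(\bar{\boldsymbol{\theta}}_k^t)\|^2$.

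The paper's own proof reaches the stated constants only through two invalid moves.
\begin{itemize}
\item It evaluates $\mathbb{E}_{\mathcal{N}}[\frac1N\sum_n\nabla f_n(\boldsymbol{\theta}_{n,k}^t)]$ as $\frac1N\cdot N\sum_n\nabla f_n(\boldsymbol{\theta}_{n,k}^t)$, although under full participation there is no randomness over the device set.
\item It then asserts $\|\sum_n\mathbf{v}_n\|^2\le\sum_n\|\mathbf{v}_n\|^2$ ``by convexity''. This is precisely the factor $N$ you noticed Jensen does not give.
\end{itemize}

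So your fallback is the right thing to prove, but it is a corrected lemma, not this one.
\begin{itemize}
\item With $\mathbb{E}[\tilde g_k^t]=\sum_n\nabla f_n(\boldsymbol{\theta}_{n,k}^t)$, the drift coefficient is $\frac{\eta_k NL^2}{2}$.
\item With the paper's averaged $\tilde g_k^t$, the drift coefficient stays $\frac{\eta_k L^2}{2}$, but both negative terms drop to $\frac{\eta_k}{2N}$.
\end{itemize}
Finally, Lemma 1 is not used inside Lemma 2; the two are combined in Theorem 1. There the extra factor $N$ propagates into the step-size condition and the $\eta_k^2\sigma^2L^2$ term, so Theorem 1 as stated would not follow verbatim from your fallback.
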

\begin{proof}
See \ref{prooflemma1} in the appendix. \renewcommand{\qedsymbol}{}
\end{proof}

\noindent
\textcolor{black}{Lemma 2 relies on Assumption 3, which bounds the variance of stochastic gradients arising from mini-batch sampling. This condition is crucial for controlling how far local iterates deviate from their averaged model, leading to the bound shown next.}

\begin{lemma} \label{lemma2}
Provided that Assumption \ref{Assumption3} is fulfilled, the expected upper bound of the divergence of $\boldsymbol{\theta}_{n,k}^{t}$ is given as 
\color{black}
\begin{align}
    & \frac{1}{K} \sum_{k=1}^{K} \sum_{n=1}^{N} \frac{1}{T_{n}}\sum_{t=1}^{T_{n}}\bigg[\mathbb{E}||\bar{\boldsymbol{\theta}}_{k}^{t} - \boldsymbol{\theta}_{n,k}^{t}||\bigg] \nonumber \\ 
    & \leq \frac{(2C_{1} + T_{n}(T_{n}+1))}{KT} \eta_{k}^{2} \frac{N+1}{N} \frac{1}{K} \sum_{k=1}^{K} \sum_{n=1}^{N} \frac{1}{T_{n}} \nonumber \\ 
    & \times \sum_{t=1}^{T_{n}} ||g_{n,k}^{t}||^{2} + \frac{\eta_{k}^{2} (N+1)(T_{n}+1)\sigma^{2}}{NB} \nonumber \\
    & \leq \frac{\lambda \eta_{K}^{2} (2C_{1} + T_{n}(T_{n}+1))}{KT} \frac{N+1}{N} \frac{1}{K} \sum_{k=1}^{K} \sum_{n=1}^{N} \nonumber \\
    & \times \frac{1}{T_{n}}\sum_{t=1}^{T_{n}} ||g_{n,k}^{t}||^{2} + \frac{\eta_{k}^{2} KT_{n} (N+1)(T_{n}+1)\sigma^{2}}{NB}.
\end{align}
\color{black}
\end{lemma}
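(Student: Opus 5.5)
We would follow the standard local-SGD divergence argument (as in the analysis of Haddadpour and Mahdavi / \cite{li2019convergence}), adapted to heterogeneous local iteration counts $T_{n}$. The left-hand side involves $\|\bar{\boldsymbol{\theta}}_{k}^{t} - \boldsymbol{\theta}_{n,k}^{t}\|$; we interpret it, as Lemma~\ref{lemma1} requires, as the squared deviation.

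\textbf{Step 1 (unrolling from the synchronization point).} At the start of round $k$ all devices share $\bar{\boldsymbol{\theta}}_{k}$. Unrolling the local update gives
\begin{align}
\boldsymbol{\theta}_{n,k}^{t} = \bar{\boldsymbol{\theta}}_{k} - \eta_{k}\sum_{\tau=1}^{t-1} \Tilde{g}_{n,k}^{\tau}.
\end{align}
Using \eqref{eqn30}, the average satisfies $\bar{\boldsymbol{\theta}}_{k}^{t} = \bar{\boldsymbol{\theta}}_{k} - \eta_{k}\sum_{\tau<t}\Tilde{g}_{k}^{\tau}$. Hence
\begin{align}
\bar{\boldsymbol{\theta}}_{k}^{t}-\boldsymbol{\theta}_{n,k}^{t} = \eta_{k}\sum_{\tau=1}^{t-1}\big(\Tilde{g}_{n,k}^{\tau} - \Tilde{g}_{k}^{\tau}\big).
\end{align}
We then apply $\|a-b\|^2\le (1+\tfrac{1}{N})\cdot\ldots$. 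More precisely, expanding $\Tilde{g}_{k}^{\tau}=\frac1N\sum_m \Tilde{g}_{m,k}^{\tau}$ and using Jensen/Cauchy--Schwarz yields a factor $\frac{N+1}{N}$, which is where that constant in the statement comes from. This reduces the problem to bounding $\mathbb{E}\|\sum_{\tau<t}\Tilde{g}_{n,k}^{\tau}\|^2$ for each device.

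\textbf{Step 2 (bias/variance split).} We write $\Tilde{g}_{n,k}^{\tau} = (\Tilde{g}_{n,k}^{\tau}-g_{n,k}^{\tau}) + g_{n,k}^{\tau}$ and use $\|a+b\|^2\le 2\|a\|^2+2\|b\|^2$.

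\emph{Martingale noise part.} The noise terms are conditionally mean-zero across $\tau$ by independence of mini-batches, so cross terms vanish. Assumption~\ref{Assumption3} then gives
\begin{align}
\sum_{\tau<t}\Big(C_{1}\|g_{n,k}^{\tau}\|^{2}+\frac{\sigma^{2}}{B}\Big).
\end{align}

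\emph{Drift part.} Cauchy--Schwarz gives $(t-1)\sum_{\tau<t}\|g_{n,k}^{\tau}\|^{2}$.

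\textbf{Step 3 (summing over local steps).} Summing over $t=1,\dots,T_{n}$ and dividing by $T_{n}$, we use $\sum_{t\le T_n}(t-1)\le T_{n}(T_{n}+1)/2$ and $\sum_{t\le T_n}1\le T_n+1$. This produces the coefficient $\big(2C_{1}+T_{n}(T_{n}+1)\big)\eta_{k}^{2}\frac{N+1}{N}$ on $\frac{1}{T_n}\sum_t\|g_{n,k}^t\|^2$. It also produces the noise term $\eta_k^2\frac{(N+1)(T_n+1)\sigma^2}{NB}$. Averaging over $k$ and summing over $n$ gives the first inequality, with the normalization $\frac{1}{KT}$ absorbed as in \eqref{eqn35}.

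\textbf{Step 4 (second inequality).} We replace $\eta_{k}$ by $\eta_{K}$ using a monotone step-size convention, and invoke the gradient-diversity bound $\sum_n\|g_{n,k}^t\|^2\le\lambda\|\sum_n g_{n,k}^t\|^2$ to introduce $\lambda$. Summing the constant noise term over the $K$ rounds and $T_n$ steps, instead of averaging it, gives the looser factor $KT_n$.

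\textbf{Main obstacle.} The main difficulty is bookkeeping rather than any deep estimate. Because devices run different $T_n$, the average $\bar{\boldsymbol{\theta}}_k^t$ is ill-defined for $t>\min_n T_n$. We would resolve this by treating finished devices as holding their iterate fixed, i.e.\ with zero gradient, which only decreases the bounds. The normalizations by $T$ and $T_n$ must then be matched consistently to reproduce exactly the stated constants.
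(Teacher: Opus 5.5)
Your strategy matches the paper's: unroll from the last synchronization, split noise from drift, kill cross terms using independent mini-batches, apply Assumption~\ref{Assumption3} to the noise and $\|\sum_{z\le r}a_z\|^2\le r\sum_{z\le r}\|a_z\|^2$ to the drift, then sum over local steps and rounds. Reading the left side as a squared norm is also right; unsquared, the claim fails in general for small $\eta$, since the left side is of order $\eta$ and the right of order $\eta^2$.

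The gap is Step~3, together with the $\frac{N+1}{N}$ claim of Step~1. The steps you list do not produce the stated constants, and no bookkeeping can, because the first inequality is false. Concretely:
(i) Per device, $\|a_l-\bar a\|^2\le\frac{N+1}{N}\|a_l\|^2$ fails (take $a_l=0\neq\bar a$). What holds is $\sum_n\|a_n-\bar a\|^2=\sum_n\|a_n\|^2-N\|\bar a\|^2\le\sum_n\|a_n\|^2$.
(ii) $\frac1T\sum_{t\le T}\sum_{\tau<t}C_1\|g^\tau\|^2$ can equal $C_1\frac{T-1}{T}\sum_\tau\|g^\tau\|^2$. So your derivation gives the coefficient $2C_1(T-1)$ on the averaged gradient, not $2C_1$.
(iii) The left side sums over all $N$ devices, so the per-device noise $2\eta^2(t-1)\sigma^2/B$ is counted $N$ times. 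After averaging over $t$ this gives $\eta^2N(T-1)\sigma^2/B$, not $\eta^2\frac{(N+1)(T+1)\sigma^2}{NB}$.
(iv) The extra $\frac{1}{KT}$ multiplying the already-averaged gradient term cannot be ``absorbed'': it would make the bound vanish as $K\to\infty$ while the per-round drift does not.

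The paper reaches the stated constants through the same slips. In the chain leading to \eqref{eqn57} it bounds $\sum_{r\le T_n}\sum_{z\le r}$ by $\sum_{z\le T_n}$ and drops the $\sum_l$ in front of the $\sigma^2$ terms. Before that, it bounds $\|\frac1N\sum_n\sum_{z\le r}\eta g_n^z\|^2$ by $\frac{r}{N^2}\sum_n\sum_z\eta^2\|g_n^z\|^2$, where Jensen only gives $\frac{r}{N}$.

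Here is a counterexample to the first inequality, which does not involve $\lambda$. Take $T_n\equiv T$, constant $\eta$, and per-sample losses $\langle a_s,\boldsymbol{\theta}\rangle$ with $\sum_{s\in\mathcal{S}_n}a_s=0$, the same on every device. Then $g_{n,k}^t=0$, the mini-batch gradients are i.i.d.\ and zero-mean, and Assumption~\ref{Assumption3} holds with $\sigma^2:=B\,\mathbb{E}\|\Tilde{g}_{n,k}^t\|^2$. Moreover $\sum_n\mathbb{E}\|\bar{\boldsymbol{\theta}}_k^t-\boldsymbol{\theta}_{n,k}^t\|^2=\eta^2(N-1)(t-1)\sigma^2/B$. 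So the left side equals $\eta^2(N-1)\frac{T-1}{2}\frac{\sigma^2}{B}$, while the right side is $\eta^2\frac{(N+1)(T+1)}{N}\frac{\sigma^2}{B}$. For $N=T=10$ the left side is more than three times larger.

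What your argument actually proves, once Step~1 uses the identity in (i), is
\begin{align*}
&\frac{1}{K}\sum_{k=1}^{K}\sum_{n=1}^{N}\frac{1}{T}\sum_{t=1}^{T}\mathbb{E}\|\bar{\boldsymbol{\theta}}_k^t-\boldsymbol{\theta}_{n,k}^t\|^2\\
&\le\eta^2(T-1)(2C_1+T)\,\frac{1}{K}\sum_{k=1}^{K}\sum_{n=1}^{N}\frac{1}{T}\sum_{t=1}^{T}\mathbb{E}\|g_{n,k}^t\|^2\\
&\quad+\eta^2N(T-1)\frac{\sigma^2}{B}.
\end{align*}
It is this bound, with its extra factors of $T$ and $N$, that has to be fed into Theorem~1. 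Separately, Step~4's replacement of $\eta_k$ by $\eta_K$ goes the wrong way for nonincreasing step sizes.
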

\noindent
\begin{proof}
See \ref{prooflemma2} in the appendix. \renewcommand{\qedsymbol}{}
\end{proof}

\noindent
\textcolor{black}{Lemma 3 again relies on Assumption 3, as the bounded-variance condition allows us to upper-bound the second moment of the stochastic gradients. This assumption ensures that the randomness introduced by stochastic sampling does not accumulate uncontrollably during local updates.}

\begin{lemma} \label{lemma3}
Under Assumption \ref{Assumption3}, the expected upper bound of $\mathbb{E}[||\Tilde{g}_{k}^{t}||^2]$ is expressed as
\begin{align}
    \mathbb{E}\bigg[||\Tilde{g}_{k}^{t}||^2\bigg] &\leq \bigg(\frac{C_{1}}{N}+1\bigg) \bigg[\sum_{n=1}^{N}||\nabla f_{n}(\boldsymbol{\theta}_{n,k}^{t})||^{2}\bigg] + \frac{\sigma^{2}}{NB} \nonumber \\ 
    &\leq \lambda \bigg(\frac{C_{1}}{N}+1\bigg) \bigg[\sum_{n=1}^{N}||\nabla f_{n}(\boldsymbol{\theta}_{n,k}^{t})||^{2}\bigg] + \frac{\sigma^{2}}{NB}.
\end{align}
\end{lemma}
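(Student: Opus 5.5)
The plan is the standard bias–variance split of the averaged stochastic gradient, then independence across devices, then Assumption \ref{Assumption3} term by term. Write $\Tilde{g}_{k}^{t} = \frac{1}{N}\sum_{n=1}^{N}\Tilde{g}_{n,k}^{t}$ and decompose each summand as $\Tilde{g}_{n,k}^{t} = (\Tilde{g}_{n,k}^{t} - g_{n,k}^{t}) + g_{n,k}^{t}$. Conditioning on the current iterates $\boldsymbol{\theta}_{k}^{t}$, we have $\mathbb{E}[\Tilde{g}_{n,k}^{t}] = g_{n,k}^{t}$, so the cross term between the zero-mean noise and the deterministic mean vanishes. This gives
\begin{align}
\mathbb{E}\big[||\Tilde{g}_{k}^{t}||^{2}\big] = \mathbb{E}\Big[\Big|\Big|\frac{1}{N}\sum_{n=1}^{N}(\Tilde{g}_{n,k}^{t} - g_{n,k}^{t})\Big|\Big|^{2}\Big] + \Big|\Big|\frac{1}{N}\sum_{n=1}^{N} g_{n,k}^{t}\Big|\Big|^{2}. \nonumber
\end{align}

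For the noise term I would use that the mini-batches $\xi_{n,k}^{t}$ are sampled independently across devices. The noise vectors are therefore independent and zero mean, and the squared norm of their sum splits into $\frac{1}{N^{2}}\sum_{n}\mathbb{E}||\Tilde{g}_{n,k}^{t} - g_{n,k}^{t}||^{2}$. Applying Assumption \ref{Assumption3} to each summand bounds this by $\frac{C_{1}}{N^{2}}\sum_{n}||g_{n,k}^{t}||^{2} + \frac{\sigma^{2}}{NB}$. For the mean term, Jensen (or Cauchy–Schwarz) gives $||\frac{1}{N}\sum_{n} g_{n,k}^{t}||^{2} \le \frac{1}{N}\sum_{n}||g_{n,k}^{t}||^{2}$. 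Adding the two pieces yields a coefficient $\frac{C_{1}}{N^{2}} + \frac{1}{N}$ on $\sum_{n}||g_{n,k}^{t}||^{2}$.

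That coefficient is $\frac{1}{N}$ times the stated factor $(\frac{C_{1}}{N} + 1)$, so it is dominated by $(\frac{C_{1}}{N} + 1)$ since $N \ge 1$. Identifying $g_{n,k}^{t}$ with $\nabla f_{n}(\boldsymbol{\theta}_{n,k}^{t})$, up to the normalization $1/|\mathcal{S}_{n}|$ the paper absorbs in its notation, gives the first inequality. For the second inequality, I only need to multiply by $\lambda$. The diversity definition gives $\sum_{n}||g_{n,k}^{t}||^{2} \le \lambda ||\sum_{n} g_{n,k}^{t}||^{2}$, and since $||\sum_{n} g_{n}||^{2} \le N\sum_{n}||g_{n}||^{2}$ we get $\lambda \ge 1$. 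Hence multiplying the right-hand side by $\lambda$ only loosens the bound. Alternatively, one could keep the mean term as $||\frac{1}{N}\sum_{n} g_{n,k}^{t}||^{2}$ and use the diversity bound to trade it back; the trivial $\lambda \ge 1$ route is enough as stated.

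The main obstacle is the bookkeeping around normalization. The paper's $g_{n}$ carries a $1/|\mathcal{S}_{n}|$ factor relative to $\nabla f_{n}$, and the averaging uses $1/N$ while $f = \sum_{n} f_{n}$. I will state explicitly that $g_{n,k}^{t}$ is identified with $\nabla f_{n}(\boldsymbol{\theta}_{n,k}^{t})$ as in Lemma \ref{lemma1}, and that constants are kept loose, since the stated bound has slack of a factor $N$. The other point needing care is justifying independence across devices conditional on $\boldsymbol{\theta}_{k}^{t}$, which is what kills the cross terms. Shot noise, if included, would add an independent zero-mean term with variance of order $1/H$ and could be absorbed into $\sigma^{2}$.
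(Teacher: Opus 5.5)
Your route is the paper's own: the bias--variance split, cancellation of cross terms by independent mini-batches, Assumption~\ref{Assumption3} applied termwise, and $\|\frac{1}{N}\sum_{n}g_{n,k}^{t}\|^{2}\le\frac{1}{N}\sum_{n}\|g_{n,k}^{t}\|^{2}$. Together these give $\mathbb{E}\|\Tilde{g}_{k}^{t}\|^{2}\le\frac{C_{1}+N}{N^{2}}\sum_{n}\|g_{n,k}^{t}\|^{2}+\frac{\sigma^{2}}{NB}$. You correctly observe that this coefficient equals $\frac{1}{N}(\frac{C_{1}}{N}+1)$, so the first line holds with a factor-$N$ margin; the paper simply asserts this.

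The step that fails is your justification of the second line. Combining the two facts you cite gives $\sum_{n}\|g_{n,k}^{t}\|^{2}\le\lambda\|\sum_{n}g_{n,k}^{t}\|^{2}\le\lambda N\sum_{n}\|g_{n,k}^{t}\|^{2}$. This yields only $\lambda\ge 1/N$, not $\lambda\ge 1$; you lost the factor $N$. With the paper's unweighted diversity ratio, the ratio equals exactly $1/N$ when all local gradients coincide, so nothing in the definition forces $\lambda\ge 1$. When $\lambda<1$, multiplying the first right-hand side by $\lambda$ tightens it rather than loosening it. In that case the literal link ``first line $\le$ second line'' is false whenever $\sum_{n}\|g_{n,k}^{t}\|^{2}>0$.

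The fix uses the margin you already found. Go from your intermediate estimate directly: since $\frac{1}{N}\le\lambda$, you have $\frac{1}{N}(\frac{C_{1}}{N}+1)\sum_{n}\|g_{n,k}^{t}\|^{2}\le\lambda(\frac{C_{1}}{N}+1)\sum_{n}\|g_{n,k}^{t}\|^{2}$. This shows that $\mathbb{E}\|\Tilde{g}_{k}^{t}\|^{2}$ is bounded by each right-hand side separately, which is all that holds in general.

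The paper does something different at this point. It applies the diversity definition to trade $\sum_{n}\|g_{n,k}^{t}\|^{2}$ for $\lambda\|\sum_{n}g_{n,k}^{t}\|^{2}$, ending with $\lambda\frac{C_{1}+N}{N^{2}}\|\sum_{n}g_{n,k}^{t}\|^{2}+\frac{\sigma^{2}}{NB}$. That norm-of-the-sum form is the one actually used in the proof of Theorem 1, and it needs no lower bound on $\lambda$ at all.
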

\begin{proof}
See \ref{prooflemma3} in the appendix. \renewcommand{\qedsymbol}{}
\end{proof}

\noindent
\textcolor{black}{Lemma 4 exploits the structure of quantum measurement noise by relating the variance of the gradient estimator to the number of measurement shots performed on the PQC. This characterization is consistent with standard statistical models used in VQAs.}

\begin{lemma} \label{lemma4}
The variance of the gradient estimate can be upper bounded as 
\begin{align}
\text{var}(\xi_{k}^{t}) \leq \frac{1}{N} \sum_{n \in \mathcal{N}} \frac{\nu N_{z} D Tr(Z^{2})}{2H}. \label{eqn45}
\end{align}
\end{lemma}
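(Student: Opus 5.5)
The plan is to bound the shot-noise variance of each device's gradient estimate separately and then average over the $N$ devices. Because the server averages uniformly, $\tilde g_k^t=\frac1N\sum_n \tilde g_{n,k}^t$, and the measurement noise on different devices is independent, the averaged noise is controlled by the mean of the per-device variances. This is exactly why \eqref{eqn45} has the form $\frac1N\sum_{n\in\mathcal N}(\cdot)$. The variance-reducing factor $1/N$ from independent averaging will simply be dropped, which gives an upper bound.

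The per-device bound starts from how each gradient component is estimated. Since each generator $G_p$ in \eqref{eqn2} is a Pauli string, the parameter-shift rule expresses every partial derivative $\partial f_n/\partial\boldsymbol\theta_n^p$ as a combination of two expectation values $\langle Z\rangle$ at shifted parameters $\boldsymbol\theta_n\pm\frac{\pi}{2}e_p$. With the chain rule through $\mathcal C(\cdot,y)$, each component becomes the derivative of the loss times a difference of two estimated expectations.

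Each such expectation is estimated from $H$ shots via \eqref{eqn10new}. Using the eigendecomposition $Z=\sum_y h_y\Pi_y$, a single shot returns $h_y$ with probability $\operatorname{Tr}(\Pi_y\Psi_n)$. Its variance is therefore
\[
\sum_y h_y^2\operatorname{Tr}(\Pi_y\Psi_n)-\langle Z\rangle^2\le \sum_y h_y^2 .
\]
Bounding each $\operatorname{Tr}(\Pi_y\Psi_n)$ by $1$ and counting the $N_z$ outcome classes gives a quantity of order $N_z\operatorname{Tr}(Z^2)$; the factor $N_z$ accounts for estimating each projector probability separately. Averaging over $H$ independent shots divides this by $H$. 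The parameter-shift coefficient $\tfrac12$ on each of the two independent estimates then produces the $\tfrac{1}{2H}$ factor.

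Summing over the $D$ gradient coordinates contributes the factor $D$. Absorbing the bound on the squared loss derivative $|\mathcal C'|^2$ into a constant $\nu$ gives $\frac{\nu N_z D\operatorname{Tr}(Z^2)}{2H}$ per device, and averaging over $n$ yields \eqref{eqn45}.

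I expect the main obstacle to be pinning down the constants rigorously. This requires justifying the Lipschitz bound $\nu$ on $\mathcal C$, which must be assumed since it was not stated earlier. It also requires handling the fact that passing the estimate $\hat{\langle Z\rangle}$ through a nonlinear $\mathcal C$ introduces bias. My first attempt would be to linearize $\mathcal C$, or simply to define $\nu$ as a uniform bound on $|\partial_1\mathcal C|^2$, so that the variance propagates by a first-order (delta-method) bound.
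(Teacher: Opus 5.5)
Your skeleton matches the paper's. Each coordinate is a parameter-shift difference of two independent $H$-shot estimates, so its variance is $\tfrac14(\mathrm{var}_+ + \mathrm{var}_-)$, which gives the $\tfrac{1}{2H}$. The coordinates contribute $D$, and device averaging gives $\tfrac1N\sum_{n}$. Your $\tfrac14(\mathrm{var}_++\mathrm{var}_-)$ accounting and your justification of the averaging step are more careful than the appendix, which writes $\mathbb{E}[X_{n,d,+}^2]-\mathbb{E}[X_{n,d,-}^2]$ and simply asserts the average.

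The gap is that the two constants specific to this lemma, $N_z$ and $\nu$, are never derived. In the paper both come from a single step. One shot is written as $\sum_y h_y W_y$ with Bernoulli indicators $W_y=\mathbb{I}\{Y=y\}$ and $\mathbb{E}W_y=p_y=\operatorname{Tr}(\Pi_y\Psi_n)$, and Cauchy--Schwarz gives
\[
\mathbb{E}\Big[\Big(\sum_{y} h_y (W_y-p_y)\Big)^2\Big]\le\Big(\sum_{y} h_y^2\Big)\sum_{y} p_y(1-p_y)\le N_z\,\nu\,\operatorname{Tr}(Z^2).
\]
Here $N_z$ counts the $N_z$ Bernoulli variances, and $\nu$ is the bound on $v(p)=p(1-p)$, so one may take $\nu\le\tfrac14$. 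The step $\sum_y h_y^2\le\operatorname{Tr}(Z^2)$ holds because every $\Pi_y$ has rank at least one.

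Your bound $\sum_y h_y^2p_y-\langle Z\rangle^2\le\sum_y h_y^2$ is correct but contains neither factor. Bounding each $\operatorname{Tr}(\Pi_y\Psi_n)$ by $1$ does not create an $N_z$, so you are only padding by $N_z\ge 1$. You then reassign $\nu$ to a bound on $|\partial_1\mathcal C|^2$. The result has the right shape, but its $\nu$ is a different constant, resting on a hypothesis about $\mathcal C$ that the paper never makes.

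The detour through $\mathcal C$ is unnecessary, and it is where your argument stops being rigorous. In the paper, the estimator whose error is $\xi_n$ is the raw combination $\hat\delta_d=\tfrac12\big(\hat{\langle Z\rangle}_{\lvert\Psi_{n,d,+}\rangle}-\hat{\langle Z\rangle}_{\lvert\Psi_{n,d,-}\rangle}\big)$, compared with its exact counterpart $\delta_d$. No loss enters, and the noise is exactly zero-mean.

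Once you apply the chain rule, the prefactor $a=\partial_1\mathcal C(\hat{\langle Z\rangle},y)$ is itself random and biased. A first-order delta-method expansion then gives only an approximation, not an inequality.

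If you want to keep $\mathcal C$, assume $|\partial_1\mathcal C|\le M$ and that $\partial_1\mathcal C$ is $L_{\mathcal C}$-Lipschitz. Set $c=\partial_1\mathcal C(\langle Z\rangle,y)$ and split $a\hat\delta_d-c\,\delta_d=a(\hat\delta_d-\delta_d)+(a-c)\delta_d$. The first piece has second moment at most $M^2\,\mathbb{E}[(\hat\delta_d-\delta_d)^2]$. The second has second moment at most $L_{\mathcal C}^2\,\delta_d^2\,\mathbb{E}[(\hat{\langle Z\rangle}-\langle Z\rangle)^2]$. Both are $O(1/H)$, which bounds the mean-squared error, and hence the variance, and can be absorbed into a redefined $\nu$.

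Otherwise, drop $\mathcal C$ and insert the indicator/Cauchy--Schwarz step above. The rest of your plan then reproduces the lemma as stated.
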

\begin{proof}
See \ref{prooflemma4} in the appendix. \renewcommand{\qedsymbol}{}
\end{proof}

\noindent
\textit{\textbf{Theorem 1.}}
\textit{Let Assumptions \ref{Assumption1}, \ref{Assumption2}, \ref{Assumption3} hold, then the upper bound of the convergence rate of the global model training considering full device participation after $K$ global rounds satisfies}
\color{black}
\begin{align}
        &\frac{1}{K} \frac{1}{\frac{1}{N}\sum_{n=1}^{N}T_{n}} \sum_{k=1}^{K} \sum_{t=1}^{T_{n}} \mathbb{E}||\nabla f(\bar{\boldsymbol{\theta}}_{k}^{t})||^{2} \leq \frac{2 [f(\bar{\boldsymbol{\theta}}_{1}^{0}) - f^{*}]}{\eta_{k} K\frac{1}{N}\sum_{n=1}^{N}T_{n}} \nonumber \\
        &+ \frac{L \eta \sigma^{2}}{NB} + \frac{2 \eta_{k}^{2} \sigma^{2} L^{2} (\frac{1}{N}\sum_{n=1}^{N}T_{n}+1)}{B} \bigg(1+\frac{1}{N}\bigg) \nonumber \\
        &+ \frac{1}{N} \sum_{n \in \mathcal{N}} \frac{\nu N_{z} D Tr(Z^{2})}{2H}. \label{eqn42new}
\end{align}
\color{black}
\begin{proof}
See \ref{prooftheorem1} in the appendix. \renewcommand{\qedsymbol}{}
\end{proof}

\begin{remark}
The convergence upper bound in Theorem 1 achieved under full device participation scenario shows that the convergence of QFL algorithm is significantly influenced by the total number of iterations, the number of total devices, and the number of measurements performed on the PQC.
\end{remark}
\begin{remark}
It becomes apparent from \eqref{eqn42new} that increasing the number of devices $N$ in a full device participation scenario enhances the convergence performance of QFL. Moreover, a higher number of quantum measurement shots of the PQC also helps in achieving faster QFL convergence. However, the last term of this equation encompasses the summation of the variance of the gradient estimate. Specifically, this reflects the noise variance associated with quantum shot noise. Therefore, as more devices participate in QFL, the accumulated noise from all devices intensifies with each global round. Hence, a simultaneous increase in the number of devices and measurement shots in moderation proves to be more effective for enhancing convergence performance. We will conduct extensive simulations in Section~\ref{simulationresultsandevaluations} to further investigate this phenomenon.
\end{remark}

\subsection{Complexity Analysis}
We now study the complexity analysis of of the quantum training process. \textcolor{black}{Let Assumption \ref{Assumption1} hold, which states that the local loss function $f_{n}(\cdot)$ is $L-\text{smooth}$, i.e., has Lipschitz-continuous gradients; under this condition,} we obtain the following inequality:
$
    f_{n}(\boldsymbol{\theta}) \leq f_{n}(\boldsymbol{w'}) + \nabla {f_{n}(\boldsymbol{\theta}')}^{T}(\boldsymbol{\theta}-\boldsymbol{w'}) + \frac{L}{2}||\boldsymbol{\theta}-\boldsymbol{w'}||^{2},
$
for all $\boldsymbol{\theta}, \boldsymbol{w'} \in \Theta$. Furthermore, Assumption \ref{Assumption2} applies, \textcolor{black}{which imposes the $\mu$-PL condition commonly used in non-convex optimization settings to ensure functional descent,} there exists a constant $\mu > 0$ and $\mu \leq L$ such that the inequality
$
    ||\nabla f_{n}(\boldsymbol{\theta})||^{2} \geq 2 \mu (f_{n}(\boldsymbol{\theta}) - {f_{n}}^{*}),
$
holds for all $\boldsymbol{\theta} \in \Theta$. In practice, the constant $\mu$ depends on the number of qubits $n$ and the number of gate parameters $P$, as a larger circuit is typically characterized by smaller gradient norms. This correlation reflects how the complexity of quantum circuits influences the behavior of their gradients, with more extensive and parameter-rich systems often exhibiting more nuanced and subdued gradient magnitudes. The following theorem gives a bound on the optimality of the local quantum training output \textcolor{black}{$\boldsymbol{\theta}^{T_{n}}$ after $T_{n}$} iterations. The proof of the theorem is derived from the classical convergence analysis of Stochastic Gradient Descent (SGD) and is detailed in (\cite{ajalloeian2020convergence}, Theorem 6). Under the assumption of $L$-smoothness and $\mu$-PL, for any given initial point $\boldsymbol{\theta}^{0}$, the following bound holds for any fixed learning rate $\eta_{k} = \mu \leq \frac{1}{L}$:
$
    \textcolor{black}{\mathbb{E}[f_{n}(\boldsymbol{\theta}^{T_{n}})]} - {L}^{*} \leq \textcolor{black}{(1 - \eta \mu)^{T_{n}}} ([f_{n}(\boldsymbol{\theta}^{0})] - {L}^{*}) + \frac{1}{2} [\frac{\eta L V}{\mu}],
$
where $V = \frac{\nu N_{z} D Tr(Z^{2})}{2H}$ is the upper bound of the variance of the gradient estimate, and the expectation is taken over the distribution of the measurement outputs. Furthermore, given some target error level $\delta > 0$, for learning rate $\eta = \eta^{\text{shot-noise}} \leq \min\{\frac{1}{L},\frac{\delta \mu}{L V}\}$, a number of iteration, given as
\begin{align}
    \textcolor{black}{T_{n}^{\text{shot-noise}}} = \mathcal{O} \bigg( \log\frac{1}{\delta} + \frac{V}{\delta \mu} \bigg) \frac{L}{\mu}, \label{eqn26new}
\end{align}
is sufficient to ensure an error $\mathbb{E}[f_{n}(\boldsymbol{\theta}^{T})-{f_{n}}^{*}] = \mathcal{O}(\delta)$.  By choosing the learning rate $\eta$ sufficiently small, this error can be made arbitrarily small, thereby guaranteeing convergence in at most $T_{n}^{\text{shot-noise}}$ iterations. This strategic adjustment reduces the computational overhead by stabilizing the training process, thereby preventing excessive iterations that could strain system resources.

\section{Sum-rate Maximization Problem Formulation} \label{section:problemformulation}
\subsection{Problem Statement}
In this work, we aim to maximize the sum-rate of the network under consideration by jointly optimizing channel selection and transmit power of the quantum devices during each block. \textcolor{black}{Maximizing the uplink sum-rate helps reduce the total transmission delay for sending locally trained PQC parameters from devices to the BS, which often serves as a major bottleneck in multi-channel NOMA-based wireless QFL systems. By ensuring faster and more reliable model uploads, the system achieves better synchronization and more timely global aggregation. This, in turn, leads to faster convergence and improved scalability of the overall QFL training process.} Therefore, we formulate the problem as

\noindent
\textit{\underline{Problem 1:}}
\begin{subequations} 
\begin{align}
\max_{\zeta^{(b)}, p^{(b)}} \quad & \sum_{n=1}^{N} \sum_{c=1}^{C} r_{n,c}^{(b)} \label{eqn6a}\\ 
\textrm{s.t.} \quad & 0 < p_{n}^{(b)} \le P^{\max}, \quad \forall n \in \mathcal{N}, \label{eqn6b}\\
& \zeta_{n,c}^{(b)} \in \{0,1\}, \quad \forall n \in \mathcal{N}, \forall c \in \mathcal{C}, \label{eqn6c}\\
& \sum_{c=1}^{C} \zeta_{n,c}^{(b)} \leq 1, \quad \forall n \in \mathcal{N}, \label{eqn6d}
\end{align} 
\end{subequations}
where $\zeta^{(b)} = \{\zeta_{1,1}^{(b)}, \zeta_{1,2}^{(b)}, \dots, \zeta_{N,C}^{(b)}\} \in \mathbb{R}_{1 \times NC}$ and $p^{(b)} = \{p_{1}^{(b)}, p_{2}^{(b)}, \dots, p_{N}^{(b)}\} \in \mathbb{R}_{1 \times N}$. 
In \textit{Problem 1}, constraint \eqref{eqn6b} ensures the transmit power of device $n$ remains positive and does not exceed the maximum limit $P^{\max}$. While constraint \eqref{eqn6c} is about the channel selection being a binary variable, constraint \eqref{eqn6d} arises from the assumption that each device is allowed to select at most one channel during a time block. 

\subsection{Nature of the Formulated Problem}
\textit{Problem 1} is non-convex in nature because of the objective function \eqref{eqn6a}. Moreover, it is a mixed-integer nonlinear programming problem (MINLP) because it feature a combination of discrete and continuous variables. Furthermore, even for a given $\zeta^{(b)}$, it becomes NP-hard due to the coupling of $p^{(b)}$ \cite{luo2008dynamic}. Due to its complex nature, \textit{Problem 1} is extremely challenging to solve and intractable for classical optimization methods. 

\section{Proposed Quantum-Centric Optimization Solution} \label{section:proposedsolution}
In this work, we propose a quantum-centric optimization strategy utilizing quantum approximate optimization algorithm (QAOA) to address \textit{Problem 1}. Initially, we transform the classical MINLP problem into a quadratic unconstrained binary optimization (QUBO) framework. Subsequently, we convert the QUBO formulation into a Hamiltonian expression suitable for execution on a quantum processor. \textcolor{black}{The proposed QAOA-based solution is particularly well-suited to the discrete, high-dimensional resource allocation problem encountered in large-scale, multi-channel NOMA-based QFL systems. Its ability to efficiently explore combinatorial solution spaces enables fast, high-quality approximate optimization. Although NISQ devices are inherently noise-prone, this challenge is effectively mitigated by using a sufficient number of measurement shots, ensuring that QAOA remains both reliable and practical in realistic noisy quantum hardware settings.} For computational convenience, we divide \textit{Problem 1} into two distinct subproblems and apply the aforementioned steps to each subproblem. We then utilize block coordinate descent (BCD) technique to derive solutions for the original formulation of \textit{Problem 1}.

\noindent
\textit{\underline{Sub-problem 1:}}
\begin{subequations} 
\begin{align}
\max_{\zeta^{(b)}} \quad & \sum_{n=1}^{N} \sum_{c=1}^{C} r_{n,c}^{(b)} \label{eqn7a}\\ 
\textrm{s.t.} \quad & \zeta_{n,c}^{(b)} \in \{0,1\}, \quad \forall n \in \mathcal{N}, \forall c \in \mathcal{C}, \label{eqn7b}\\
& \sum_{c=1}^{C} \zeta_{n,c}^{(b)} \leq 1, \quad \forall n \in \mathcal{N}. \label{eqn7c}
\end{align} 
\end{subequations}

\noindent
\textit{\underline{Sub-problem 2:}}
\begin{subequations} 
\begin{align}
\max_{p^{(b)}} \quad & \sum_{n=1}^{N} \sum_{c=1}^{C} r_{n,c}^{(b)} \label{eqn8a}\\ 
\textrm{s.t.} \quad & 0 < p_{n}^{(b)} \le P^{\max}, \quad \forall n \in \mathcal{N}. \label{eqn8b}
\end{align} 
\end{subequations}

\subsection{Sub-problem 1}
\noindent
\textit{Transformation of Classical Problem into QUBO Formulation:}
In order to transform the classical problem into a QUBO formulation, we introduce penalty parameters $\lambda_{0}$ and $\lambda_{1}$. Using these parameters, \textit{Sub-problem 1} is reformulated as follows:
\begin{subequations} 
\begin{align}
\max_{\zeta^{(b)}} \bigg[& \lambda_{0} \sum_{n=1}^{N} \sum_{c=1}^{C} r_{n,c}^{(b)} + \lambda_{1} \sum_{n=1}^{N} \bigg(\sum_{c=1}^{C} \zeta_{n,c}^{(b)} - 1\bigg)^{2}\bigg] \label{eqn9a}
\end{align} 
\end{subequations}
In \eqref{eqn9a}, the parameter $\lambda_{0}$ ensures effective consideration of the quantum devices' channel selection in the optimization process of the objective function, while the parameter $\lambda_{1}$ enforces the constraint on channel selection. It is to note that a problem must be expressed in a specific quadratic form in order to be solved by implementing QAOA, which is
\begin{align}
    \min_{\zeta^{(b)}} {\zeta^{(b)}}^{T} Q \zeta^{(b)}. \label{eqn10}
\end{align}
From \eqref{eqn10}, we can say that QAOA is typically designed to minimize a particular cost function. The cost function is constructed in such a way that its minimum value corresponds to the optimal solution of the problem. However, this work aims to maximize sum-rate of the QFL framework. Hence, we convert our maximization problem into a minimization problem by taking a negative of the original function as follows:
\begin{subequations} 
\begin{align}
\min_{\zeta^{(b)}} - \bigg[& \lambda_{0} \sum_{n=1}^{N} \sum_{c=1}^{C} r_{n,c}^{(b)} + \lambda_{1} \sum_{n=1}^{N} \bigg(\sum_{c=1}^{C} \zeta_{n,c}^{(b)} - 1\bigg)^{2}\bigg] \label{eqn11a}
\end{align} 
\end{subequations}
or,
\begin{subequations} 
\begin{align}
\min_{\zeta^{(b)}} \bigg[&\underbrace{- \lambda_{0} \sum_{n=1}^{N} \sum_{c=1}^{C} r_{n,c}^{(b)}}_{\text{(a)}} \underbrace{- \lambda_{1} \sum_{n=1}^{N} \bigg(\sum_{c=1}^{C} \zeta_{n,c}^{(b)} - 1\bigg)^{2}}_{\text{(b)}}\bigg] \label{eqn12a}
\end{align} 
\end{subequations}
We now construct the Q matrix based on the problem formulation in \eqref{eqn11a}.

\noindent
\textit{Construction of Corresponding Q Matrix:}
The Q matrix for a QUBO problem is a representation of the corresponding objective function. Hence, in case of \textit{Sub-problem 1}, the Q matrix embodies the objective function for optimizing channel selection of devices in our considered QFL network. In order to construct the Q matrix, we expand each term in \eqref{eqn11a} as follows:

\noindent
\textit{\underline{(a):}}
{
\footnotesize
\begin{align}
    &- \lambda_{0} \sum_{n=1}^{N} \sum_{c=1}^{C} r_{n,c}^{(b)} \nonumber \\
    &= - \lambda_{0} \sum_{n=1}^{N} \sum_{c=1}^{C} B \log_{2}(1 + \gamma_{n,c}^{(b)}) \nonumber \\
    &= - \lambda_{0} B \sum_{n=1}^{N} \sum_{c=1}^{C} \log_{2}\bigg(1 + \frac{\zeta_{n,c}^{(b)} h_{n,c}^{(b)} p_{n}^{(b)}}{\sigma^{2} + \sum_{m \in \mathcal{N},m \neq n} \zeta_{m,c}^{(b)} h_{m,n,c}^{b} p_{m}^{(b)}}\bigg). \label{eqn13}
\end{align}
}
Using Taylor expansion, we write \eqref{eqn13} as,
\begin{align}
    - \lambda_{0} B \sum_{n=1}^{N} \sum_{c=1}^{C} \frac{\zeta_{n,c}^{(b)} h_{n,c}^{(b)} p_{n}^{(b)}}{\ln{2}(\sigma^{2} + \sum_{m \in \mathcal{N},m \neq n} \zeta_{m,c}^{(b)} h_{m,n,c}^{b} p_{m}^{(b)})}. \label{eqn14}
\end{align}
\textcolor{black}{Here, using the first-order Taylor approximation is sufficient, as the large-scale, multi-channel NOMA uplink system typically operates under low-SINR conditions due to co-channel interference among multiple devices sharing the same channel. This operating regime ensures the approximation remains accurate while enabling a tractable QUBO-based formulation for QAOA. Moreover, this linearization technique has been widely employed in prior wireless optimization works (e.g., \cite{tran2021uav, tan2011spectrum, d2018learning}) to simplify logarithmic rate expressions under similar interference-limited scenarios.}

\noindent
\textit{\underline{(b):}}
\begin{align}
    &- \lambda_{1} \sum_{n=1}^{N} \bigg(\sum_{c=1}^{C} \zeta_{n,c}^{(b)} - 1\bigg)^{2} \nonumber \\
    &=- \lambda_{1} \sum_{n=1}^{N} \bigg(\sum_{c=1}^{C} {\zeta_{n,c}^{(b)}}^{2} - 2 \zeta_{n,c}^{(b)} + 1\bigg). \label{eqn15}
\end{align}

Given binary variable $\zeta_{n,c}^{b}$, where $N$ denotes the number of devices and $C$ is the number of channels, the matrix entries are determined as follows:

The diagonal entries, denoted by $Q_{(n,c),(n,c)}$, represent linear terms associated with individual channel selection decisions. They are calculated as
\begin{align}
    Q_{(n,c),(n,c)} =& -\lambda_{0} B \frac{ h_{n,c}^{(b)} p_{n}^{(b)}}{\ln{2}(\sigma^{2} + \sum_{m \in \mathcal{N},m \neq n} \zeta_{m,c}^{(b)} h_{m,n,c}^{b} p_{m}^{(b)})} \nonumber \\
    &\hspace{8em}+2 \lambda_{1},
\end{align}
where the term $-\lambda_{0} B \frac{ h_{n,c}^{(b)} p_{n}^{(b)}}{\ln{2}(\sigma^{2} + \sum_{m \in \mathcal{N},m \neq n} \zeta_{m,c}^{(b)} h_{m,n,c}^{b} p_{m}^{(b)})}$ applies a penalty related to the channel gain, transmit power and other noise terms associated with device $n$. Besides, the term $2 \lambda_{1}$ enforces a penalty regarding the channel selection constraint of device $n$.

The off-diagonal entries, denoted by $Q_{(n,c),(x,y)}$, depicts quadratic interactions between channel selection decisions. They are calculated as
\begin{align}
    Q_{(n,c),(x,y)} = 
    \begin{cases} 
        -\lambda_{1}, & \text{if } n = x \hspace{1em} \text{and} \hspace{1em} c \neq y, \\
        0, & \text{otherwise,}
    \end{cases}
\end{align}
where the term $-\lambda_{1}$ is an availability constraint penalty applied when the same device $n$ is allocated different channels $c$ and $y$. This penalty ensures that each device is assigned only one channel. All other off-diagonal elements are set to 0 as they represent independent decisions without any interaction.

\noindent
\textit{Transformation of QUBO Formulation into Hamiltonian Expression:}
In order to transform the QUBO formulation into a Hamiltonian expression, we replace each binary variable $\zeta_{n,c}^{(b)} \in \{0,1\}$ with spin variable $z_{n,c}^{(b)} \in \{-1,1\}$, where
\begin{align}
    \zeta_{n,c}^{(b)} = \frac{1 - z_{n,c}^{(b)}}{2}. \label{eqn18}
\end{align}
We now have to substitute \eqref{eqn18} into \eqref{eqn12a}. In order to do that, we substitute \eqref{eqn18} into \eqref{eqn14} and \eqref{eqn15}. Putting \eqref{eqn18} into \eqref{eqn14}, we get,
\begin{align}
    \hspace{-.2em}- \lambda_{0} B \sum_{n=1}^{N} \sum_{c=1}^{C} \frac{(1 - z_{n,c}^{(b)}) h_{n,c}^{(b)} p_{n}^{(b)}}{2\ln{2}(\sigma^{2} + \sum_{m \in \mathcal{N},m \neq n} (\frac{1 - z_{m,c}^{(b)}}{2}) h_{m,n,c}^{b} p_{m}^{(b)})}. \label{eqn19}
\end{align}
When we expand the expression in \eqref{eqn19}, it yields a combination of constants and linear terms in $z_{n,c}^{(b)}$. Similarly, substituting \eqref{eqn18} into \eqref{eqn15}, we get,
\begin{align}
    - \lambda_{1} \sum_{n=1}^{N} \bigg(\sum_{c=1}^{C} {\bigg(\frac{1 - z_{n,c}^{(b)}}{2}\bigg)}^{2} - 2 \bigg(\frac{1 - z_{n,c}^{(b)}}{2}\bigg) + 1\bigg). \label{eqn20}
\end{align}
Expanding the expression in \eqref{eqn20} results in a combination of constants, linear terms in $z_{n,c}^{(b)}$, and quadratic terms in $z_{n,c}^{(b)} z_{n,e}^{(b)}$. After expanding all terms, they are subsequently restructured into a Hamiltonian format where each spin variable $z_{n,c}^{(b)}$ is represented by a corresponding Pauli Z operator. Specifically, every linear term involving $z_{n,c}^{(b)}$ is mapped to an individual $Z_{n,c}^{(b)}$ operator, and each quadratic term involving products such as $z_{n,c}^{(b)} z_{n,e}^{(b)}$ is expressed as the product of two Pauli Z operators, $Z_{n,c}^{(b)} Z_{n,e}^{(b)}$. Consequently, the Hamiltonian is formulated as:
\begin{align}
    H_{C} = \sum_{i} I_{i}Z_{i} + \sum_{i < j} J_{ij} Z_{i}Z_{j},
\end{align}
where $I_{i}$ denotes the coefficients corresponding to each linear $Z_{i}$ term derived from $z_{n,c}^{(b)}$, and $J_{ij}$ are the coefficients for each quadratic $Z_{i} Z_{j}$ term, originating from the quadratic interactions between $z_{n,c}^{(b)}$ and $z_{n,e}^{(b)}$.

\subsection{Sub-Problem 2}
We note that \textcolor{black}{QAOA} is tailored for optimizing binary variables. However, in \textit{Sub-problem 2}, we address the issue of transmit power, which must be transformed into a binary variable to be compatible with QAOA. Hence, we convert $p_{n}^{(b)}$ into binary variable $x_{n}^{(b)}$ as follows:
\begin{align}
    p_{n}^{(b)} = \sum_{i=0}^{q-1}2^{i}x_{n,i}^{(b)},
\end{align}
where $q$ is the number of bits required to convert the decimal value of transmit power into binary. We denote the set of binary variables related to transmit power as $x^{(b)} = \{x_{1,0}^{(b)}, x_{1,1}^{(b)}, \dots, x_{1,q-1}^{(b)}, x_{2,0}^{(b)}, \dots, x_{N,q-1}^{(b)}\}$, and the set of required bits as $\mathcal{\iota} = \{1, 2, \dots, q-1\}$. \textit{Sub-problem 2} now takes the form:

\noindent
\textit{\underline{Sub-problem 2 (equivalent):}}
\begin{subequations} 
\begin{align}
\max_{x^{(b)}} \quad & \sum_{n=1}^{N} \sum_{c=1}^{C} r_{n,c}^{(b)} \label{eqn23a}\\ 
\textrm{s.t.} \quad & x_{n,i}^{(b)} \in \{0,1\}, \quad \forall n \in \mathcal{N}, \forall i \in \mathcal{\iota}, \label{eqn23b}\\
&\sum_{i=0}^{q-1}2^{i}x_{n,i}^{(b)} > 0, \quad \forall n \in \mathcal{N}, \label{eqn23c}\\
&\sum_{i=0}^{q-1}2^{i}x_{n,i}^{(b)} \le P^{\max}, \quad \forall n \in \mathcal{N}, \label{eqn23d}
\end{align} 
\end{subequations}
where 
$
    r_{n,c}^{(b)} = B \log_{2}(1 + \gamma_{n,c}^{(b)}),
$
and
$
    \gamma_{n,c}^{(b)} = \frac{\zeta_{n,c}^{(b)} h_{n,c}^{(b)} \sum_{i=0}^{q-1}2^{i}x_{n,i}^{(b)}}{\sigma^{2} + \sum_{m \in \mathcal{N},m \neq n} \zeta_{m,c}^{(b)} h_{m,n,c}^{b} \sum_{i=0}^{q-1}2^{i}x_{m,i}^{(b)}}.
$

\noindent
\textit{Transformation of Classical Problem into QUBO Formulation:}
We now transform the classical problem into a QUBO formulation by introducing penalty parameters $\lambda_{2}$, $\lambda_{3}$, and $\lambda_{4}$. Thus, \textit{Sub-problem 2 (equivalent)} is restructured as:

$
\max_{x^{(b)}} \bigg[ \lambda_{2} \sum_{n=1}^{N} \sum_{c=1}^{C} r_{n,c}^{(b)} + \lambda_{3} \sum_{n=1}^{N} \bigg(0 - \sum_{i=0}^{q-1}2^{i}x_{n,i}^{(b)}\bigg)^{2} 
+ \lambda_{4} \sum_{n=1}^{N} \bigg(\sum_{i=0}^{q-1}2^{i}x_{n,i}^{(b)} - P^{\max}\bigg)^{2}\bigg].
$
We adjust our maximization problem to a minimization format by negating the original function as follows:
$
\min_{x^{(b)}} - \bigg[ \lambda_{2} \sum_{n=1}^{N} \sum_{c=1}^{C} r_{n,c}^{(b)} + \lambda_{3} \sum_{n=1}^{N} \bigg(- \sum_{i=0}^{q-1}2^{i}x_{n,i}^{(b)}\bigg)^{2}
+ \lambda_{4} \sum_{n=1}^{N} \bigg(\sum_{i=0}^{q-1}2^{i}x_{n,i}^{(b)} - P^{\max}\bigg)^{2}\bigg],
$
or,
\begin{subequations} 
\begin{align}
&\min_{x^{(b)}} \bigg[ \underbrace{- \lambda_{2} \sum_{n=1}^{N} \sum_{c=1}^{C} r_{n,c}^{(b)}}_{\text{(a)}} \underbrace{+ \lambda_{3} \sum_{n=1}^{N} \bigg(\sum_{i=0}^{q-1}2^{i}x_{n,i}^{(b)}\bigg)^{2}}_{\text{(b)}} \nonumber \\
&\hspace{5em}\underbrace{- \lambda_{4} \sum_{n=1}^{N} \bigg(\sum_{i=0}^{q-1}2^{i}x_{n,i}^{(b)} - P^{\max}\bigg)^{2}}_{\text{(c)}}\bigg]. \label{eqn28a}
\end{align} 
\end{subequations}
We now develop the Q matrix based on the problem formulation outlined in \eqref{eqn28a}. 

\noindent
\textit{Construction of Corresponding Q Matrix:}
In case of \textit{Sub-problem 2}, the Q matrix represent the objective function for optimizing transmit power of devices in our QFL framework. To build the Q matrix, we expand each term in \eqref{eqn11a} as follows:

\noindent
\textit{\underline{(a):}}
\begin{align}
    &- \lambda_{2} \sum_{n=1}^{N} \sum_{c=1}^{C} r_{n,c}^{(b)} = - \lambda_{2} \sum_{n=1}^{N} \sum_{c=1}^{C} B \log_{2}(1 + \gamma_{n,c}^{(b)}) \nonumber \\
    &= - \lambda_{2} B \sum_{n=1}^{N} \sum_{c=1}^{C} \log_{2}\bigg(1 \nonumber \\
    &+ \frac{\zeta_{n,c}^{(b)} h_{n,c}^{(b)} \sum_{i=0}^{q-1}2^{i}x_{n,i}^{(b)}}{\sigma^{2} + \sum_{m \in \mathcal{N},m \neq n} \zeta_{m,c}^{(b)} h_{m,n,c}^{b} \sum_{i=0}^{q-1}2^{i}x_{m,i}^{(b)}}\bigg). \label{eqn29}
\end{align}

Using Taylor expansion, we write \eqref{eqn29} as,
{
\footnotesize
\begin{align}
    - \lambda_{2} B \sum_{n=1}^{N} \sum_{c=1}^{C} \frac{\zeta_{n,c}^{(b)} h_{n,c}^{(b)} \sum_{i=0}^{q-1}2^{i}x_{n,i}^{(b)}}{\ln{2}(\sigma^{2} + \sum_{m \in \mathcal{N},m \neq n} \zeta_{m,c}^{(b)} h_{m,n,c}^{b} \sum_{i=0}^{q-1}2^{i}x_{m,i}^{(b)})}. \label{eqn30}
\end{align}
}
\textit{\underline{(b):}}
\begin{align}
    \lambda_{3} \sum_{n=1}^{N} \bigg(\sum_{i=0}^{q-1}2^{i}x_{n,i}^{(b)}\bigg)^{2} = \lambda_{3} \sum_{n=1}^{N} \sum_{i=0}^{q-1} \bigg(2^{i}\bigg)^{2} \bigg(x_{n,i}^{(b)}\bigg)^{2}. \label{eqn31}
\end{align}
\textit{\underline{(c):}}
\begin{align}
    &- \lambda_{4} \sum_{n=1}^{N} \bigg(\sum_{i=0}^{q-1}2^{i}x_{n,i}^{(b)} - P^{\max}\bigg)^{2} \nonumber \\
    &= - \lambda_{4} \sum_{n=1}^{N} \bigg(\sum_{i=0}^{q-1} \bigg(2^{i}\bigg)^{2} \bigg(x_{n,i}^{(b)}\bigg)^{2} - 2 \sum_{i=0}^{q-1}2^{i}x_{n,i}^{(b)} P^{\max} \nonumber \\
    &\hspace{10em} + \bigg(P^{\max}\bigg)^{2}\bigg). \label{eqn32}
\end{align}
Considering the binary variable $x_{n,i}^{b}$, with $N$ representing the number of devices and $(q-1)$ indicating the number of required bits for expressing transmit power of a device in binary, the matrix entries are determined as follows: 

We represent the diagonal entries as $Q_{(n,i),(n,i)}$ which depicts linear terms associated with individual transmit power. They are calculated as
\begin{align}
    &Q_{(n,i),(n,i)} \nonumber \\
    &= - \lambda_{2} B \frac{\zeta_{n,c}^{(b)} h_{n,c}^{(b)} \sum_{i=0}^{q-1}2^{i}x_{n,i}^{(b)}}{\ln{2}(\sigma^{2} + \sum_{m \in \mathcal{N},m \neq n} \zeta_{m,c}^{(b)} h_{m,n,c}^{b} \sum_{i=0}^{q-1}2^{i}x_{m,i}^{(b)})}, \nonumber \\
    &\hspace{10em} - 2 \sum_{i=0}^{q-1}2^{i}x_{n,i}^{(b)} P^{\max},
\end{align}
where the term {\footnotesize$- \lambda_{2} B \frac{\zeta_{n,c}^{(b)} h_{n,c}^{(b)} \sum_{i=0}^{q-1}2^{i}x_{n,i}^{(b)}}{\ln{2}(\sigma^{2} + \sum_{m \in \mathcal{N},m \neq n} \zeta_{m,c}^{(b)} h_{m,n,c}^{b} \sum_{i=0}^{q-1}2^{i}x_{m,i}^{(b)})}$} enacts a penalty related to the channel selection, channel gain, transmit power and other noise terms associated with device $n$. Besides, the term $- 2 \sum_{i=0}^{q-1}2^{i}x_{n,i}^{(b)} P^{\max}$ employs a penalty regarding the transmit power constraint of device $n$.

The off-diagonal entries, denoted by $Q_{(n,i),(x,y)}$, depicts quadratic interactions between channel selection decisions. They are calculated as
\begin{align}
    Q_{(n,i),(x,y)} = 
    \begin{cases} 
        (\lambda_{3} - \lambda_{4}) \sum_{i=0}^{q-1} \bigg(2^{i}\bigg)^{2}, \text{if } &n = x, \nonumber \\
        &i \neq y, \\
        0, & \text{otherwise,}
    \end{cases}
\end{align}
where the term $(\lambda_{3} - \lambda_{4})$ is a constraint penalty regarding transmit power of device $n$. All other off-diagonal entries are set to 0 as they represent independent decisions without any interaction.

\noindent
\textit{Transformation of QUBO Formulation into Hamiltonian Expression:}
To convert the QUBO formulation into a Hamiltonian expression, we replace each binary variable $x_{n,i}^{(b)} \in \{0,1\}$ by spin variable $z_{n,i}^{(b)} \in \{-1,1\}$, where
\begin{align}
    x_{n,i}^{(b)} = \frac{1 - z_{n,i}^{(b)}}{2}. \label{eqn34}
\end{align}
We then insert \eqref{eqn34} into \eqref{eqn28a}. To facilitate this, we replace \eqref{eqn34} in \eqref{eqn30}, \eqref{eqn31}, and \eqref{eqn32}. Putting \eqref{eqn34} into \eqref{eqn30}, we get,
{
\footnotesize
\begin{align}
    - \lambda_{2} B \sum_{n=1}^{N} \sum_{c=1}^{C} \frac{\zeta_{n,c}^{(b)} h_{n,c}^{(b)} \sum_{i=0}^{q-1}2^{i}(1 - z_{n,i}^{(b)})}{2\ln{2}(\sigma^{2} + \sum_{m \in \mathcal{N},m \neq n} \zeta_{m,c}^{(b)} h_{m,n,c}^{b} \sum_{i=0}^{q-1}2^{i}\frac{1 - z_{n,i}^{(b)}}{2})}. \label{eqn35}
\end{align}
}
Expanding the expression in \eqref{eqn35} yields a combination of constants and linear terms in $z_{n,i}^{(b)}$. Likewise, substituting \eqref{eqn34} into \eqref{eqn31}, we get,
\begin{align}
    \lambda_{3} \sum_{n=1}^{N} \sum_{i=0}^{q-1} \bigg(2^{i}\bigg)^{2} \bigg(\frac{1 - z_{n,i}^{(b)}}{2}\bigg)^{2}. \label{eqn36}
\end{align}
Expanding the expression in \eqref{eqn36} results in a combination of constants, linear terms in $z_{n,i}^{(b)}$, and quadratic terms in $z_{n,i}^{(b)} z_{n,f}^{(b)}$. Similarly, Putting \eqref{eqn34} into \eqref{eqn32}, we get,
{
\footnotesize
\begin{align}
    &- \lambda_{4} \sum_{n=1}^{N} \bigg(\sum_{i=0}^{q-1} \bigg(2^{i}\bigg)^{2} \bigg(\frac{1 - z_{n,i}^{(b)}}{2}\bigg)^{2} - 2 \sum_{i=0}^{q-1}2^{i}\bigg(\frac{1 - z_{n,i}^{(b)}}{2}\bigg) P^{\max} \nonumber \\
    &\hspace{10em} + \bigg(P^{\max}\bigg)^{2}\bigg). \label{eqn37}
\end{align}
}
When the expression in \eqref{eqn37} is expanded, it produces constants, linear terms represented by $z_{n,i}^{(b)}$, and quadratic terms involving $z_{n,i}^{(b)} z_{n,f}^{(b)}$.
Thus, after expanding all terms, they are formed into a Hamiltonian format where each spin variable $z_{n,i}^{(b)}$ is represented by a corresponding Pauli Z operator. Specifically, every linear term involving $z_{n,i}^{(b)}$ is mapped to an individual $Z_{n,i}^{(b)}$ operator, and each quadratic term involving products such as $z_{n,i}^{(b)} z_{n,f}^{(b)}$ is expressed as the product of two Pauli Z operators, $Z_{n,c}^{(b)} Z_{n,f}^{(b)}$. Consequently, the Hamiltonian is formulated as:
\begin{align}
    H_{C} = \sum_{\rho} I_{\rho}Z_{\rho} + \sum_{\rho < \phi} J_{\rho \phi} Z_{\rho}Z_{\phi},
\end{align}
where $I_{\rho}$ denotes the coefficients corresponding to each linear $Z_{\rho}$ term derived from $z_{n,i}^{(b)}$, and $J_{\rho \phi}$ are the coefficients for each quadratic $Z_{\rho} Z_{\phi}$ term, originating from the quadratic interactions between $z_{n,i}^{(b)}$ and $z_{n,f}^{(b)}$.

\subsection{Proposed Algorithm}
Based on our development, we propose a quantum-centric optimization algorithm tailored to address \textit{Sub-problem 1} and \textit{Sub-problem 2} using the QAOA, as presented in Algorithm \eqref{alg:qaoa}. As mentioned before, we convert each sub-problem as QUBO formulation, eventually transforming them into Hamiltonian expression. The algorithm iteratively optimizes an ansatz circuit using gradient descent optimizer. The final solution yields the optimized binary variable matrix (channel selection matrix $\zeta^{(b)}$ for \textit{Sub-problem 1} or transmit power matrix $x^{(b)}$ for \textit{Sub-problem 2}), as well as the minimized objective value. \textcolor{black}{Therefore, Algorithm \ref{algorithm2} presents the QAOA-based quantum optimization subroutine, which is employed as a solver within Algorithm 3. It efficiently addresses channel selection and power allocation sub-problems using PQC.} 

Moreover, we solve the original \textit{Problem 1} in an iterative fashion employing the BCD technique, as outlined in Algorithm \eqref{alg:bcd}. At each iteration, the algorithm updates the variables $\zeta^{(b)}$ and $x^{(b)}$ \textcolor{black}{using QAOA-based Algorithm \ref{algorithm2}}. The process continues until convergence to an optimal solution with minimized objective value. \textit{We will also validate the \textbf{computational complexity} of our QAOA-based joint optimization algorithm through numerical results presented in \ref{computationoverhead}.}

\begin{algorithm}
\footnotesize
\caption{QAOA-based Algorithm for Solving \textit{Sub-problem 1} and \textit{Sub-problem 2}} \label{alg:qaoa}
\begin{algorithmic}[1]
\State \textbf{Input:} QUBO matrix $Q$, number of qubits, number of QAOA layers $p$, initial parameters $(\beta_0, \gamma_0)$, quantum backend, other network parameters.

\State Design a parameterized ansatz circuit for QAOA with $p$ layers, initialized with parameter vector $(\beta_0, \gamma_0)$.
\State Initialize a classical optimizer, specifically the gradient descent optimizer, for iterative parameter adjustment.

\Function{Obj}{$\beta$, $\gamma$}
\begin{itemize}
    \item Bind the parameters $(\beta, \gamma)$ to the ansatz circuit.
    \item Execute the circuit on the quantum backend to compute the expectation value $\langle \beta, \gamma | H_C | \beta, \gamma \rangle$.
    \item \Return the measured expectation value as the objective for optimization.
\end{itemize}
\EndFunction

\State Optimize ansatz parameters:
\Repeat
\begin{itemize}
    \item Update $(\beta, \gamma)$ using the gradient descent optimizer on \Call{Obj}{$\beta$, $\gamma$}.
    \item Track the objective function values to observe convergence.
\end{itemize}
\Until{convergence.}
\State \textbf{Output:} Optimized binary variable matrix (channel selection matrix $\zeta^{(b)}$ for \textit{Sub-problem 1} or transmit power matrix $x^{(b)}$ for \textit{Sub-problem 2}), minimized objective value.
\end{algorithmic} \label{algorithm2}
\end{algorithm}

\begin{algorithm}
\footnotesize
\caption{BCD-based Joint Optimization Algorithm} \label{alg:bcd}
\begin{algorithmic}[1]
\State \textbf{Input:} Set the iteration index $i=0$, take necessary inputs required to solve \textit{Sub-problem 1} and \textit{Sub-problem 2} i.e., respective QUBO matrix $Q$, number of qubits, number of QAOA layers $p$, initial parameters $(\beta_0, \gamma_0)$, other network parameters, and quantum backend;
\Repeat
    \Statex Set $i \gets i+1$
    \Statex Solve \textit{Sub-problem 1} to update $\zeta^{b}$ \textcolor{black}{using Algorithm \ref{algorithm2}};
    \Statex Solve \textit{Sub-problem 2} to update $x^{(b)}$ \textcolor{black}{using Algorithm \ref{algorithm2}};
\Until{convergence.}
\State \textbf{Output:} Optimized parameters $({\zeta^{b}}^*, {x^{(b)}}^*)$, minimized objective value.
\end{algorithmic} \label{algorithm3}
\end{algorithm}

\section{Simulation Results and Evaluations} \label{simulationresultsandevaluations}
\subsection{Parameter Setting for QFL Framework}
All simulations to evaluate the performance of our proposed QFL framework were performed on a server configured with an NVIDIA GeForce RTX $4090$ GPU, $64$GB of RAM, and Ubuntu $22.04$ as the operating system. We conduct experiments on the standard MNIST dataset for handwritten digit recognition, comprising $60,000$ training examples and $10,000$ testing examples. \textcolor{black}{It is a publicly available and widely used benchmark in both FL and quantum machine learning research, making it suitable for evaluating convergence behavior under various data heterogeneity settings.} In our simulations, we investigate the convergence properties of QFL using a \textit{TorchQuantum}-based quantum model, which integrates a PQC with standard loss optimization method SGD across 4 qubits. The model's architecture features randomized quantum gates and trainable rotation layers. In our experiments, we explore full device participation scenario within our federated setup. Training utilizes the Adam optimizer at a learning rate of $0.001$. For full device participation, we engage 30 devices. The QFL framework is tested under both IID and non-IID data distributions. Moreover, we conduct experiments using different numbers of quantum measurement shots-1, 40, and 100 on the MNIST dataset to evaluate their effects on the convergence of QFL, specifically examining how quantum shot noise impacts performance metrics such as average loss and accuracy.

\subsection{Parameter Setting for QAOA-based Sum-rate Maximization}
All simulations for the QAOA-based sum-rate maximization were also conducted on the same server equipped with an NVIDIA GeForce RTX $4090$ GPU, $64$GB of RAM, and running Ubuntu $22.04$ as the operating system. \textcolor{black}{All wireless communication parameters used in the QAOA-based simulations are generated in accordance with the comprehensive system model described in Section~\ref{section:systemmodel}. Specifically, the wireless environment is modeled using distance-dependent path loss, 8 dB log-normal shadowing, and temporally correlated Rayleigh fading governed by Jake’s model, as formalized in equations~\eqref{eqn14'} and~\eqref{eqn15'}. The channel allocation and interference scenarios are simulated using a multi-channel NOMA uplink structure, where multiple quantum devices may concurrently access the same channel, leading to co-channel interference. The resulting SINR and achievable data rate for each device are computed based on equations~\eqref{eqn17'} and~\eqref{eqn18'}, ensuring consistency between the theoretical analysis and the simulated environment.} For our simulation setup, we model a system with $C = 4$ channels and the number of devices $N = \{50, 200, 500$\}. The maximum transmit power $P^{\max}$ and noise power $\sigma$ are set to $20$ dBm and $-114$  dBm, respectively \cite{tan2020deep}. The channel correlation factor $\epsilon$ is considered to be $0.6$ \cite{tan2020deep}. Regarding the design of the ansatz, it comprises $p = 2$ layers alternating between problem and mixing Hamiltonians. The problem Hamiltonian involves controlled $RZ$ rotations driven by the $Q$ matrix elements, while the mixing Hamiltonian utilizes $RX$ gates to explore the solution space. Initially, all qubits are prepared in an equal superposition state using Hadamard gates to ensure a uniform exploration starting point. The circuit is parameterized by angles $\beta$ and $\gamma$ across two layers, and the optimization process iteratively adjusts these parameters to minimize the objective function. The simulations are executed in a Python environment, leveraging \textit{PennyLane} for quantum circuit simulation and optimization, alongside \textit{NumPy} for numerical computations and \textit{matplotlib} for visualizing the results. 

\subsection{Numerical Results for QFL Framework Performance Evaluation}
\begin{figure}[h!]
    \centering
    \footnotesize
    \begin{subfigure}[t]{0.49\linewidth} 
        \centering
        \includegraphics[width=\linewidth]{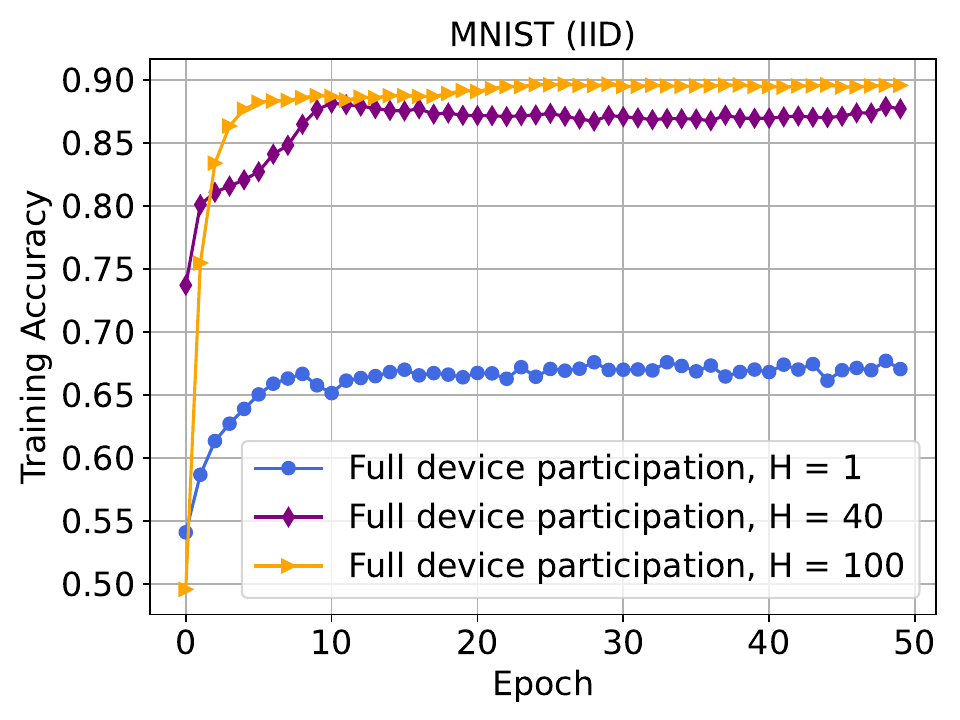}
        \caption{\footnotesize Training accuracy comparison of QFL on MNIST dataset (IID) for full device participation with varying number of quantum measurement shots.}
        \label{fig81}
    \end{subfigure}
    \hfill 
    \begin{subfigure}[t]{0.49\linewidth} 
        \centering
        \includegraphics[width=\linewidth]{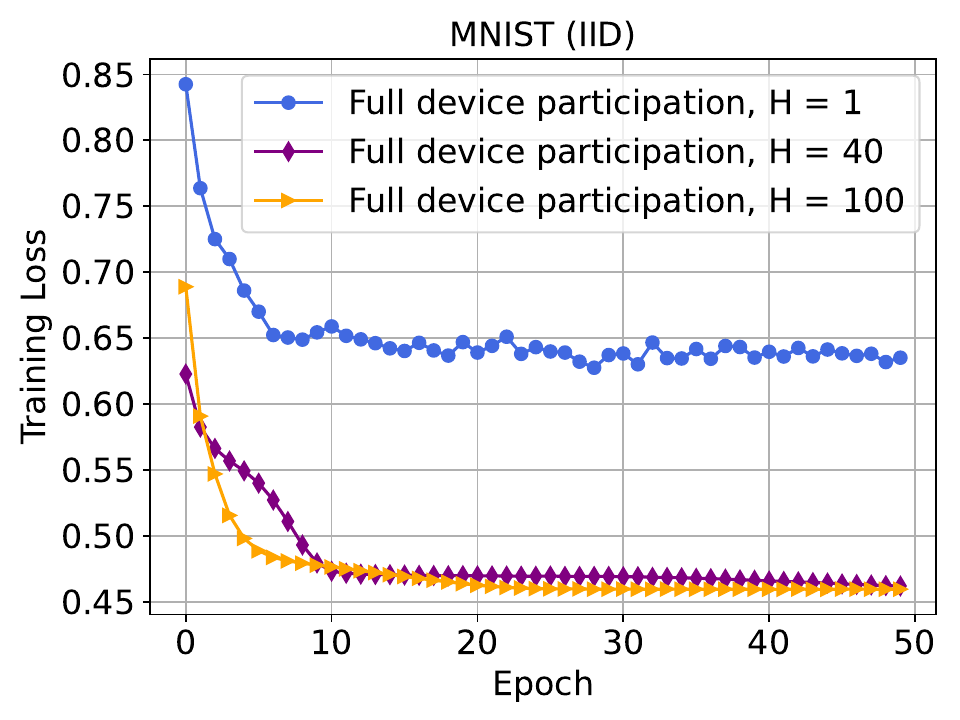}
        \caption{\footnotesize Training loss comparison of QFL on MNIST dataset (IID) for full device participation with varying number of quantum measurement shots.}
        \label{fig82}
    \end{subfigure}
    \caption{\footnotesize Training performance comparison of QFL for MNIST dataset (IID) for full device participation with varying number of quantum measurement shots.}
    \label{fig80}
\end{figure}

\begin{figure}[h!]
    \centering
    \footnotesize
    \begin{subfigure}[t]{0.49\linewidth} 
        \centering
        \includegraphics[width=\linewidth]{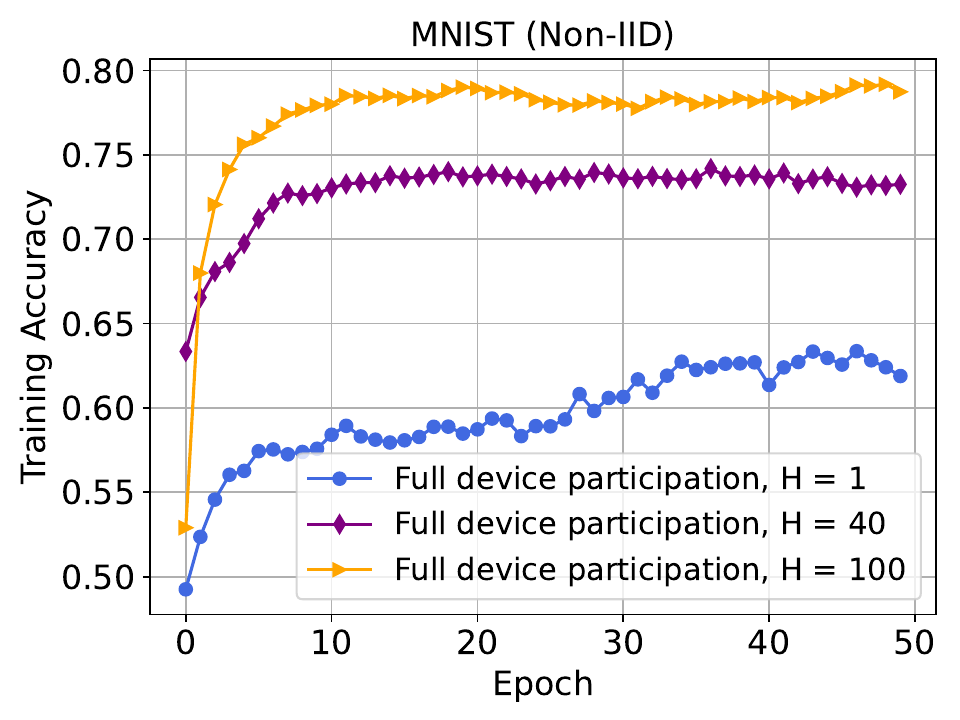}
        \caption{\footnotesize Training accuracy comparison of QFL on MNIST dataset (non-IID) for full device participation with varying number of quantum measurement shots.}
        \label{fig91}
    \end{subfigure}
    \hfill 
    \begin{subfigure}[t]{0.49\linewidth} 
        \centering
        \includegraphics[width=\linewidth]{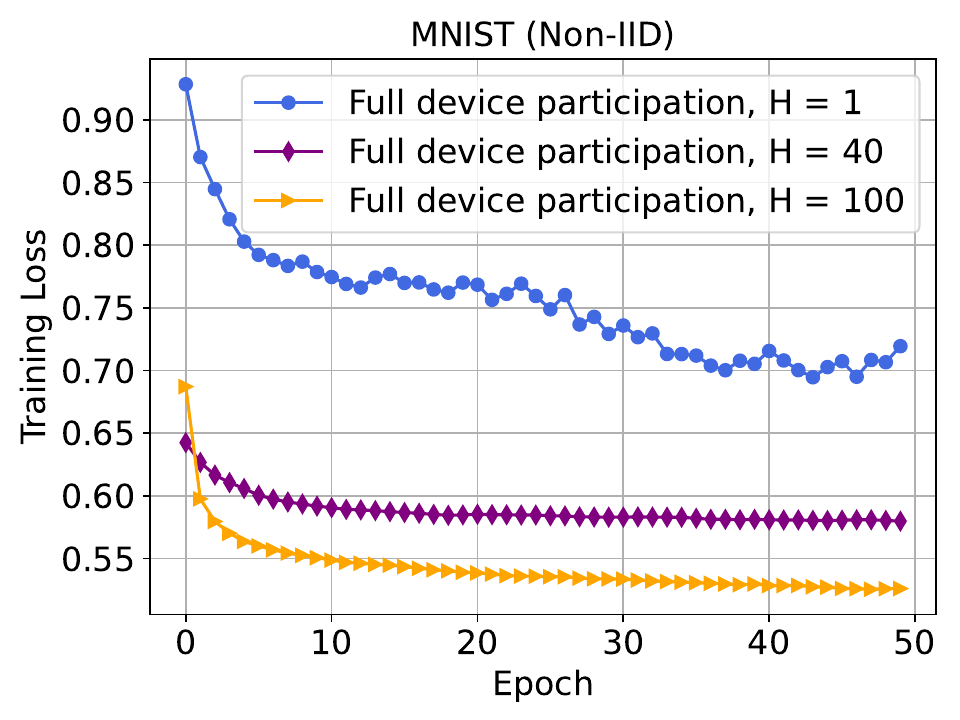}
        \caption{\footnotesize Training loss comparison of QFL on MNIST dataset (non-IID) for full device participation with varying number of quantum measurement shots.}
        \label{fig92}
    \end{subfigure}
    \caption{\footnotesize Training performance comparison of QFL on Cifar10 and MNIST datasets (non-IID) for full device participation with varying number of quantum measurement shots.}
    \label{fig90}
\end{figure}

Fig.~\ref{fig80} illustrates the training performance in QFL on MNIST dataset (IID), showcasing improved accuracy with increasing number of quantum measurement shots under full device participation scenario. Fig.~\ref{fig81} shows the training accuracy for QFL considering full device participation with varying levels of quantum measurement shots ($H$=1, $H$=40, $H$=100). From the plot, it is evident that as the number of quantum measurement shots increases, there is a significant enhancement in training accuracy. This improvement is attributed to the reduced quantum shot noise effect achieved through the aggregation of more measurement outcomes, which enhances the reliability and stability of the quantum computations related to model training. Starting from a single measurement shot ($H$=1) to a higher count ($H$=40), and subsequently increasing further ($H$=100), consistently improves the QFL's accuracy. This result indicates that increasing the number of quantum measurement shots effectively reduces the instability and errors introduced by quantum shot noise, thus substantially enhancing the overall training performance in QFL systems. The loss performance of QFL in Fig.~\ref{fig82} is in harmony with the accuracy performance of it in Fig.~\ref{fig81}. Both figures demonstrate that increasing the number of quantum measurement shots ($H$=1, $H$=40, $H$=100) results in significant enhancements in performance. Increasing the number of measurement shots from $H$=1 to $H$=40, and subsequently to $H$=100, effectively minimizes the errors and variability inherent in quantum measurement, thereby leading to improved loss metrics. This trend shades light on the crucial role of scaling up the number of quantum measurement shots to achieve optimal training outcomes in QFL setups.

Fig.~\ref{fig90} presents training performance of QFL under the non-IID setting on MNIST datasets considering full device participation and varying numbers of quantum measurement shots. The performance trend is similar to the one observed in the IID setting in Fig.~\ref{fig80}. This consistency highlights that, regardless of data distribution differences between IID and non-IID conditions, the influence of increasing quantum measurement shots remains unchanged, systematically enhancing boosting training accuracy and diminishing loss across both data configurations.

\subsection{Numerical Results for QAOA-based Sum-rate Maximization Evaluation}
\begin{figure}[h!]
    \centering
    \footnotesize
    \begin{subfigure}[t]{0.49\linewidth}
        \centering
        \includegraphics[width=\linewidth]{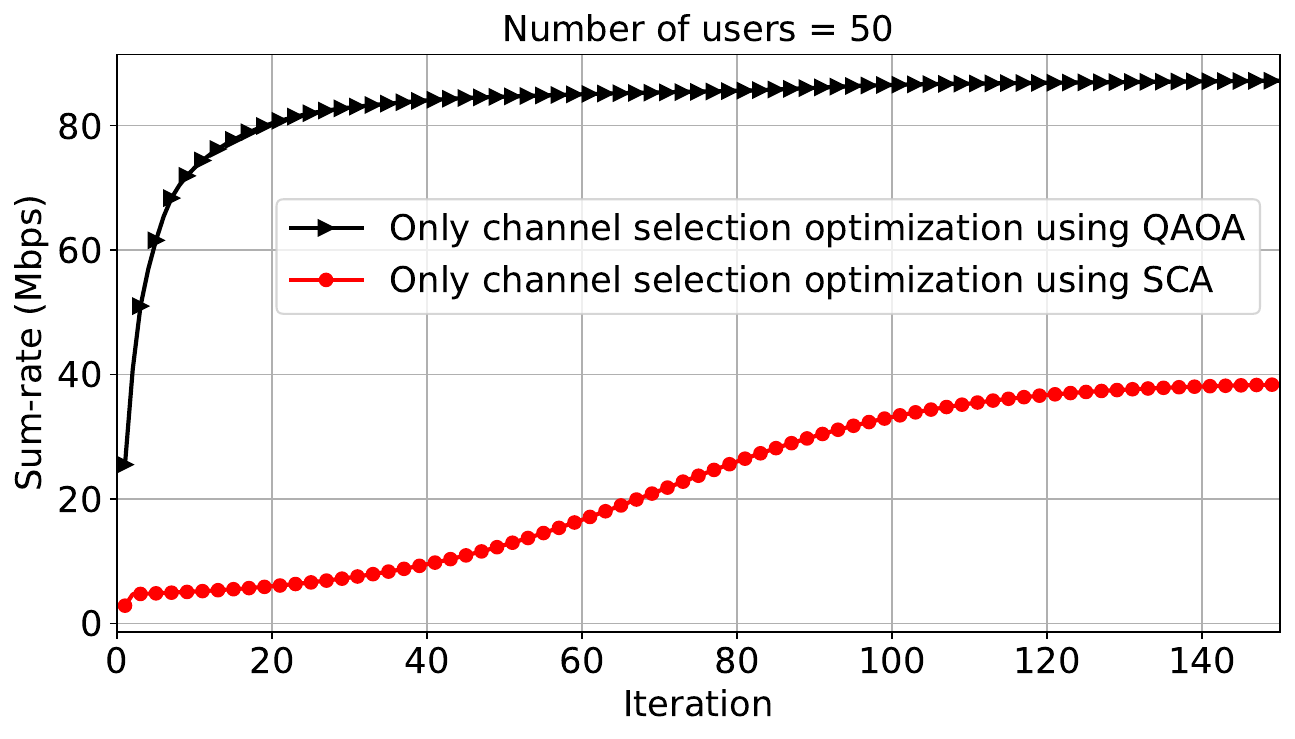}
        \caption{\footnotesize Comparison of channel selection optimization using QAOA and SCA for $N = 50$.}
        \label{fig5}
    \end{subfigure}
    \hfill
    \begin{subfigure}[t]{0.49\linewidth}
        \centering
        \includegraphics[width=\linewidth]{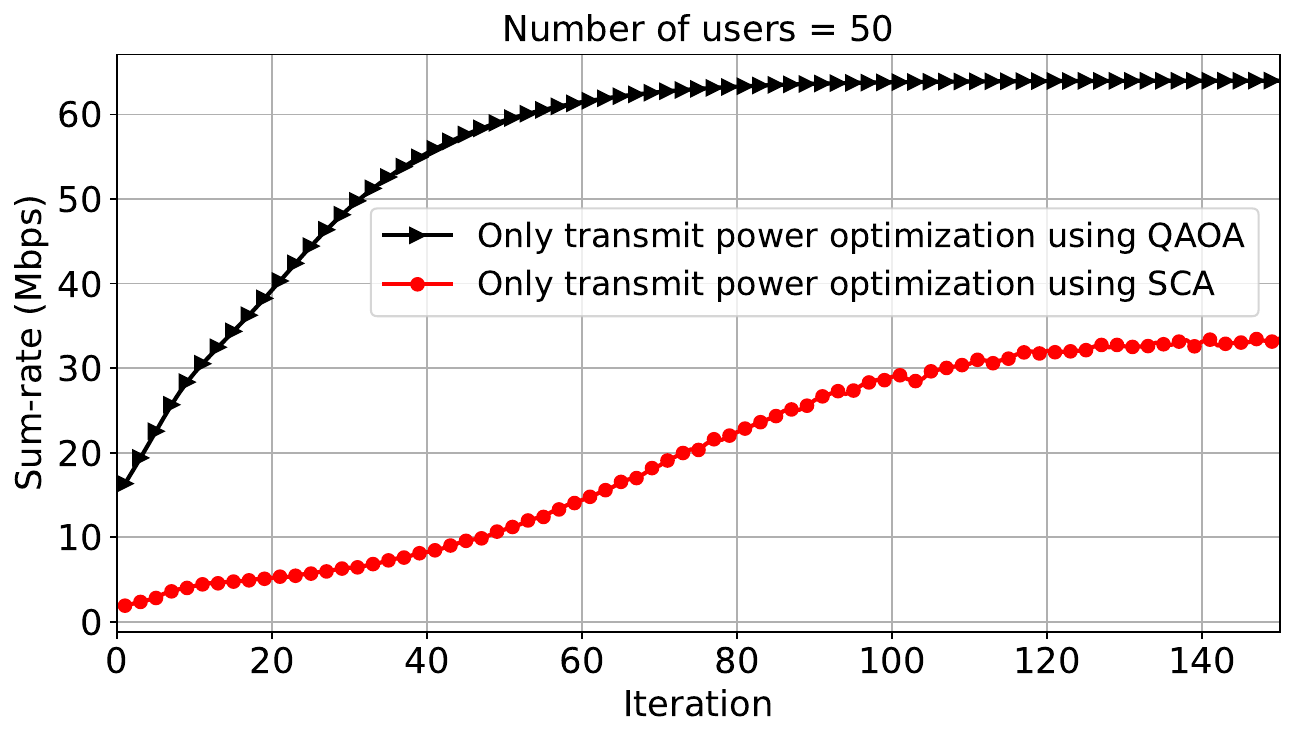}
        \caption{\footnotesize Comparison of transmit power optimization using QAOA and SCA for $N = 50$.}
        \label{fig4}
    \end{subfigure}
    \begin{subfigure}[t]{0.49\linewidth}
        \centering
        \includegraphics[width=\linewidth]{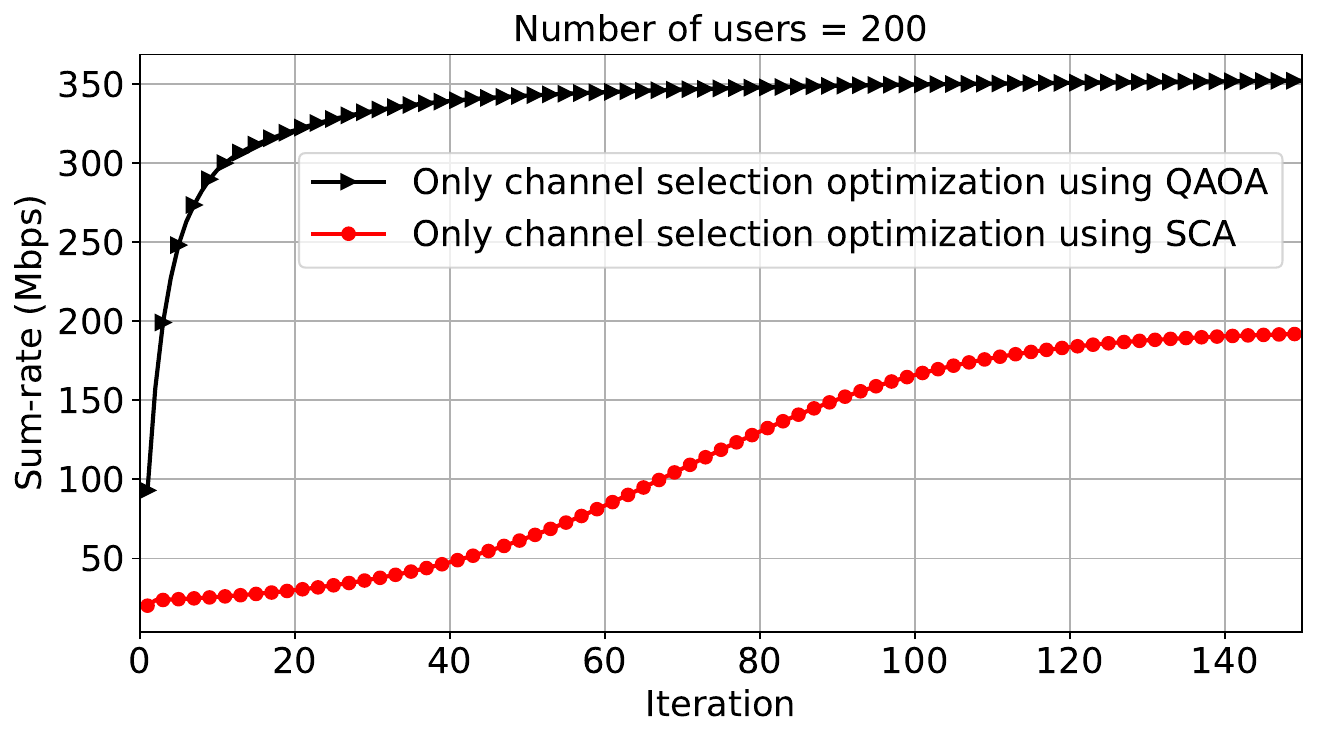}
        \caption{\footnotesize Comparison of channel selection optimization using QAOA and SCA for $N = 200$.}
        \label{fig6}
    \end{subfigure}
    \hfill
    \begin{subfigure}[t]{0.49\linewidth}
        \centering
        \includegraphics[width=\linewidth]{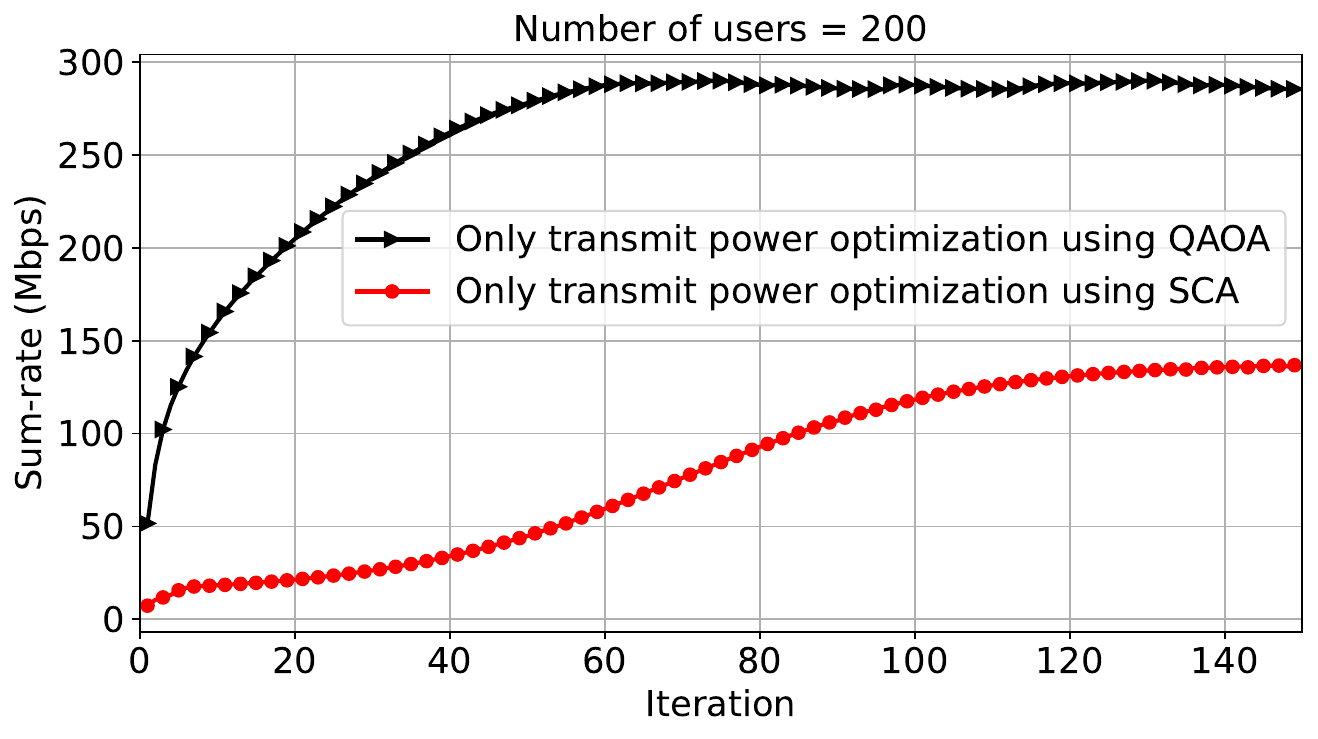}
        \caption{\footnotesize Comparison of transmit power optimization using QAOA and SCA for $N = 200$.}
        \label{fig7}
    \end{subfigure}
    \caption{\footnotesize Comparison of transmit power and channel selection optimization using QAOA and SCA for varying device numbers.}
    \label{figcomp1}
\end{figure}

\begin{figure*}[ht!]
    \centering
    \begin{subfigure}{0.32\textwidth}
        \centering
        \includegraphics[width=\linewidth]{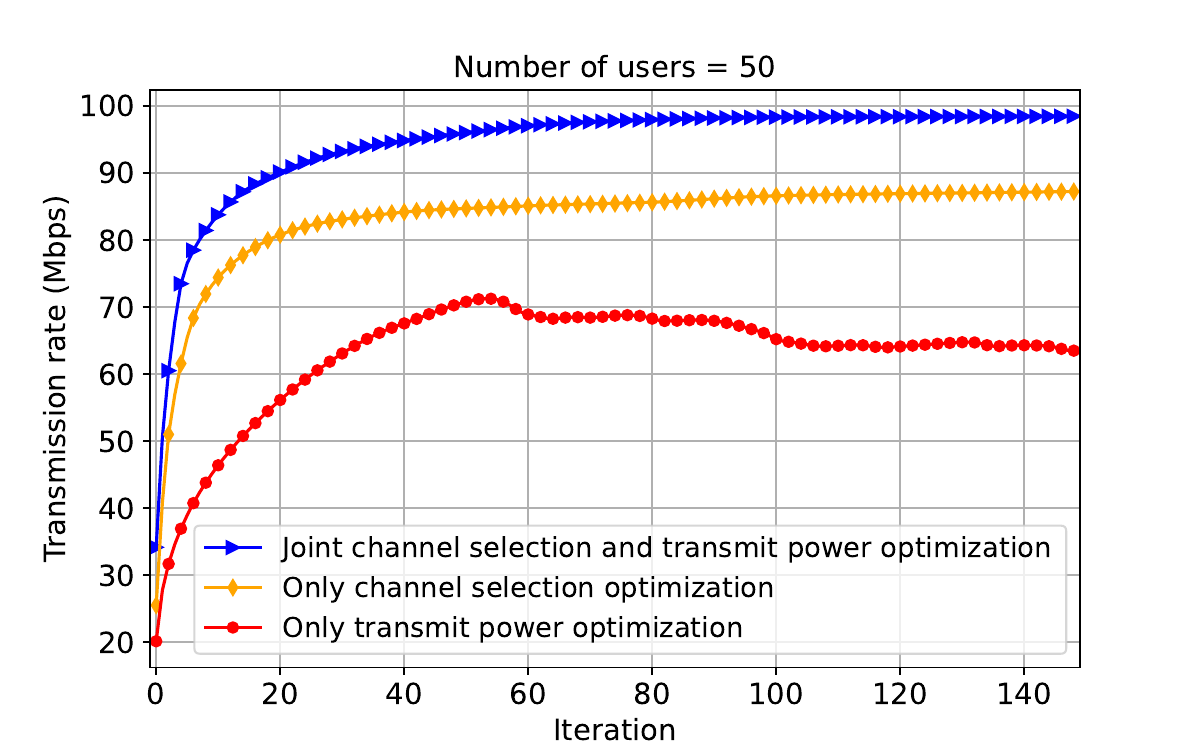} 
        \caption{\footnotesize Comparative analysis of sum-rate performance across optimization schemes for number of devices $N = 50$.}
        \label{fig:sub1}
    \end{subfigure}
    \hfill 
    \begin{subfigure}{0.32\textwidth}
        \centering
        \includegraphics[width=\linewidth]{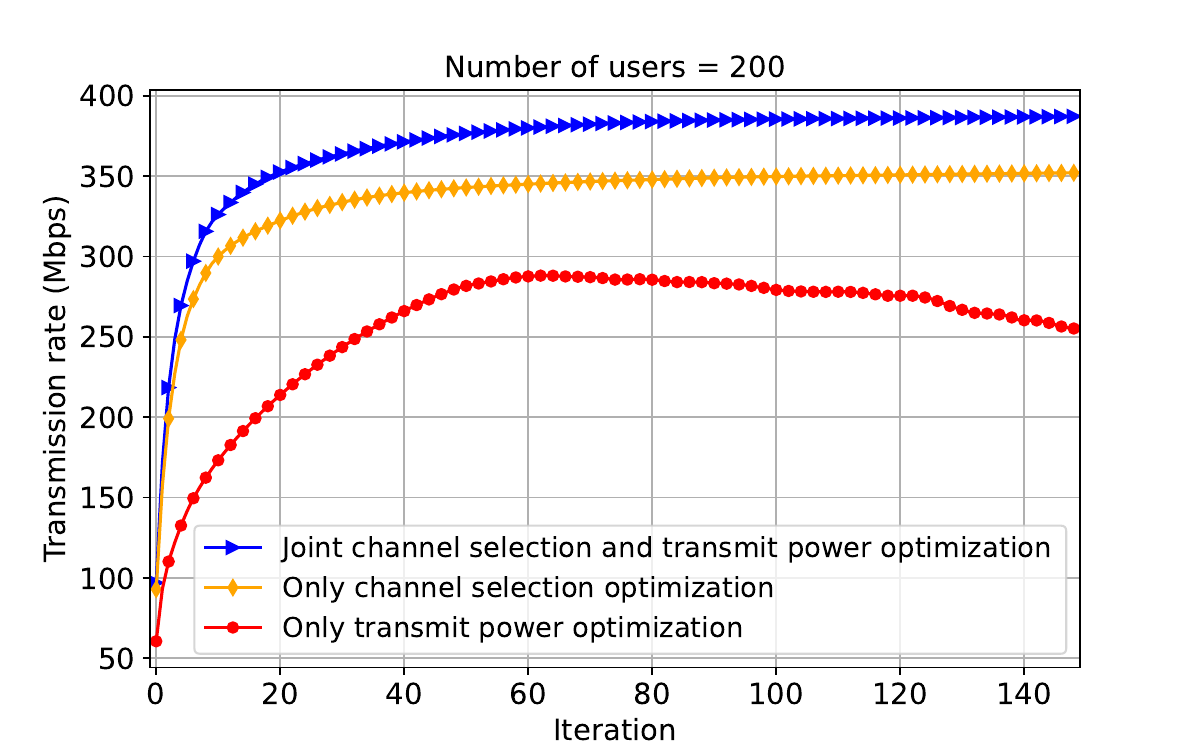} 
        \caption{\footnotesize Comparative analysis of sum-rate performance across optimization schemes for number of devices $N = 200$.}
        \label{fig:sub2}
    \end{subfigure}
    \hfill 
    \begin{subfigure}{0.32\textwidth}
        \centering
        \includegraphics[width=\linewidth]{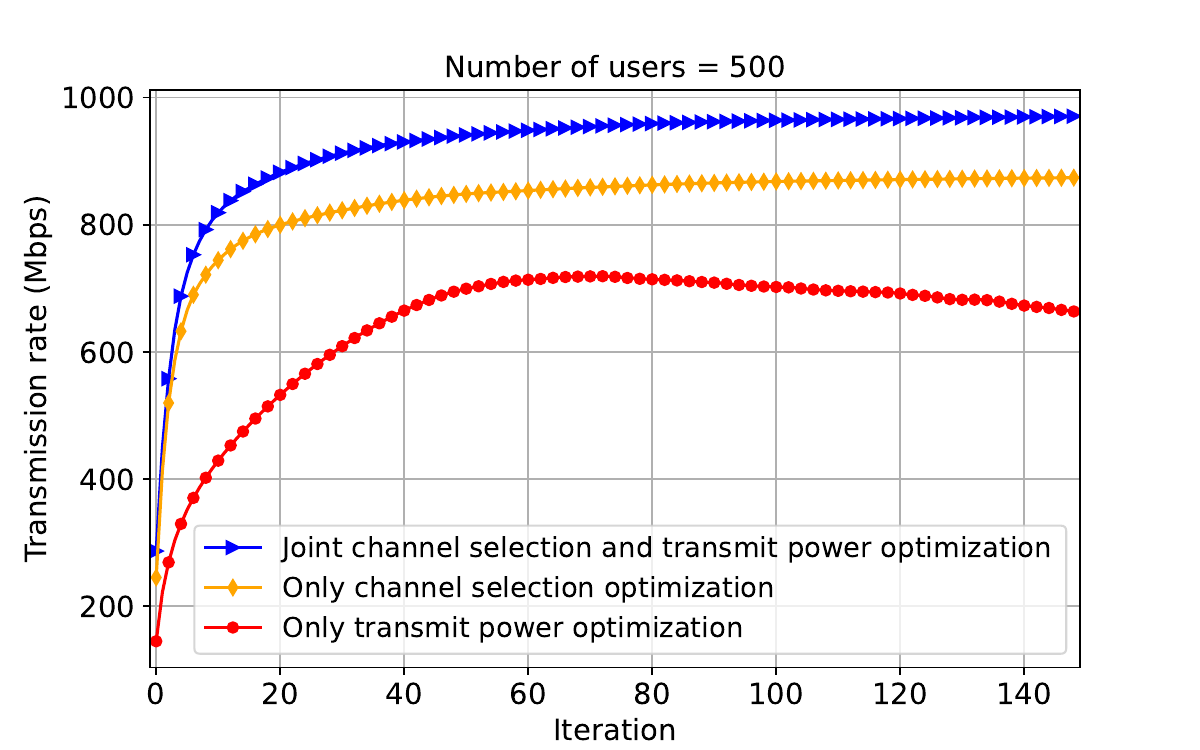} 
        \caption{\footnotesize Comparative analysis of sum-rate performance across optimization schemes for number of devices $N = 500$.}
        \label{fig:sub3}
    \end{subfigure}
    \caption{\footnotesize Comparative analysis of sum-rate performance across optimization schemes for varying number of devices.}
    \label{figcomp2}
\end{figure*}

\begin{figure}[h!]
    \centering
    \footnotesize
    \begin{subfigure}[t]{0.49\linewidth} 
        \centering
        \includegraphics[width=\linewidth]{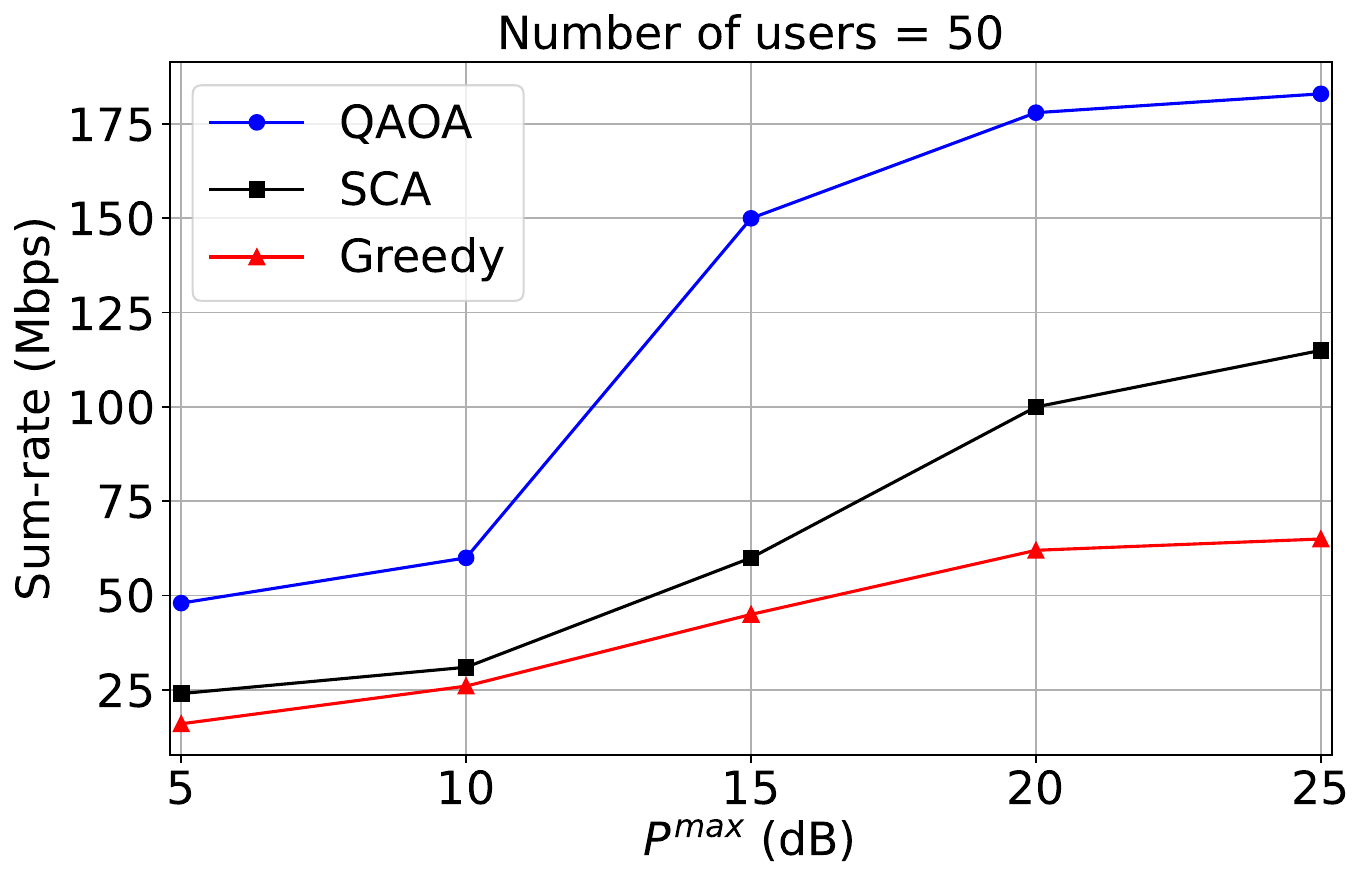}
        \caption{\footnotesize Comparison of sum-rate performance across different schemes as device's maximum transmit power increases for number of devices, $N = 50$.}
        \label{sub11}
    \end{subfigure}
    \hfill 
    \begin{subfigure}[t]{0.49\linewidth} 
        \centering
        \includegraphics[width=\linewidth]{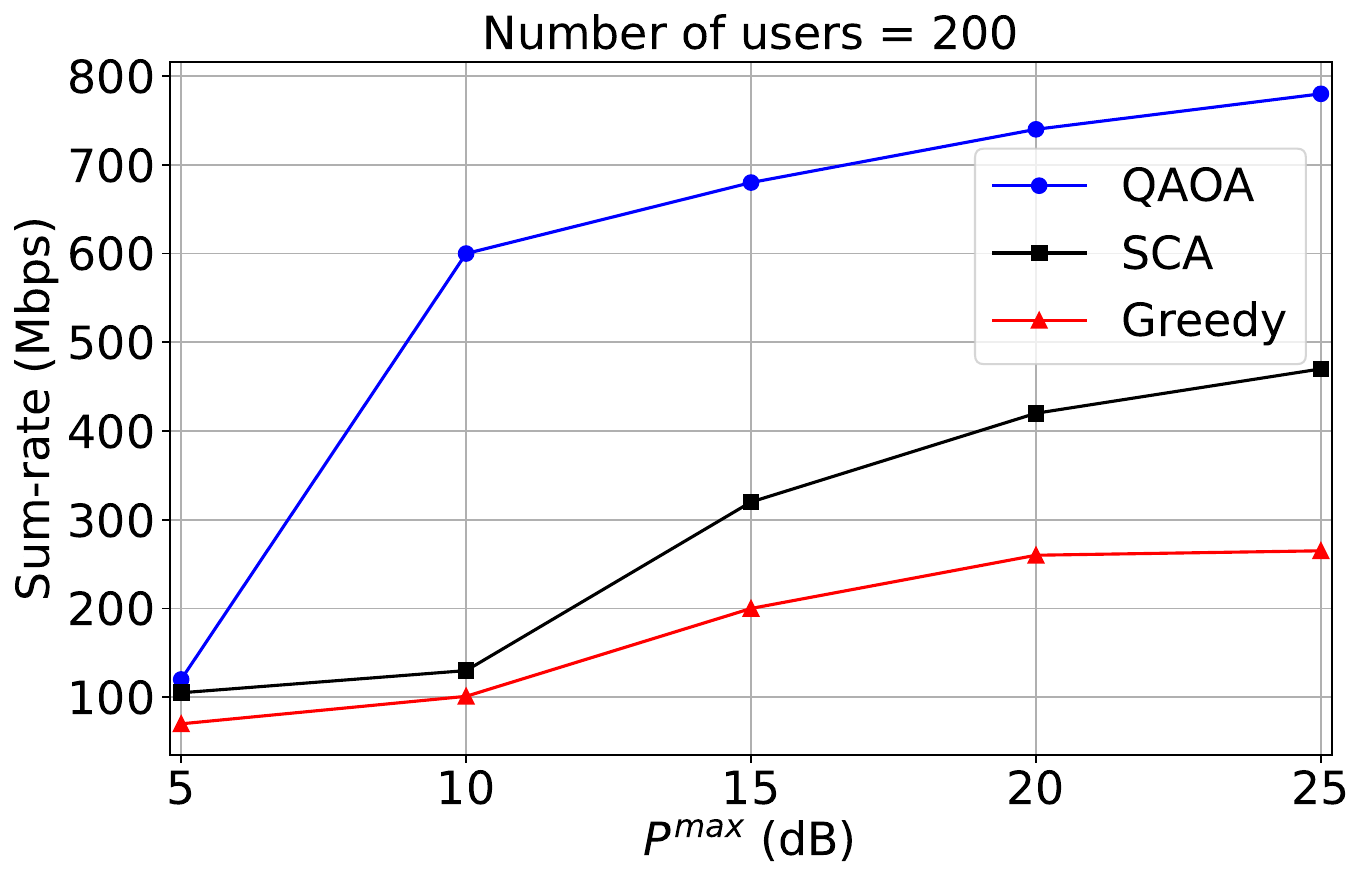}
        \caption{\footnotesize Comparison of sum-rate performance across different schemes as device's maximum transmit power increases for number of devices, $N = 200$.}
        \label{sub12}
    \end{subfigure}
    \caption{\footnotesize Comparison of sum-rate performance across different schemes as devices maximum transmit power increases for varying number of devices.}
    \label{figcomp3}
\end{figure}

Fig.~\ref{figcomp1} shows a comparative analysis of the performance of channel selection and transmit power optimization techniques, QAOA and SCA, over 150 iterations. Simulations are conducted for both  optimization techniques as the number of devices increases from $N = 50$ and $N = 200$ showcasing the scalability and effectiveness of each method under varying network loads. Fig.~\ref{fig5} depicts sum-rate (in Mbps) versus number of iterations, comparing the QAOA with the SCA technique for $N = 200$ devices in channel selection optimization. From the plot, it is evident that the QAOA significantly outperforms SCA in terms of both the sum-rate achieved at convergence and the efficiency of convergence. While QAOA quickly reaches and sustains a sum rate above $80$ Mbps by around the $30^{\text{th}}$ iteration, SCA shows a more gradual increase and levels off at approximately $40$ Mbps by around the $130^{\text{th}}$. We can say that QAOA outperforms SCA in optimizing channel selection, achieving almost $100\%$ increase in sum-rate.

Fig.~\ref{fig4} displays the sum-rate (in Mbps) against the number of iterations, contrasting the performance of the QAOA with the SCA technique for $N = 50$ devices in transmit power optimization. The graph shows that the QAOA surpasses SCA both in the peak sum-rate reached at convergence and in the number of iterations needed reach convergence. Specifically, QAOA achieves and maintains a sum-rate close to $60$ Mbps by approximately the $30^{\text{th}}$ iteration, while SCA shows a slower increase, plateauing around $30$ Mbps by around the $130^{\text{th}}$. Thus, similar to channel selection optimization, QAOA demonstrates superior performance over SCA in transmit power optimization as well, achieving nearly a $100\%$ increase in sum-rate. 

\begin{figure}[h!]
    \centering
    \footnotesize
    \begin{subfigure}[t]{0.49\linewidth} 
        \centering
        \includegraphics[width=\linewidth]{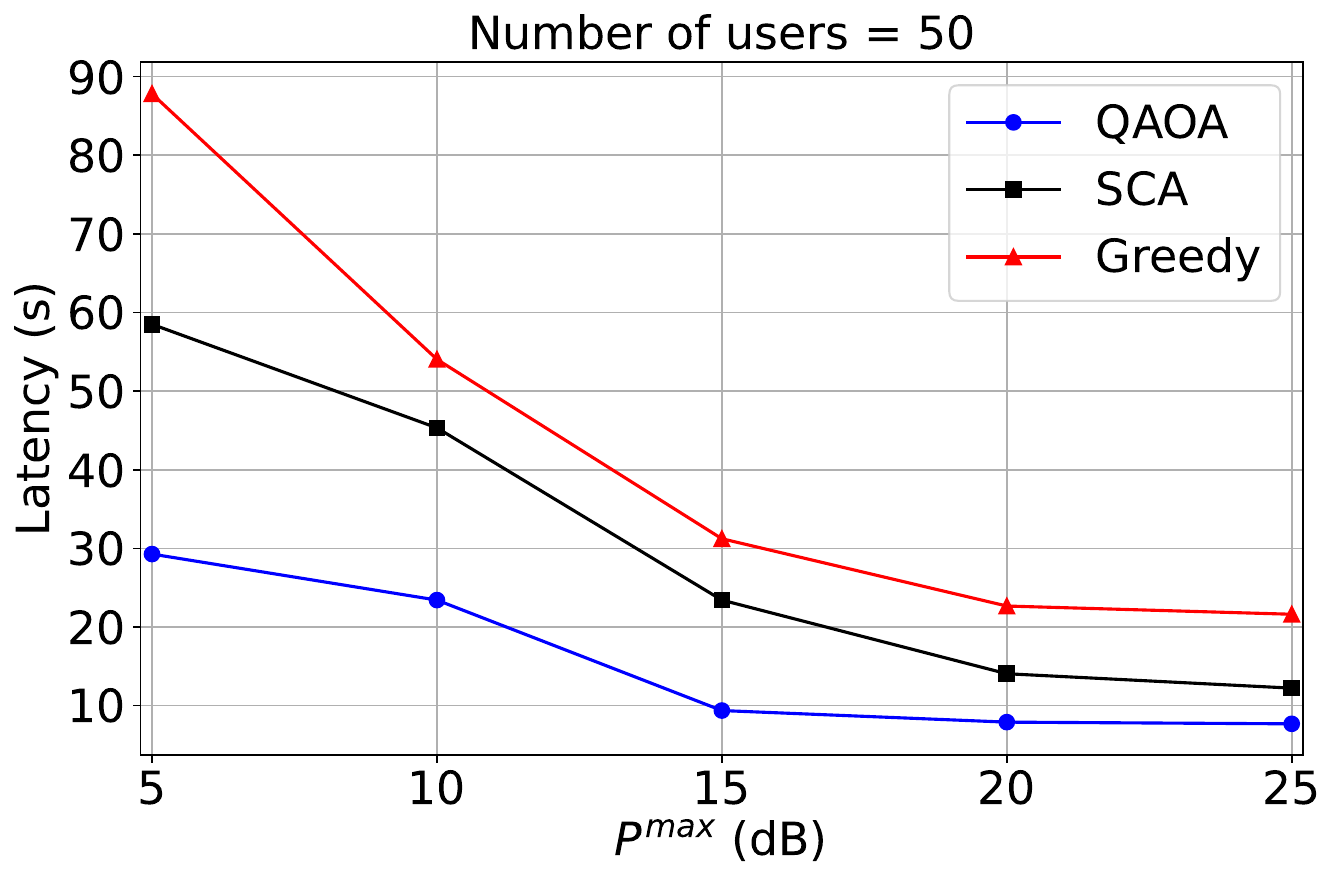}
        \caption{\footnotesize Comparison of latency performance across different schemes as device's maximum transmit power increases for number of devices, $N = 50$.}
        \label{sub21}
    \end{subfigure}
    \hfill 
    \begin{subfigure}[t]{0.49\linewidth} 
        \centering
        \includegraphics[width=\linewidth]{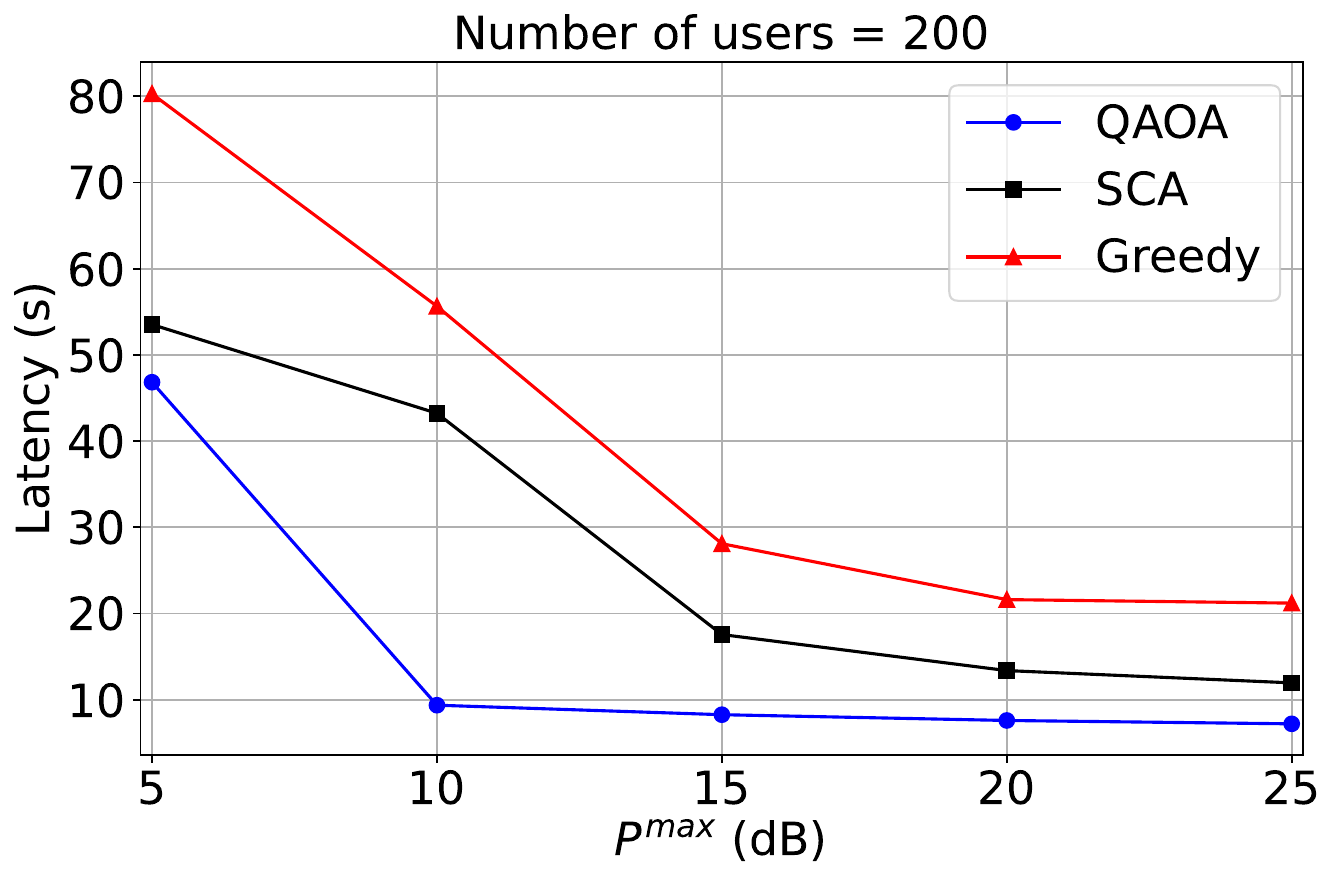}
        \caption{\footnotesize Comparison of latency performance across different schemes as device's maximum transmit power increases for number of devices, $N = 200$.}
        \label{sub22}
    \end{subfigure}
    \caption{\footnotesize Comparison of latency performance across different schemes as device's maximum transmit power increases for varying number of devices.}
    \label{figcomp5}
\end{figure}

Fig.~\ref{fig6} compares the performance of QAOA with the SCA technique in optimizing channel selection for $N = 200$ devices, illustrating the sum-rate (in Mbps) versus the number of iterations. Similar to earlier observations with fewer devices, QAOA continues to outperform SCA significantly. In this scenario, QAOA achieves a sum-rate that is approximately $75\%$ higher than that of SCA, reaching convergence significantly earlier than SCA. Likewise, Fig.~\ref{fig7} compares the performance of QAOA with the SCA technique in optimizing transmit power for $N = 200$ devices, illustrating the sum-rate (in Mbps) versus the number of iterations. From the figure, it is clear that the performance trend of Fig.~\ref{fig4} reiterates here as well, as QAOA's continues to show its dominance over SCA. While QAOA achieves an almost $100\%$ increase in sum-rate, the plot also highlights its ability to reach convergence significantly faster than SCA.

\begin{figure*}[ht!]
    \centering
    \begin{subfigure}{0.32\textwidth}
        \centering
        \includegraphics[width=\linewidth]{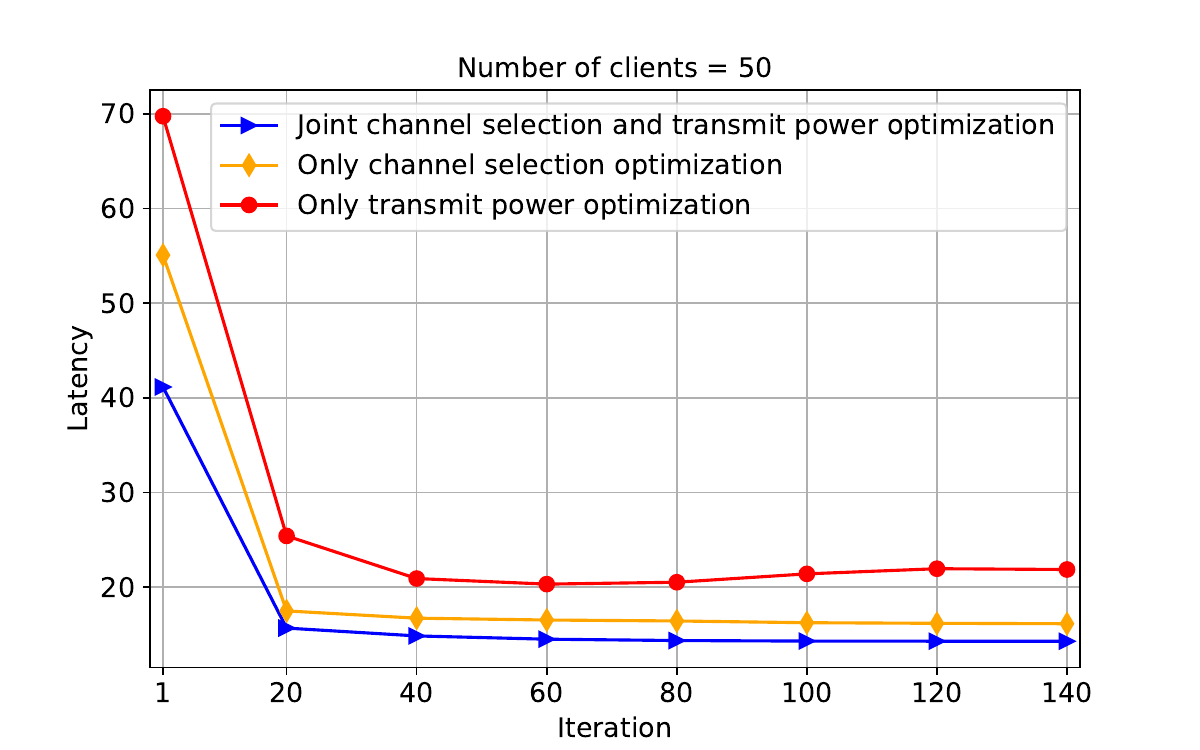} 
        \caption{\footnotesize Comparative analysis of latency performance across different optimization schemes for number of devices $N = 50$.}
        \label{fig:latsub1}
    \end{subfigure}
    \hfill 
    \begin{subfigure}{0.32\textwidth}
        \centering
        \includegraphics[width=\linewidth]{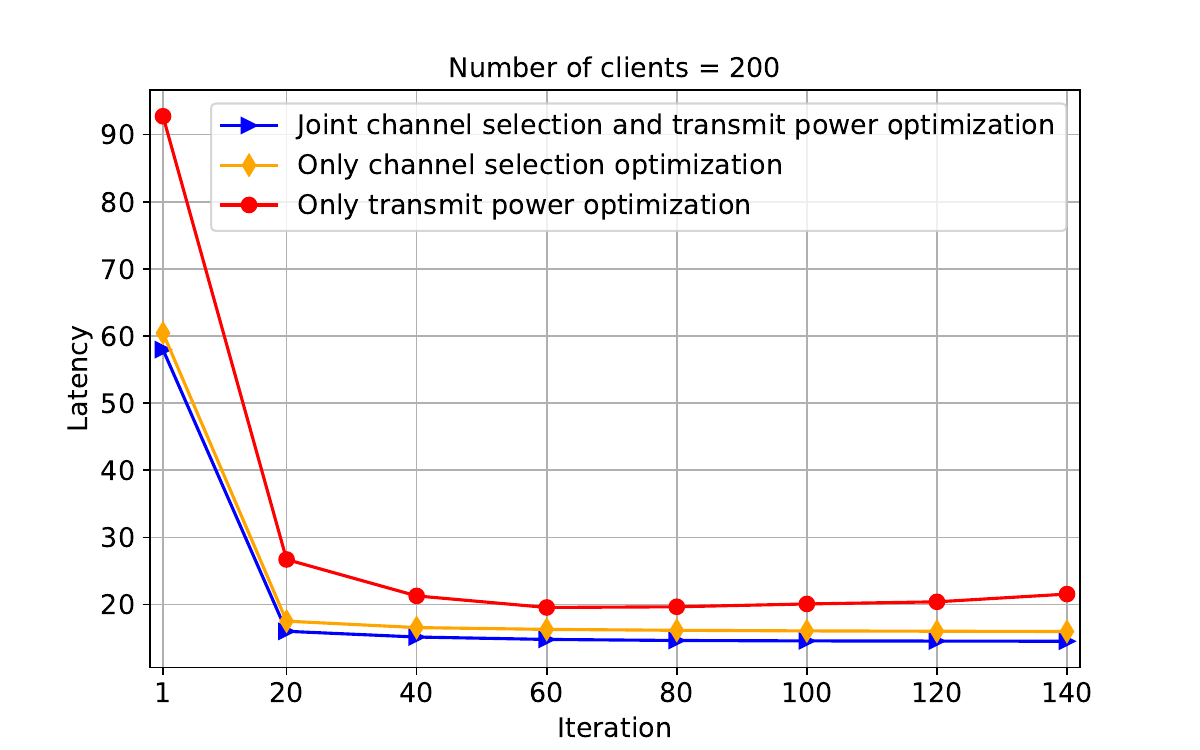} 
        \caption{\footnotesize Comparative analysis of latency performance across different optimization schemes for number of devices $N = 200$.}
        \label{fig:latsub2}
    \end{subfigure}
    \hfill 
    \begin{subfigure}{0.32\textwidth}
        \centering
        \includegraphics[width=\linewidth]{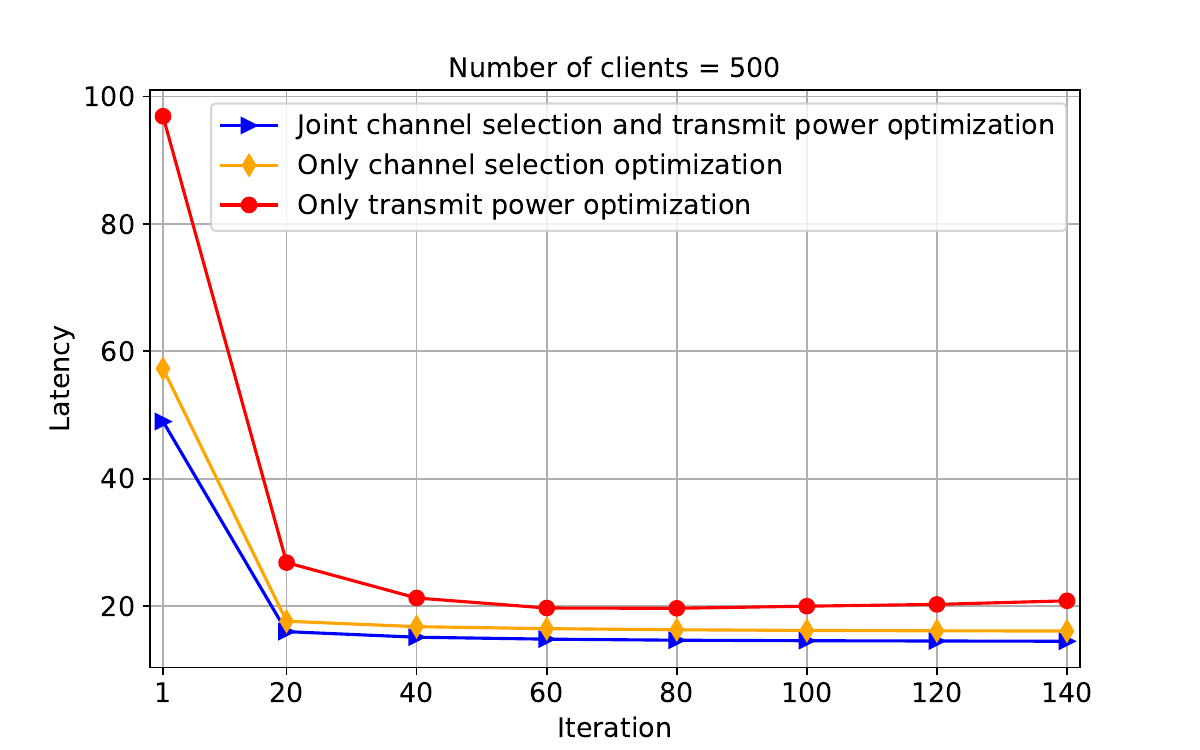} 
        \caption{\footnotesize Comparative analysis of latency performance across different optimization schemes for number of devices $N = 500$.}
        \label{fig:latsub3}
    \end{subfigure}
    \caption{\footnotesize Comparative analysis of latency performance across optimization schemes for varying number of devices.}
    \label{figcomp4}
\end{figure*}

Fig.~\ref{figcomp2} illustrates a comparative analysis of sum-rate performance across different optimization schemes using QAOA for varying number of devices. Fig.~\ref{fig:sub1} shows the sum-rate (in Mbps) versus the number of iterations for $N = 50$ devices, comparing our proposed scheme that jointly optimizes channel selection and transmit power with schemes that optimize only one of these parameters. From the graph, it is clear that our proposed scheme attains a consistent level of sum-rate after around the $40^{\text{th}}$ iteration, significantly outperforming the other two schemes in terms of both maximizing the sum-rate and achieving quicker convergence. Numerically, our proposed scheme achieves sum-rate that is $12.87\%$ and $54.87\%$ higher than schemes optimizing only channel selection and transmit power of the devices, respectively.  Fig.~\ref{fig:sub2} and Fig.~\ref{fig:sub3} extend this analysis to larger networks of $N = 200$ and $N = 500$ devices, respectively, and show a similar trend. Specifically, in Fig.~\ref{fig:sub2}, our proposed joint optimization scheme demonstrates superior performance than other schemes, with sum-rate that is $9.99\%$ and $52.03\%$ higher than those achieved by schemes focusing solely on optimizing channel selection or transmit power, respectively. In Fig.~\ref{fig:sub3} for $N = 500$ devices, our proposed scheme further demonstrates its robustness and efficiency, achieving sum-rate that is $11.04\%$ and $46.55\%$ higher than those of schemes focusing solely on channel selection and transmit power, respectively. These results highlight the strong scalability and effectiveness of our joint optimization approach even in substantially larger network configurations.


Fig.~\ref{figcomp3} shows the sum-rate (in Mbps) as a function of the maximum transmit power (in dB) of the devices, comparing the performance of the QAOA, SCA, and greedy schemes. Fig.~\ref{sub11} presents this comparison for $N = 50$ devices. The plot clearly shows that all optimization techniques exhibit an increase in sum-rate as the maximum transmit power of the devices increases. However, our proposed quantum-based optimization technique significantly outperforms the other algorithms, achieving notably higher sum-rate than both SCA and Greedy techniques. Numerically, QAOA attains $59.13\%$ and $181.54\%$ higher sum-rate than SCA and Greedy, respectively. Fig.~\ref{sub12} extends the analysis to $N = 200$ devices, demonstrating a similar trend. In this larger device scenario, QAOA continues to surpass the other optimization algorithms, achieving $65.96\%$ and $194.34\%$ higher sum-rate than SCA and Greedy, respectively. 

Similarly, our latency evaluation, as depicted in Fig.~\ref{figcomp5}, also demonstrates the efficacy of our QAOA-based optimization approach across larger networks of $N = 50$ and $N = 200$ devices with increasing maximum transmit power of devices. For instance, at $N = 50$ in Fig.~\ref{sub21}, our proposed QAOA scheme reduces latency by $38.46\%$ and $63.64\%$ compared to SCA and greedy schemes, respectively. Likewise, for network of $N = 200$ devices in Fig.~\ref{sub22}, our approach results in $41.67\%$ and $66.67\%$ lower latency.


Moreover, we present a comparative analysis of latency performance across different optimization schemes using QAOA for varying number of devices in Fig.~\ref{figcomp4} . Fig.~\ref{fig:latsub1} portrays the latency (in s) versus the number of iterations for $N = 50$ devices, comparing our proposed scheme that jointly optimizes channel selection and transmit power with schemes that optimize only one of these parameters. Clearly, our proposed scheme achieves a consistent level of latency after around the $20^{\text{th}}$ iteration, significantly surpassing the other two schemes in terms of both minimizing the latency and achieving rapid convergence. More specifically, our proposed scheme attains latency that is $11.53\%$ and $34.69\%$ lower than schemes optimizing only channel selection and transmit power of the devices, respectively. 


Fig.~\ref{fig:latsub2} and Fig.~\ref{fig:latsub3} extend this analysis to larger networks of $N = 200$ and $N = 500$ devices, respectively, demonstrating consistent trends across these larger scales. Numerically, in Fig.~\ref{fig:latsub2}, our proposed joint optimization approach of channel selection and transmit power shows superior performance than other schemes, achieving latency that is $9.07\%$ and $32.61\%$ lower than those achieved by schemes focusing solely on optimizing channel selection or transmit power, respectively. In Fig.~\ref{fig:latsub3} for $N = 500$ devices, our proposed scheme further demonstrates its robustness and efficiency, achieving latency that is $9.94\%$ and $30.47\%$ lower than those of schemes focusing solely on channel selection and transmit power, respectively.


\subsection{Comparison of Computation Overhead} \label{computationoverhead}
We also investigate and compare the code execution time for the QAOA and the benchmark, i.e., the SCA algorithm, when applied to joint optimization tasks under varying numbers of devices. Table \ref{table1} presents these values, clearly illustrating that QAOA, leveraging quantum computing capabilities, executes significantly faster than SCA across different device counts. This highlights the efficiency of quantum-based methods, i.e., QAOA, in optimizing complex network operations. \textcolor{black}{While traditional algorithms such as SCA is widely used for wireless resource allocation, QAOA offers substantial speedup, over 50 times in some cases, making it highly practical for large-scale QFL systems where resource coordination must be completed within realistic round durations.}
\begin{table}[h]
\centering
\caption{Execution Times of QAOA and SCA}
\label{table1}
\begin{tabular}{|c|c|c|}
\hline
\textbf{Number of devices} & \textbf{QAOA Execution} & \textbf{SCA Execution} \\
                         & \textbf{Time (s)}       & \textbf{Time (s)}      \\ \hline
50                       & \textcolor{black}{187.33}                      & 2245.08                     \\ \hline
200                      & 730.75                      & 36562.43                     \\ \hline
\end{tabular}
\end{table}

Additionally, we have documented the execution times for optimizing different parameters using QAOA over 150 iterations, with device numbers at 50, 200, and 500 in table \ref{tab:runtime_data}. This data provides a clear demonstration of how execution time scales with the complexity of the optimization task and the number of devices involved. Specifically, the joint optimization of both parameters consistently requires more time than single-parameter optimizations, reflecting its higher computational demand. However, this increase in execution time is expected to correspond to more comprehensive optimization outcomes, highlighting the trade-off between computational expense and optimization efficacy. This analysis is crucial for assessing the practical scalability of QAOA in real-world applications where performance and speed are both pivotal.

\begin{table}[h]
\centering
\caption{Execution Times for Different QAOA Optimization Strategies and device Numbers}
\label{tab:runtime_data}
\begin{tabular}{|c|c|c|c|}
\hline
\textbf{Number of devices} & \textbf{Sub-problem} & \textbf{Sub-problem} & \textbf{Joint} \\
                         & \textbf{1 (s)}       & \textbf{2 (s)}      &\textbf{(s)}\\ \hline
50 & 46.74 & 140.33 & 187.33 \\
\hline
200 & 188.53 & 540.13 & 730.75 \\
\hline
500 & 470.79 & 1365.88 & 1839.88 \\
\hline
\end{tabular}
\end{table}

\section{Conclusions}
This paper has studied an optimized QFL framework, specifically tailored for NOMA-based large-scale wireless networks, by jointly optimizing quantum device's channel selection and transmit power for  sum-rate maximization. A rigorous analysis of the convergence behavior of our proposed QFL framework has been also given. The formulated sum-rate maximization problem is non-convex, MINLP challenge that remains NP-hard even for known channel selection parameters. Hence, we have proposed an efficient quantum-centric iterative optimization approach via the QAOA and the BCD techniques to obtain optimal solutions. Extensive numerical results demonstrate that our multi-channel NOMA-based QFL framework enhances both model training  and sum-rate performance.

\bibliographystyle{IEEEtran}
\bibliography{Main-Bibliography}

\vspace{5pt}


\begin{IEEEbiographynophoto}{Shaba Shaon}  is currently pursuing her Ph.D. in the Department of Electrical and Computer Engineering at The University of Alabama in Huntsville, USA. Her research interests focus on distributed quantum computing, federated learning, wireless networking.
\end{IEEEbiographynophoto}

\begin{IEEEbiographynophoto}{Christopher G. Brinton} [Senior Member, IEEE] is currently the Elmore Associate
Professor of Electrical and Computer Engineering with Purdue University, USA. He was a recipient of the four U.S. Top Early Career Awards from the National Science Foundation (CAREER), the Office of Naval Research (YIP), the Defense Advanced Research Projects Agency (YFA), and the Air Force Office of Scientific Research (YIP).

\end{IEEEbiographynophoto}

\begin{IEEEbiographynophoto}{Dinh C. Nguyen} is an assistant professor at the Department of Electrical and Computer Engineering, The University of Alabama in Huntsville, USA.  His research interests focus on distributed quantum computing, federated learning, wireless networking.
\end{IEEEbiographynophoto}

\appendix
\section{Detailed Proofs} \label{DetailedProofs}
In this section, we present proofs of lemmas and theorems used the above section.
\subsection{Proof of Lemma 1} \label{prooflemma1}
As mentioned before, let $\mathcal{N} = \{1, 2, \dots, N\}$ denote the set of devices, and let $\Tilde{g}_{k}^{t} = \frac{1}{N} \sum_{n \in \mathcal{N}} \Tilde{g}_{n,k}^{t}$ represent the average of their local stochastic gradients at local iteration $t$ at global round $k$. We have
\begin{align}
    &- \mathbb{E}_{\{\xi_{1,k}^{t}, \dots, \xi_{n,k}^{t} | \boldsymbol{\theta}_{1,k}^{t}, \dots, \boldsymbol{\theta}_{N,k}^{t}\}} \mathbb{E}_{\{1, 2, \dots\, N \}\in \mathcal{N}} \bigg[\langle \nabla f(\bar{\boldsymbol{\theta}}_{k}^{t}), \Tilde{g}_{k}^{t} \rangle\bigg] \nonumber \\
    &= - \mathbb{E}_{\{\xi_{1,k}^{t}, \dots, \xi_{n,k}^{t} | \boldsymbol{\theta}_{1,k}^{t}, \dots, \boldsymbol{\theta}_{N,k}^{t}\}} \nonumber \\
    & \hspace{7em} \mathbb{E}_{\{1, 2, \dots\, N \}\in \mathcal{N}} \bigg[\langle \nabla f(\bar{\boldsymbol{\theta}}_{k}^{t}), \frac{1}{N}\sum_{n \in \mathcal{N}}\Tilde{g}_{n,k}^{t} \rangle\bigg] \nonumber \\
    &\stackrel{\textcircled{\footnotesize 1}}{=} - \mathbb{E}_{\{1, 2, \dots\, N \}\in \mathcal{N}} \nonumber \\
    & \hspace{2.5em} \mathbb{E}_{\{\xi_{1,k}^{t}, \dots, \xi_{n,k}^{t} | \boldsymbol{\theta}_{1,k}^{t}, \dots, \boldsymbol{\theta}_{N,k}^{t}\}}  \bigg[\langle \nabla f(\bar{\boldsymbol{\theta}}_{k}^{t}), \frac{1}{N}\sum_{n \in \mathcal{N}}\Tilde{g}_{n,k}^{t} \rangle\bigg] \nonumber \\
    &= - \bigg \langle \nabla f(\bar{\boldsymbol{\theta}}_{k}^{t}), \mathbb{E}_{\mathcal{N}} \bigg[\frac{1}{N}\sum_{n \in \mathcal{N}} \mathbb{E}_{k,t}\Tilde{g}_{n} \bigg] \bigg \rangle \nonumber \\
\end{align}
\begin{align}
    &= - \bigg \langle \nabla f(\bar{\boldsymbol{\theta}}_{k}^{t}), \mathbb{E}_{\mathcal{N}} \bigg[\frac{1}{N}\sum_{n \in \mathcal{N}} \nabla f_{n}(\boldsymbol{\theta}_{n,k}^{t}) \bigg] \bigg \rangle \nonumber \\
    &= - \bigg \langle \nabla f(\bar{\boldsymbol{\theta}}_{k}^{t}), \frac{1}{N} \bigg[N \sum_{n \in \mathcal{N}} \nabla f_{n}(\boldsymbol{\theta}_{n,k}^{t}) \bigg] \bigg \rangle \nonumber \\
    &\stackrel{\textcircled{\footnotesize 2}}{=} \frac{1}{2} \bigg[ 
    -||\nabla f(\bar{\boldsymbol{\theta}}_{k}^{t})||_{2}^{2} - || \sum_{n=0}^{N} \nabla f_{n}(\boldsymbol{\theta}_{n,k}^{t})||_{2}^{2} + ||\nabla f(\bar{\boldsymbol{\theta}}_{k}^{t}) \nonumber \\
    & \hspace{12em} - \sum_{n=0}^{N} \nabla f_{n}(\boldsymbol{\theta}_{n,k}^{t}) ||_{2}^{2}\bigg] \nonumber \\
    &= \frac{1}{2} \bigg[ 
    -||\nabla f(\bar{\boldsymbol{\theta}}_{k}^{t})||_{2}^{2} - || \sum_{n=0}^{N} \nabla f_{n}(\boldsymbol{\theta}_{n,k}^{t})||_{2}^{2} \nonumber \\
    & \hspace{6em} + || \sum_{n=0}^{N} \bigg(\nabla f_{n}(\bar{\boldsymbol{\theta}}_{k}^{t}) - \nabla f_{n}(\boldsymbol{\theta}_{n,k}^{t}) \bigg) ||_{2}^{2}\bigg] \nonumber \\
    &\stackrel{\textcircled{\footnotesize 3}}{\leq} \frac{1}{2} \bigg[ 
    -||\nabla f(\bar{\boldsymbol{\theta}}_{k}^{t})||_{2}^{2} - || \sum_{n=0}^{N} \nabla f_{n}(\boldsymbol{\theta}_{n,k}^{t})||_{2}^{2} \nonumber \\
    & \hspace{8em} + \sum_{n=0}^{N} || \nabla f_{n}(\bar{\boldsymbol{\theta}}_{k}^{t}) - \nabla f_{n}(\boldsymbol{\theta}_{n,k}^{t}) ||_{2}^{2}\bigg] \nonumber \\
    &\stackrel{\textcircled{\footnotesize 4}}{\leq} \frac{1}{2} \bigg[ 
    -||\nabla f(\bar{\boldsymbol{\theta}}_{k}^{t})||_{2}^{2} - || \sum_{n=0}^{N} \nabla f_{n}(\boldsymbol{\theta}_{n,k}^{t})||_{2}^{2} \nonumber \\
    & \hspace{12em} + \sum_{n=0}^{N} L^{2} || \bar{\boldsymbol{\theta}}_{k}^{t} - \boldsymbol{\theta}_{n,k}^{t}||_{2}^{2}\bigg],
\end{align} 

where \textcircled{\footnotesize 1} is due to the fact that random variables $\xi_{n,k}^{t}$ and $\mathcal{N}$ are independent, \textcircled{\footnotesize 1} is because \textcircled{\footnotesize 2} $2\langle a,b \rangle = ||a||^{2} + ||b||^{2} - ||a-b||^{2}$, \textcircled{\footnotesize 3} holds due to the convexity of $||.||_{2}$, and \textcircled{\footnotesize 4} follows from Assumption 1.

\subsection{Proof of Lemma 2} \label{prooflemma2}
We denote $k = i_{c}$ as the most recent global communication round, hence $\bar{\boldsymbol{\theta}}^{i_{c}+1} = \frac{1}{N} \sum_{n \in \mathcal{N}} \boldsymbol{\theta}_{n}^{i_{c + 1}}$. The local solution at device $n$ at any particular iteration $i > i_{c}$, where $i$ is assumed to represent the most recent iteration, encompassing all global and local iterations up to the current point, can be written as:
\begin{align}
    \boldsymbol{\theta}_{n,k}^{t} = \boldsymbol{\theta}_{n}^{i} &= \boldsymbol{\theta}_{n}^{i-1} - \eta_{i_{c}} \Tilde{g}_{n}^{i-1} \nonumber \\
    &\stackrel{\textcircled{\footnotesize 1}}{=} \boldsymbol{\theta}_{n}^{i-2} - [\eta_{i_{c}} \Tilde{g}_{n}^{i-2} + \eta_{i_{c}} \Tilde{g}_{n}^{i-1}] \nonumber \\
    & = \bar{\boldsymbol{\theta}}^{i_{c}+1} - \sum_{z=i_{c}+1}^{i-1} \eta_{i_{c}} \Tilde{g}_{n}^{z}, \label{eqn51}
\end{align}
where \textcircled{\footnotesize 2} follows from the update rule of local solutions. Now, we compute the average virtual model at iteration $i$ from \eqref{eqn51} as follows:
\begin{align}
    \bar{\boldsymbol{\theta}}^{i} = \bar{\boldsymbol{\theta}}^{i_{c}+1} - \frac{1}{N} \sum_{n \in \mathcal{N}} \sum_{z=i_{c}+1}^{i-1} \eta_{i_{c}} \Tilde{g}_{n}^{z}.
\end{align}
Firstly, without loss of generality, suppose $i = s_{t} T_{n} + r$, with $s_{t}$ and $r$ denoting the indices of global communication round and local updates, respectively. Next, we consider that for $i_{c}+1 < i \leq i_{c} + T_{n}$, $\mathbb{E}_{i}||\bar{\boldsymbol{\theta}}^{i} - \boldsymbol{\theta}_{n}^{i}||$ does not depend on time $i \leq i_{c}$ for $1 \leq n \leq N$. Therefore, for all iterations $1 \leq i \leq I$, where $I = KT_{n}$, we can write,
\begin{align}
    & \frac{1}{K} \sum_{k=1}^{K} \sum_{n=1}^{N} \frac{1}{T_{n}} \sum_{t=1}^{T_{n}} \mathbb{E}||\bar{\boldsymbol{\theta}}_{k}^{t} - \boldsymbol{\theta}_{n,k}^{t}||^{2} \nonumber \\
    &= \frac{1}{I} \sum_{i=1}^{I} \sum_{n=1}^{N} \mathbb{E}||\bar{\boldsymbol{\theta}}^{i} - \boldsymbol{\theta}_{n}^{i}||^{2} \nonumber \\
    &= \frac{1}{I} \sum_{s_{t}=1}^{\frac{I}{T_{n}} - 1} \sum_{n=1}^{N} \sum_{r=1}^{T_{n}} \mathbb{E}||\bar{\boldsymbol{\theta}}^{s_{t} E + r} - \boldsymbol{\theta}_{n}^{s_{t} E + r}||^{2}. 
\end{align}
We bound the term $\mathbb{E}||\bar{\boldsymbol{\theta}}^{i} - \boldsymbol{\theta}_{l}^{i}||^{2}$ for $i_{c}+1 \leq i = s_{t} T_{n} + r \leq i_{c} + T_{n} $ in threes steps: (1) We first relate this quantity to the variance between stochastic gradient and full gradient, (2) We use Assumption 1 on unbiased estimation and i.i.d. mini-batch sampling, (3) We use Assumption 3 to bound the final terms. In the following parts, we proceed to implement each of these steps. It is to note that $l$ is associated with individual device while $n$ is used for summing over devices.

\noindent
\textit{Relating to variance:}
\vspace{-8pt}
\begin{align*}
    &\mathbb{E}||\bar{\boldsymbol{\theta}}^{s_{t} E + r} - \boldsymbol{\theta}_{l}^{s_{t} E + r}||^{2} \nonumber \\
    &= \mathbb{E}|| \bar{\boldsymbol{\theta}}^{i_{c} + 1} - \bigg[ \sum_{z=i_{c}+1}^{i-1} \eta_{i_{c}} \Tilde{g}_{l}^{z} \bigg] - \bar{\boldsymbol{\theta}}^{i_{c}+1} \nonumber \\
    &\hspace{12em} + \bigg[ \frac{1}{N} \sum_{n \in \mathcal{N}} \sum_{z=i_{c}+1}^{i-1} \eta_{i_{c}} \Tilde{g}_{n}^{z} \bigg] ||^{2} \nonumber \\
    &\stackrel{\textcircled{\footnotesize 1}}{=} \mathbb{E}|| \sum_{z=1}^{r} \eta_{i_{c}} \Tilde{g}_{l}^{s_{t}+z} - \frac{1}{N} \sum_{n \in \mathcal{N}} \sum_{z=1}^{r} \eta_{i_{c}} \Tilde{g}_{n}^{s_{t}+z}||^{2} \nonumber \\
    &\stackrel{\textcircled{\footnotesize 2}}{\leq} 2 \bigg[ \mathbb{E}|| \sum_{z=1}^{r} \eta_{i_{c}} \Tilde{g}_{l}^{s_{t}+z}||^{2} - \mathbb{E} ||\frac{1}{N} \sum_{n \in \mathcal{N}} \sum_{z=1}^{r} \eta_{i_{c}} \Tilde{g}_{n}^{s_{t}+z}||^{2} \bigg] \nonumber \\
\end{align*}
\begin{align}
    &\stackrel{\textcircled{\footnotesize 3}}{=} 2 \bigg[ \mathbb{E}|| \sum_{z=1}^{r} \eta_{i_{c}} \Tilde{g}_{l}^{s_{t}+z} - \mathbb{E}\bigg[\sum_{z=1}^{r} \eta_{i_{c}} \Tilde{g}_{l}^{s_{t}+z}\bigg]||^{2} \nonumber \\
    &- \mathbb{E} ||\sum_{z=1}^{r} \eta_{i_{c}} \Tilde{g}_{l}^{s_{t}+z}||^{2} + \mathbb{E} ||\frac{1}{N} \sum_{n \in \mathcal{N}} \sum_{z=1}^{r} \eta_{i_{c}} \Tilde{g}_{n}^{s_{t}+z} - \mathbb{E}\bigg[\frac{1}{N} \nonumber \\
    &\times \sum_{n \in \mathcal{N}} \sum_{z=1}^{r} \eta_{i_{c}} \Tilde{g}_{n}^{s_{t}+z} \bigg] ||^{2} \bigg] + ||\mathbb{E}\bigg[\frac{1}{N} \sum_{n \in \mathcal{N}} \sum_{z=1}^{r} \eta_{i_{c}} \Tilde{g}_{n}^{s_{t}+z}\bigg]||^{2} \nonumber \\
    &\stackrel{\textcircled{\footnotesize 4}}{=} 2 \mathbb{E}\bigg( \bigg[ ||\sum_{z=1}^{r} \eta_{i_{c}} \bigg[\Tilde{g}_{l}^{s_{t}T_{n}+z} - g_{l}^{s_{t}T_{n}+z}\bigg]||^{2}  \nonumber \\
    &+ ||\sum_{z=1}^{r} \eta_{i_{c}} g_{l}^{s_{t}T_{n}+z}||^{2} \bigg] \nonumber \\
    &+ ||\frac{1}{N} \sum_{n \in \mathcal{N}} \sum_{z=1}^{r} \eta_{i_{c}} \bigg[\Tilde{g}_{n}^{s_{t}T_{n}+z} - g_{n}^{s_{t}T_{n}+z}\bigg]||^{2} \nonumber \\
    &+ ||\frac{1}{N} \sum_{n \in \mathcal{N}} \sum_{z=1}^{r} \eta_{i_{c}} g_{n}^{s_{t}T_{n}+z}||^{2}\bigg),
\end{align}
where \textcircled{\footnotesize 1} holds because $i = s_{t}T_{n}+r \leq i_{c} + T_{n}$, \textcircled{\footnotesize 2} is due to $||a-b||^{2} \leq 2(||a||^{2} + ||b||^{2})$, \textcircled{\footnotesize 3} arises because of $\mathbb{E}[\boldsymbol{\theta}^{2}] = \mathbb{E}[[\boldsymbol{\theta} - \mathbb{E}[\boldsymbol{\theta}]]^{2}] + \mathbb{E}[\boldsymbol{\theta}]^{2}$, \textcircled{\footnotesize 4} comes from Assumption 1.

\noindent
\textit{Unbiased estimation and i.i.d. sampling}
\begin{align*}
    &= 2 \mathbb{E}\bigg( \bigg[ \sum_{z=1}^{r} \eta_{i_{c}}^{2} ||\Tilde{g}_{l}^{s_{t}T_{n}+z} - g_{l}^{s_{t}T_{n}+z}||^{2} \nonumber \\
    &+ \sum_{p \neq q \vee l \neq v} \bigg\langle \eta_{i_{c}} \Tilde{g}_{l}^{p}-\eta_{i_{c}} g_{l}^{p}, \eta_{i_{c}} \Tilde{g}_{v}^{q}-\eta_{i_{c}} g_{v}^{q} \bigg\rangle \nonumber \\
    &+ ||\sum_{z=1}^{r}\eta_{i_{c}} g_{l}^{s_{t}T_{n}+z}||^{2}\bigg] \nonumber \\
    &+ \frac{1}{N^{2}} \sum_{l \in \mathcal{N}} \sum_{z=1}^{r} \eta_{i_{c}}^{2} ||\Tilde{g}_{l}^{s_{t}T_{n}+z} - g_{l}^{s_{t}T_{n}+z}||^{2} \nonumber \\
    &+ \frac{1}{N^{2}} \sum_{p \neq q \vee l \neq v} \bigg\langle \eta_{i_{c}} \Tilde{g}_{l}^{p}-\eta_{i_{c}} g_{l}^{p}, \eta_{i_{c}} \Tilde{g}_{v}^{q}-\eta_{i_{c}} g_{v}^{q} \bigg\rangle \nonumber \\
    &+ ||\frac{1}{N} \sum_{n \in \mathcal{N}} \sum_{z=1}^{r} \eta_{i_{c}} g_{n}^{s_{t}T_{n}+z} ||^{2} \bigg) \nonumber \\
    &\stackrel{\textcircled{\footnotesize 5}}{=} 2 \mathbb{E}\bigg( \bigg[ \sum_{z=1}^{r} \eta_{i_{c}}^{2} ||\Tilde{g}_{l}^{s_{t}T_{n}+z} - g_{l}^{s_{t}T_{n}+z}||^{2} \nonumber \\
    &+ ||\sum_{z=1}^{r}\eta_{i_{c}} g_{l}^{s_{t}T_{n}+z}||^{2}\bigg] + \frac{1}{N^{2}} \sum_{n \in \mathcal{N}} \sum_{z=1}^{r} \eta_{i_{c}}^{2} ||\Tilde{g}_{n}^{s_{t}T_{n}+z} \nonumber \\
    & - g_{n}^{s_{t}T_{n}+z}||^{2} + ||\frac{1}{N} \sum_{n \in \mathcal{N}} \sum_{z=1}^{r} \eta_{i_{c}} g_{n}^{s_{t}T_{n}+z} ||^{2} \bigg) \nonumber \\
    &\stackrel{\textcircled{\footnotesize 6}}{\leq} 2 \mathbb{E}\bigg( \bigg[ \sum_{z=1}^{r} \eta_{i_{c}}^{2} ||\Tilde{g}_{l}^{s_{t}T_{n}+z} - g_{l}^{s_{t}T_{n}+z}||^{2} + r \sum_{z=1}^{r}\eta_{i_{c}}^{2} \nonumber \\
    &\times ||g_{l}^{s_{t}T_{n}+z}||^{2}\bigg] + \frac{1}{N^{2}} \sum_{n \in \mathcal{N}} \sum_{z=1}^{r} ||\Tilde{g}_{n}^{s_{t}T_{n}+z} - g_{n}^{s_{t}T_{n}+z}||^{2} \nonumber \\ 
    &+ \frac{r}{N^{2}}  \sum_{n \in \mathcal{N}} \sum_{z=1}^{r} \eta_{s_{t}T_{n}+z}^{2} ||g_{n}^{s_{t}T_{n}+z} ||^{2} \nonumber \\
\end{align*}
\begin{align}    
    &= 2 \bigg( \bigg[ \sum_{z=1}^{r} \eta_{i_{c}}^{2} \mathbb{E} ||\Tilde{g}_{l}^{s_{t}T_{n}+z} - g_{l}^{s_{t}T_{n}+z}||^{2} + r \sum_{z=1}^{r}\eta_{i_{c}}^{2} \nonumber \\
    & \times \mathbb{E} ||g_{l}^{s_{t}T_{n}+z}||^{2}\bigg] + \frac{1}{N^{2}} \sum_{n \in \mathcal{N}} \sum_{z=1}^{r} \eta_{i_{c}}^{2} \mathbb{E} ||\Tilde{g}_{n}^{s_{t}T_{n}+z} \nonumber \\
    & - g_{n}^{s_{t}T_{n}+z}||^{2} + \frac{r}{N^{2}}  \sum_{n \in \mathcal{N}} \sum_{z=1}^{r} \eta_{i_{c}}^{2} \mathbb{E} ||g_{n}^{s_{t}T_{n}+z} ||^{2} \bigg), \label{eqn55}
\end{align}
where \textcircled{\footnotesize 5} is due to independent mini-batch sampling as well as unbiased estimation assumption, and \textcircled{\footnotesize 6} follows from the inequality $||\sum_{i=1}^{m} a_{i}||^{2} \leq m \sum_{i=1}^{m} ||a_{i}||^{2}$.

\noindent
\textit{Using Assumption 3:}
Our next step is to bound the terms in \eqref{eqn55} using Assumption 3 as follows:
\begin{align}
    & \mathbb{E} ||\bar{\boldsymbol{\theta}}_{k}^{t} - \boldsymbol{\theta}_{l,k}^{t}||^{2} \leq 2 \bigg( \bigg[ \sum_{z=1}^{r} \eta_{i_{c}}^{2} \bigg[ C_{1} ||g_{l}^{s_{t}T_{n}+z}||^{2} + \frac{\sigma^{2}}{B} \bigg] \nonumber \\
    &+ r \sum_{z=1}^{r} \eta_{i_{c}}^{2} || g_{l}^{s_{t}T_{n}+z}||^{2} + \frac{1}{N^{2}} \sum_{n \in \mathcal{N}} \sum_{z=1}^{r} \eta_{i_{c}}^{2} \bigg[ C_{1} ||g_{n}^{s_{t}T_{n}+z}||^{2} \nonumber \\
    & + \frac{\sigma^{2}}{B} \bigg] + \frac{r}{N^{2}} \sum_{n \in \mathcal{N}} \sum_{z=1}^{r} \eta_{i_{c}}^{2} || g_{n}^{s_{t}T_{n}+z}||^{2} \bigg) \nonumber \\
    &= 2 \bigg( \bigg[ \sum_{z=1}^{r} \eta_{i_{c}}^{2} C_{1} ||g_{l}^{s_{t}T_{n}+z}||^{2} + \sum_{z=1}^{r} \eta_{i_{c}}^{2} \frac{\sigma^{2}}{B} \nonumber \\
    &+ r \sum_{z=1}^{r} \eta_{i_{c}}^{2} || g_{l}^{s_{t}T_{n}+z}||^{2} \bigg] + \frac{1}{N^{2}} \sum_{n \in \mathcal{N}} \sum_{z=1}^{r} \eta_{i_{c}}^{2} C_{1} ||g_{n}^{s_{t}T_{n}+z}||^{2} \nonumber \\
    &+ \sum_{z=1}^{r} \eta_{i_{c}}^{2} \frac{\sigma^{2}}{NB} + \frac{r}{N^{2}} \sum_{n \in \mathcal{N}} \sum_{z=1}^{r} \eta_{i_{c}}^{2} || g_{n}^{s_{t}T_{n}+z}||^{2} \bigg). \label{eqn56} \nonumber \\
\end{align}
Now we determine the upper bound for $\sum_{r=1}^{T_{n}} \sum_{n=1}^{N} [\mathbb{E} ||\bar{\boldsymbol{\theta}}_{k}^{t} - \boldsymbol{\theta}_{n,k}^{t}||]$ using \eqref{eqn56} as follows:
\begin{align*}
    &\sum_{n=1}^{N} \sum_{r=1}^{T_{n}} \bigg[\mathbb{E} ||\bar{\boldsymbol{\theta}}^{s_{t}T_{n}+z} - \boldsymbol{\theta}_{n}^{s_{t}T_{n}+z}||\bigg] \nonumber \\
    &\leq 2 \sum_{l=1}^{N} \sum_{r=1}^{T_{n}} \bigg( \bigg[ \sum_{z=1}^{r} \eta_{i_{c}}^{2} C_{1} ||g_{l}^{s_{t}T_{n}+z}||^{2} + \sum_{z=1}^{r} \eta_{i_{c}}^{2} \frac{\sigma^{2}}{B} \nonumber \\
    &+ r \sum_{z=1}^{r} \eta_{i_{c}}^{2} || g_{l}^{s_{t}T_{n}+z}||^{2} \bigg] + \frac{1}{N^{2}} \sum_{n \in \mathcal{N}} \sum_{z=1}^{r} \eta_{i_{c}}^{2} C_{1} ||g_{n}^{s_{t}T_{n}+z}||^{2} \nonumber \\
    &+ \sum_{z=1}^{r} \eta_{i_{c}}^{2} \frac{\sigma^{2}}{NB} + \frac{r}{N^{2}} \sum_{n \in \mathcal{N}} \sum_{z=1}^{r} \eta_{i_{c}}^{2} || g_{n}^{s_{t}T_{n}+z}||^{2} \bigg) \nonumber \\
    &= 2 \eta_{i_{c}}^{2} \bigg( \bigg[ C_{1} \sum_{l=1}^{N} \sum_{r=1}^{T_{n}} \sum_{z=1}^{r} ||g_{l}^{s_{t}T_{n}+z}||^{2} + \sum_{r=1}^{T_{n}} \frac{r \sigma^{2}}{B} \nonumber \\
    &+ \sum_{l=1}^{N} \sum_{r=1}^{T_{n}} \sum_{z=1}^{r} r || g_{l}^{s_{t}T_{n}+z}||^{2} \bigg] + \sum_{r=1}^{T_{n}} \frac{1}{N^{2}} \sum_{n \in \mathcal{N}} \sum_{z=1}^{r}  C_{1} \nonumber \\
    &\times ||g_{n}^{s_{t}T_{n}+z}||^{2} + \sum_{r=1}^{T_{n}} \frac{r \sigma^{2}}{NB} + \sum_{r=1}^{T_{n}} \frac{r}{N^{2}} \sum_{n \in \mathcal{N}} \sum_{z=1}^{r} || g_{n}^{s_{t}T_{n}+z}||^{2} \bigg) \nonumber \\
\end{align*}
\begin{align}
    &\stackrel{\textcircled{\footnotesize 1}}{\leq} 2 \eta_{i_{c}}^{2} \bigg( \bigg[ \sum_{z=1}^{T_{n}}  C_{1} \sum_{l=1}^{N} ||g_{l}^{s_{t}T_{n}+z}||^{2} + \frac{T_{n}(T_{n}+1) \sigma^{2}}{2B} \nonumber \\
    &+ \frac{T_{n}(T_{n}+1)}{2} \sum_{z=1}^{T_{n}} \sum_{l=1}^{N} ||g_{l}^{s_{t}T_{n}+z}||^{2} \nonumber \\
    &+ \frac{1}{N^{2}} \sum_{n \in \mathcal{N}} \sum_{z=1}^{T_{n}}  C_{1} ||g_{n}^{s_{t}T_{n}+z}||^{2} \nonumber \\
    &+ \frac{T_{n}(T_{n}+1)\sigma^{2}}{2NB} +\frac{T_{n}(T_{n}+1)}{2N^{2}} \sum_{n \in \mathcal{N}} \sum_{z=1}^{T_{n}} || g_{n}^{s_{t}T_{n}+z}||^{2} \nonumber \\
    &= \frac{\eta_{i_{c}}^{2} (N+1)}{N} \bigg( \bigg[ (2 C_{1} + T_{n}(T_{n}+1)) \sum_{z=1}^{T_{n}} \sum_{n=1}^{N} || g_{n}^{s_{t}T_{n}+z}||^{2} \bigg] \nonumber \\
    & \hspace{12em} + \frac{T_{n}(T_{n}+1)\sigma^{2}}{B} \bigg), \label{eqn57}
\end{align}
where \textcircled{\footnotesize 1} follows from the fact that the terms $||g_{l}||^{2}$ are positive. Now, taking summation over global communication rounds in \eqref{eqn57} gives:
\begin{align}
     &\sum_{n=1}^{N} \sum_{s_{t}=1}^{I/T_{n}-1} \sum_{r=1}^{T_{n}} \bigg[\mathbb{E} ||\bar{\boldsymbol{\theta}}^{s_{t}T_{n}+z} - \boldsymbol{\theta}_{n}^{s_{t}T_{n}+z}||\bigg] \nonumber \\
     &\leq \frac{\eta_{i_{c}}^{2} (N+1)}{N} \bigg( \bigg[ (2 C_{1} \nonumber \\
     & \hspace{2em} + T_{n}(T_{n}+1)) \sum_{n=1}^{N} \sum_{s_{t}=1}^{I/T_{n}-1} \sum_{z=1}^{T_{n}} || g_{n}^{s_{t}T_{n}+z}||^{2} \bigg] \nonumber \\
     & \hspace{6em} + \frac{I(T_{n}+1)\sigma^{2}}{B} \bigg) \nonumber \\
     &= \frac{\eta_{i_{c}}^{2} (N+1)}{N} \bigg( \bigg[ (2 C_{1} + T_{n}(T_{n}+1)) \sum_{n=1}^{N} \sum_{i=1}^{I} || g_{n}^{i}||^{2} \bigg] \nonumber \\
     &\hspace{6em} + \frac{I(T_{n}+1)\sigma^{2}}{B} \bigg),
\end{align}
which leads to
\begin{align}
     &\sum_{n=1}^{N} \frac{1}{I} \sum_{i=1}^{I} \bigg[\mathbb{E} ||\bar{\boldsymbol{\theta}}^{i} - \boldsymbol{\theta}_{n}^{i}||\bigg] \nonumber \\
     &\leq \frac{(2 C_{1} + T_{n}(T_{n}+1))}{I} \frac{\eta_{i_{c}}^{2} (N+1)}{N} \sum_{n=1}^{N} \sum_{i=0}^{I-1} || g_{n}^{i}||^{2} \nonumber \\
     & \hspace{8em} + \frac{\eta_{i_{c}}^{2} I(N+1)(T_{n}+1)\sigma^{2}}{NB} \nonumber \\
     &\stackrel{\textcircled{\footnotesize 1}}{\leq} \frac{(2 C_{1} + T_{n}(T_{n}+1))}{I} \frac{\lambda \eta_{i_{c}}^{2} (N+1)}{N} \sum_{n=1}^{N} \sum_{i=0}^{I-1} || g_{n}^{i}||^{2} \nonumber \\
     & \hspace{8em} + \frac{\eta_{i_{c}}^{2} I(N+1)(T_{n}+1)\sigma^{2}}{NB}, \label{eqn59}
\end{align}
where \textcircled{\footnotesize 1} follows from the definition of weighted gradient diversity and upper bound assumption in (30) of the main paper. Finally, \eqref{eqn59} can be written as:
\begin{align}
     &\frac{1}{K} \sum_{k=1}^{K} \sum_{n=1}^{N} \frac{1}{T_{n}} \sum_{t=1}^{T_{n}} \bigg[\mathbb{E} ||\bar{\boldsymbol{\theta}}_{k}^{t} - \boldsymbol{\theta}_{n,k}^{t}||\bigg] \nonumber \\
     &\leq \frac{(2 C_{1} + T_{n}(T_{n}+1))}{KT_{n}} \frac{\lambda \eta_{i_{c}}^{2} (N+1)}{N} \sum_{k=1}^{K} \sum_{n=1}^{N} \sum_{t=1}^{T_{n}} || g_{n,k}^{t}||^{2} \nonumber \\
     & \hspace{6em} + \frac{\eta_{i_{c}}^{2} KT_{n}(N+1)(T_{n}+1)\sigma^{2}}{NB}. \label{eqn60}
\end{align}

\subsection{Proof of Lemma 3} \label{prooflemma3}
We have
\begin{align}
    &\mathbb{E}\bigg[||\Tilde{g}_{k}^{t} - g_{k}^{t}||^{2}\bigg] \stackrel{\textcircled{\footnotesize 1}}{=} \mathbb{E}\bigg[||\frac{1}{N} \sum_{n=0}^{N} \Tilde{g}_{n,k}^{t} - g_{n,k}^{t}||^{2}\bigg] \nonumber \\
    &= \frac{1}{N^{2}} \mathbb{E}\bigg[\sum_{n=0}^{N} ||(\Tilde{g}_{n,k}^{t} - g_{n,k}^{t})||^{2}\bigg] \nonumber \\
    & \hspace{6em} + \sum_{i \neq n} \langle \Tilde{g}_{i,k}^{t} - g_{i,k}^{t}, \Tilde{g}_{n,k}^{t} - g_{n,k}^{t} \rangle \nonumber \\
    &= \frac{1}{N^{2}} \sum_{n=0}^{N} \mathbb{E} ||(\Tilde{g}_{n,k}^{t} - g_{n,k}^{t})||^{2} \nonumber \\
    & \hspace{6em} + \sum_{i \neq n} \frac{1}{N^{2}} \mathbb{E} \bigg[\langle \Tilde{g}_{i,k}^{t} - g_{i,k}^{t}, \Tilde{g}_{n,k}^{t} - g_{n,k}^{t}\rangle \bigg] \nonumber \\
    &\stackrel{\textcircled{\footnotesize 2}}{=} \frac{1}{N^{2}} \sum_{n=0}^{N} \mathbb{E} ||(\Tilde{g}_{n,k}^{t} - g_{n,k}^{t})||^{2} \nonumber \\
    & \hspace{4em} + \frac{1}{N^{2}} \sum_{i \neq n} \langle  \mathbb{E} \bigg[\Tilde{g}_{i,k}^{t} - g_{i,k}^{t}\bigg], \mathbb{E} \bigg[\Tilde{g}_{n,k}^{t} - g_{n,k}^{t} \bigg] \rangle  \nonumber \\
    &\stackrel{\textcircled{\footnotesize 3}}{\leq} \frac{1}{N^{2}} \sum_{n=0}^{N} \bigg[ C_{1} ||g_{n,k}^{t}||^{2} + C_{2}^{2}\bigg] = \frac{C_{1}}{N^{2}} \sum_{n=0}^{N} ||g_{n,k}^{t}||^{2} + \frac{C_{2}^{2}}{N},
\end{align}
where we use the definition of $\Tilde{g}_{k}^{t}$ and $g_{k}^{t}$ in \textcircled{\footnotesize 1}, in \textcircled{\footnotesize 2} we use the fact that mini-batches are chosen in i.i.d. manner at each device, and \textcircled{\footnotesize 3} follows directly from Assumption 3. We note that Assumption 3 implies $\mathbb{E}[\Tilde{g}_{n,k}^{t}] = g_{n,k}^{t}$. Therefore. we have
\begin{align}
    \mathbb{E}\bigg[ ||\Tilde{g}_{k}^{t}||^{2} \bigg] &= \mathbb{E}\bigg[ ||\Tilde{g}_{k}^{t} - \mathbb{E} [ \Tilde{g}_{k}^{t} ]||^{2} \bigg] + ||\mathbb{E} [ \Tilde{g}_{k}^{t} ]||^{2} \nonumber \\
    &= \mathbb{E}\bigg[ ||\Tilde{g}_{k}^{t} - g_{k}^{t} ||^{2} \bigg] + ||g_{k}^{t}||^{2} \nonumber \\
    &\stackrel{\textcircled{\footnotesize 1}}{\leq} \frac{C_{1}}{N^{2}} \sum_{n=0}^{N} ||g_{n,k}^{t}||^{2} + \frac{C_{2}^{2}}{N} + ||\frac{1}{N} \sum_{n=0}^{N} g_{n,k}^{t}||^{2} \nonumber \\
    &\stackrel{\textcircled{\footnotesize 2}}{\leq} \frac{C_{1}}{N^{2}} \sum_{n=0}^{N} ||g_{n,k}^{t}||^{2} + \frac{C_{2}^{2}}{N} + \frac{1}{N} \sum_{n=0}^{N} ||g_{n,k}^{t}||^{2} \nonumber \\
    &= \bigg(\frac{C_{1}+N}{N^{2}}\bigg) \sum_{n=0}^{N} ||g_{n,k}^{t}||^{2} + \frac{C_{2}^{2}}{N},
\end{align}
where \textcircled{\footnotesize 1} and \textcircled{\footnotesize 2} follows from the fact that $||\sum_{i=1}^{m} a_{i}||^{2} \leq m \sum_{i=1}^{m} ||a_{i}||^{2}$, with $a_{i} \in \mathbb{R}^{n}$. Using the upper bound over the weighted gradient diversity, $\lambda$,
\begin{align}
    \mathbb{E}\bigg[ ||\Tilde{g}_{k}^{t}||^{2}\bigg] \leq \lambda \bigg(\frac{C_{1}+N}{N^{2}}\bigg) ||\sum_{n=0}^{N} g_{n,k}^{t}||^{2} + \frac{C_{2}^{2}}{N},
\end{align}
results in the stated bound.

\subsection{Proof of Lemma 4} \label{prooflemma4}
To prove Lemma 4, we fix the indices related to global and local iteration $k$ and $t$, consequently dropping them from notations temporarily. Let $X_{n,d,\pm} = {\langle \hat{Z} \rangle}_{\lvert \Psi_{n}(\boldsymbol{\theta}_{n} \pm \frac{\pi}{2} e_{d}) \rangle} - {\langle Z \rangle}_{\lvert \Psi_{n}(\boldsymbol{\theta}_{n} \pm \frac{\pi}{2} e_{d}) \rangle}$ denote the difference between the estimated and true expectation of the observable $Z$ under the quantum state ${\lvert \Psi_{n}(\boldsymbol{\theta}_{n} \pm \frac{\pi}{2} e_{d}) \rangle}$ whose $d^{\text{th}}$ parameter is phase shifted by $\pm \frac{\pi}{2}$. In the following analysis, we use the notation $\lvert \Psi_{n,d,\pm} \rangle = {\lvert \Psi_{n}(\boldsymbol{\theta}_{n} \pm \frac{\pi}{2} e_{d}) \rangle}$ for brevity. The variance of the gradient estimate in \eqref{eqn44} is written as
\begin{align}
    \text{var}(\xi_{n}) &= \mathbb{E} \bigg[ \sum_{d=1}^{D} \bigg( \frac{1}{2} ( {\langle \hat{Z} \rangle}_{\lvert \Psi_{n,d,+} \rangle} - {\langle \hat{Z} \rangle}_{\lvert \Psi_{n,d,-} \rangle}) \nonumber \\
    & \hspace{1em} - \frac{1}{2} ( {\langle Z \rangle}_{\lvert \Psi_{n,d,+} \rangle} - {\langle Z \rangle}_{\lvert \Psi_{n,d,-} \rangle}) \bigg)^{2} \bigg] \nonumber \\
    &= \sum_{d=1}^{D} \frac{1}{4} \mathbb{E} \bigg[ \bigg( X_{n,d,+} - X_{n,d,-} \bigg)^{2} \bigg] \nonumber \\
    &= \sum_{d=1}^{D} \frac{1}{4}  \bigg( \mathbb{E}[X_{n,d,+}^{2}] - \mathbb{E}[X_{n,d,-}^{2}] \bigg), \label{eqn61}
\end{align}
where the expectation is taken with respect to the $H$ measurements of the quantum states ${\lvert \Psi_{n}(\boldsymbol{\theta}_{n} + \frac{\pi}{2} e_{d}) \rangle}$ and ${\lvert \Psi_{n}(\boldsymbol{\theta}_{n} - \frac{\pi}{2} e_{d}) \rangle}$ for $d = 1, 2, \dots, D$. Hence, the random variables $X_{n,d,+}$ and $X_{n,d,-}$ are independent for $d = 1, 2, \dots, D$, which results in the equality in \eqref{eqn61}. It is to note that the expectation $\mathbb{E}[X_{n,d,+}^{2}]$ is equal to the variance $\text{var}({\langle \hat{Z} \rangle}_{\lvert \Psi_{n,d,+} \rangle})$ of the random variable ${\langle \hat{Z} \rangle}_{\lvert \Psi_{n,d,+} \rangle}$. Let $Y$ be the random variable that defines the index of the measurement of the observable $Z$. Therefore, $Z = h_{Y}$ represents the corresponding measurement output. We denote the Bernoulli random variable as $W_{y} = \mathbb{I}\{Y = y\}$ determining whether $Y = y (W_{y} = 1)$ or not $(W_{y} = 0)$. We also mention that the quantum measurements are i.i.d., and thus it follows from the definition of expectation of ${\langle \hat{Z} \rangle}_{\lvert \Psi_{n,d,+} \rangle}$ that
\begin{align}
    \mathbb{E}[X_{n,d,+}^{2}] &= \frac{1}{H} \text{var}\bigg( \sum_{y=1}^{N_{z}} h_{y} W_{y} \bigg) \nonumber \\
    &= \frac{1}{H} \mathbb{E} \bigg[ \bigg( \sum_{y=1}^{N_{z}} h_{y} (W_{y} - p(y|\boldsymbol{\theta}_{n} + e_{d} \frac{\pi}{2})) \bigg)^{2} \bigg] \nonumber \\
    &\stackrel{\textcircled{\footnotesize 1}}{\leq} \frac{1}{H} \bigg(\sum_{y=1}^{N_{z}} h_{y}^{2}\bigg) \sum_{y=1}^{N_{z}} \text{var}(W_{y}) \nonumber \\
    &\stackrel{\textcircled{\footnotesize 2}}{=} \frac{1}{H} \bigg(\sum_{y=1}^{N_{z}} h_{y}^{2}\bigg) \sum_{y=1}^{N_{z}} v\bigg(p(y|\boldsymbol{\theta}_{n} + e_{d} \frac{\pi}{2})\bigg) \nonumber \\
    &\leq \frac{N_{z}}{N_{y}} \bigg(\sum_{y=1}^{N_{z}} h_{y}^{2}\bigg) v = \frac{N_{z} \operatorname{Tr}(Z^{2})}{H} v, \label{eqn62}
\end{align}
where \textcircled{\footnotesize 1} follows from the Cauchy-Schwarz inequality, \textcircled{\footnotesize 2} is due to the fact that the variance of the Bernoulli random variable $W_{y}$ is computed as
\begin{align}
    \text{var}\bigg(W_{y}\bigg) = \mathbb{E}\bigg[W_{y}^{2}\bigg] - \bigg(\mathbb{E}\bigg[W_{y}\bigg]\bigg)^{2} = v\bigg(p(y|\boldsymbol{\theta}_{n} + e_{d} \frac{\pi}{2})\bigg),
\end{align}
where $v(x) = x (1-x)$ for $x \in (0,1)$. The last yields from the definition of the quantity $v$. In a similar way, it can be shown that the following inequality holds
\begin{align}
    \mathbb{E}[X_{n,d,-}^{2}] \leq \frac{N_{z} \operatorname{Tr}(Z^{2})}{H} v. \label{eqn64}
\end{align}
From \eqref{eqn62} and \eqref{eqn64}, we can write while bringing the omitted indices back
\begin{align}
    \text{var}(\xi_{n,k}^{t}) \leq \frac{\nu N_{z} D Tr(Z^{2})}{2H}.
\end{align}
For $N$ number of QFL clients, we get
\begin{align}
    \text{var}(\xi_{k}^{t}) \leq \frac{1}{N} \sum_{n \in \mathcal{N}} \frac{\nu N_{z} D Tr(Z^{2})}{2H},
\end{align}
concluding the proof.

\subsection{Proof of Theorem 1} \label{prooftheorem1}
Using Lemma 1 and Lemma 2, we continue to further upper bound (34) of main paper as follows:
\begin{align}
    &\frac{1}{K} \frac{1}{\frac{1}{N}\sum_{n=1}^{N}T_{n}} \sum_{k=1}^{K} \sum_{t=1}^{T_{n}} \mathbb{E}[f(\bar{\boldsymbol{\theta}}_{k}^{t+1}) - f(\bar{\boldsymbol{\theta}}_{k}^{t})] \nonumber \\
    &\leq \frac{1}{K} \frac{1}{\frac{1}{N}\sum_{n=1}^{N}T_{n}} \sum_{k=1}^{K} \sum_{t=1}^{T_{n}} \Bigg(-\eta_{k} \mathbb{E}\bigg[\langle \nabla f(\bar{\boldsymbol{\theta}}_{k}^{t}), \Tilde{g}_{k}^{t} \rangle\bigg]\Bigg) \nonumber \\
    & \hspace{6em} + \frac{1}{K} \frac{1}{\frac{1}{N}\sum_{n=1}^{N}T_{n}} \sum_{k=1}^{K} \sum_{t=1}^{T_{n}} \frac{\eta_{k}^{2}  L}{2} \mathbb{E}\bigg[||\Tilde{g}_{k}^{t}||^{2}\bigg] \nonumber \\
    &\leq \frac{1}{K} \frac{1}{\frac{1}{N}\sum_{n=1}^{N}T_{n}} \sum_{k=1}^{K} \sum_{t=1}^{T_{n}} \bigg(-\frac{\eta_{k}}{2}||\nabla f(\bar{\boldsymbol{\theta}}_{k}^{t})||^2 \nonumber \\
    &- \frac{\eta_{k}}{2}||\sum_{n=1}^{N}\nabla f_{n}(\boldsymbol{\theta}_{n,k}^{t})||^{2}\bigg) + \frac{\lambda \eta_{k} L^{2}}{2K\frac{1}{N}\sum_{n=1}^{N}T_{n}} \frac{N+1}{N} \nonumber \\
    &\times \Bigg(\bigg[2C_{1}+T_{n}(T_{n}+1)\bigg] \eta_{k}^{2} \frac{1}{K} \frac{1}{\frac{1}{N}\sum_{n=1}^{N}T_{n}} \sum_{k=1}^{K} \sum_{t=1}^{T_{n}}||^2 \nonumber \\
    &- \frac{\eta_{k}}{2}||\sum_{n=1}^{N}\nabla f_{n}(\boldsymbol{\theta}_{n,k}^{t})||^{2}\Bigg) \nonumber \\
    &+ \frac{\eta_{k} L^{2}}{2K\frac{1}{N}\sum_{n=1}^{N}T_{n}} \bigg(\frac{N+1}{N}\bigg)\bigg(\frac{KT_{n}(T_{n}+1)\eta_{k}^{2}\sigma^{2}}{B}\bigg) \nonumber \\
    &+ \frac{1}{K} \frac{1}{\frac{1}{N}\sum_{n=1}^{N}T_{n}} \sum_{k=1}^{K} \sum_{t=1}^{T_{n}} \frac{L \eta_{k}^{2}}{2} \Bigg(\lambda \bigg(\frac{C_{1}}{N}+1\bigg) \nonumber \\
    &\times \bigg[||\sum_{n=1}^{N}\nabla f_{n}(\boldsymbol{\theta}_{n,k}^{t})||^{2}\bigg] + \frac{\sigma^{2}}{NB}\Bigg) \nonumber \\
    &= \frac{1}{K} \frac{1}{\frac{1}{N}\sum_{n=1}^{N}T_{n}} \sum_{k=1}^{K} \sum_{t=1}^{T_{n}} \bigg(-\frac{\eta_{k}}{2}||\nabla f(\bar{\boldsymbol{\theta}}_{k}^{t})||^2 \nonumber \\
    &- \frac{\eta_{k}}{2}||\sum_{n=1}^{N}\nabla f_{n}(\boldsymbol{\theta}_{n,k}^{t})||^{2}\bigg) + \frac{\lambda \eta_{k} L^{2}}{2K\frac{1}{N}\sum_{n=1}^{N}T_{n}} \frac{N+1}{N} \nonumber \\
    &\times \Bigg(\lambda \bigg[2C_{1}+T_{n}(T_{n}+1)\bigg] \eta_{k}^{2} \frac{1}{K} \frac{1}{\frac{1}{N}\sum_{n=1}^{N}T_{n}} \sum_{k=1}^{K} \sum_{t=1}^{T_{n}}||^2 \nonumber \\
    &- \frac{\eta_{k}}{2}||\sum_{n=1}^{N}\nabla f_{n}(\boldsymbol{\theta}_{n,k}^{t})||^{2} \Bigg) + \frac{KT_{n}(L+1)\eta_{k}^{2} \sigma^{2}}{B} \nonumber \\
    &+ \frac{1}{K} \frac{1}{\frac{1}{N}\sum_{n=1}^{N}T_{n}} \sum_{k=1}^{K} \sum_{t=1}^{T_{n}} \frac{\lambda L \eta_{k}^{2}}{2} \lambda \bigg(\frac{C_{1}}{N}+1\bigg) \nonumber \\
    &\bigg[||\sum_{n=1}^{N}\nabla f_{n}(\boldsymbol{\theta}_{n,k}^{t})||^{2}\bigg] + \frac{L \eta_{k}^{2}}{2} \frac{\sigma^{2}}{NB}. \label{39} 
\end{align}

From \eqref{39}, we have
\begin{align}
    \frac{1}{K} &\frac{1}{\frac{1}{N}\sum_{n=1}^{N}T_{n}} \sum_{k=1}^{K} \sum_{t=1}^{T_{n}} \mathbb{E}[f(\bar{\boldsymbol{\theta}}_{k}^{t+1}) - f(\bar{\boldsymbol{\theta}}_{k}^{t})] \nonumber \\
    &\leq - \frac{1}{K} \frac{1}{\frac{1}{N}\sum_{n=1}^{N}T_{n}} \sum_{k=1}^{K} \sum_{t=1}^{T_{n}} \frac{\eta_{k}}{2}||\nabla f(\bar{\boldsymbol{\theta}}_{k}^{t})||^2 \nonumber \\
    &+ \frac{1}{K} \frac{1}{\frac{1}{N}\sum_{n=1}^{N}T_{n}} \sum_{k=1}^{K} \sum_{t=1}^{T_{n}} \bigg[-\frac{\eta_{k}}{2} \nonumber \\
    &+ \frac{\lambda (N+1) L^{2} \eta_{k}^{3} [2C_{1}+T_{n}(T_{n}+1)]}{2N} \nonumber \\
    &+ \frac{\lambda L \eta_{k}^{2}}{2} \bigg(\frac{C_{1}}{N}+1\bigg)\bigg] \bigg[||\sum_{n=1}^{N}\nabla f_{n}(\boldsymbol{\theta}_{n,k}^{t})||^{2}\bigg] \nonumber \\
    &+ \frac{\eta_{k}^{3} L^{2} (T_{n}+1) \sigma^{2}}{B} \bigg(\frac{N+1}{N}\bigg) + \frac{L \eta_{k}^{2}}{2} \frac{\sigma^{2}}{NB} \nonumber \\
    &\stackrel{\textcircled{\footnotesize 1}}{\leq} - \frac{1}{K} \frac{1}{\frac{1}{N}\sum_{n=1}^{N}T_{n}} \sum_{k=1}^{K} \sum_{t=1}^{T_{n}} \frac{\eta_{k}}{2}||\nabla f(\bar{\boldsymbol{\theta}}_{k}^{t})||^2 \nonumber \\
    &+ \frac{\eta_{k}^{3} L^{2} (T_{n}+1) \sigma^{2}}{B} \bigg(\frac{N+1}{N}\bigg) + \frac{L \eta_{k}^{2}}{2} \frac{\sigma^{2}}{NB}, \label{eqn40}
\end{align}
where \textcircled{\footnotesize 1} follows if the following condition holds:
\begin{align}
    &-\frac{\eta_{k}}{2} + \frac{\lambda (N+1) L^{2} \eta_{k}^{3} [2C_{1}+T_{n}(T_{n}+1)]}{2N} \nonumber \\
    & \hspace{6em} + \frac{\lambda L \eta_{k}^{2}}{2} \bigg(\frac{C_{1}}{N}+1\bigg) \leq 0.
\end{align}
In any kind of FL framework, setting the coefficient of the local gradients' sum to zero helps control variance from diverse client updates, ensuring stable convergence. This condition limits the influence of individual clients on the global model, preventing oscillations or divergence. It keeps updates bounded, promoting reliable convergence toward an optimal solution.
By rearranging \eqref{eqn40}, we get
\begin{align}
    &\frac{1}{K} \frac{1}{\frac{1}{N}\sum_{n=1}^{N}T_{n}} \sum_{k=1}^{K} \sum_{t=1}^{T_{n}} \mathbb{E}||\nabla f(\bar{\boldsymbol{\theta}}_{k}^{t})||^{2} \leq \frac{2 [f(\bar{\boldsymbol{\theta}}_{1}^{0}) - f^{*}]}{\eta_{k} K\frac{1}{N}\sum_{n=1}^{N}T_{n}} \nonumber \\
    &+ \frac{L \eta \sigma^{2}}{NB} + \frac{2 \eta_{k}^{2} \sigma^{2} L^{2} (\frac{1}{N}\sum_{n=1}^{N}T_{n} + 1)}{B} \bigg(1+\frac{1}{N}\bigg). \label{eqn42new1}
\end{align}
Upto this point, we did not consider noise term in the local gradient of the quantum client. However, we have to consider that because the noise term in the local gradient of each quantum device will affect the convergence of the overall global gradient. Since the global gradient in QFL is an aggregation of the local gradients from all the devices, any noise or error in the local gradient estimates will also accumulate at the global level. Hence, we find the upper bound of the variance of the error introduced in the local gradient of each client due to quantum shot noise and add it to the upper bound of the global gradient in \eqref{eqn42new1}, resulting in (39) of main paper.

\noindent
In QFL, the gradient is estimated rather than explicitly computed. This approach leverages quantum computations to approximate the gradient, allowing for efficient optimization processes without relying on exact gradient calculations. Our assumption $\mathbb{E}[\Tilde{g_{n}}] = g_{n}$ means that the estimate is unbiased. Therefore, we can write
\begin{align}
    \Tilde{g}_{n,k}^{t} = g_{n,k}^{t} + \xi_{n,k}^{t}, \label{eqn81'}
\end{align}
where $\Tilde{g}_{n,k}^{t}$ is the stochastic estimate of the gradient, $g_{n,k}^{t}$ is the true gradient, and $\xi_{n,k}^{t}$ is the error or noise introduced in the estimation process, with the noise term satifying the conditions $\mathbb{E}[\xi_{n,k}^{t}] = 0$ and $\text{var}(\xi_{n,k}^{t}) = \mathbb{E}[||\Tilde{g}_{n,k}^{t} - g_{n,k}^{t}||]$ Taking average across all the devices, we get
\begin{align}
    \Tilde{g}_{k}^{t} = g_{k}^{t} + \xi_{k}^{t}, \label{eqn44}
\end{align}
where $\xi_{k}^{t} = \sum_{n=0}^{N} \xi_{n,k}^{t}$. Since the global gradient in QFL is an aggregation of the local gradients from all the devices, any noise or error in the local gradient estimates will also accumulate at the global level. If the errors are significant, they may cause the aggregated global gradient to deviate from the true direction of descent, slowing down convergence or leading to suboptimal solutions. Hence, we use Lemma 4 to find the upper bound of the variance of the gradient estimate. Therefore, The expected value of the squared norm of the global gradient in \eqref{eqn42new1} will be additionally bounded by the left hand side of Lemma 4 in the following way:
\begin{align}
        &\frac{1}{K} \sum_{k=1}^{K} \frac{1}{\frac{1}{N}\sum_{n=1}^{N}T_{n}} \sum_{t=1}^{T_{n}} \mathbb{E}||\nabla f(\bar{\boldsymbol{\theta}}_{k}^{t})||^{2} \leq \frac{2 [f(\bar{\boldsymbol{\theta}}_{1}^{0}) - f^{*}]}{\eta_{k} K\frac{1}{N}\sum_{n=1}^{N}T_{n}} \nonumber \\
        &+ \frac{L \eta \sigma^{2}}{NB} + \frac{2 \eta_{k}^{2} \sigma^{2} L^{2} (\frac{1}{N}\sum_{n=1}^{N}T_{n}+1)}{B} \bigg(\frac{N+1}{N}\bigg) \nonumber \\
        &+ \frac{1}{N} \sum_{n \in \mathcal{N}} \frac{\nu N_{z} D Tr(Z^{2})}{2H}.
\end{align}
In non-convex optimization, achieving a global minimum is often infeasible due to the landscape's complexity, filled with local minima and saddle points. Instead of focusing on bounding the distance between consecutive points, an alternative approach is to bound the squared norm of the gradient estimate. This approach helps gauge how close we are to a stationary point, where the gradient’s magnitude is minimal, indicating minimal change. By upper bounding the squared gradient, we can evaluate convergence towards a solution that may not be globally optimal, however is practically effective in reducing the loss.
\end{document}